\title{\LARGE \bf
  Mean Field Game and Decentralized Intelligent Adaptive Pursuit Evasion Strategy for Massive Multi-Agent System under Uncertain Environment
}
\author{Zejian Zhou and Hao Xu
\thanks{The authors are with the Department
of Electrical and Biomedical Engineering, University of Nevada, Reno,
NV, 89557 USA e-mail: zejianz@nevada.unr.edu; haoxu@unr.edu. The part of this work has been supported by NASA under Grant No. NNX15AIO2H.}}%
\begin{document}

\maketitle
\theoremstyle{definition}
\newtheorem{definition}{Definition}
\newtheorem{theorem}{Theorem}
\newtheorem{remark}{Remark}
\newtheorem{lemma}{Lemma}
\newtheorem{assumption}{Assumption}

\begin{abstract}
In this paper, a novel decentralized intelligent adaptive optimal strategy has been developed to solve the pursuit-evasion game for massive Multi-Agent Systems (MAS) under uncertain environment. Existing strategies for pursuit-evasion games are neither efficient nor practical for large population multi-agent system due to the notorious ``Curse of dimensionality" and communication limit while the agent population is large. To overcome these challenges, the emerging mean field game theory is adopted and further integrated with reinforcement learning to develop a novel decentralized intelligent adaptive strategy with a new type of adaptive dynamic programing architecture named the Actor-Critic-Mass (ACM). Through online approximating the solution of the coupled mean field equations, the developed strategy can obtain the optimal pursuit-evasion policy even for massive MAS under uncertain environment. In the proposed ACM learning based strategy, each agent maintains five neural networks, which are 1) the critic neural network to approximate the solution of the HJI equation for each individual agent; 2) the mass neural network to estimate the population density function (i.e., mass) of the group; 3) the actor neural network to approximate the decentralized optimal strategy, and 4) two more neural networks are designed to estimate the opponents' group mass as well as the optimal cost function. Eventually, a comprehensive numerical simulation has been provided to demonstrate the effectiveness of the designed strategy. 
\end{abstract}

\section{INTRODUCTION}
Pursuit-evasion games have received increasing attention in multi-agent decision-making and control studies (e.g.  \cite{vlahov2018developing}, \cite{ramana2017pursuit}. The problem can be widely found in numerous applications such as quadcopter flight control \cite{camci2016game}, ground vehicle tracking \cite{wilson2017pursuit}, missile guidance system \cite{turetsky2016target} etc. Recently, some of the researches explored a novel type of pursuit-evasion problem for multiple pursuers and evaders due to the enormous gain from the larger population of agents. For instance, \cite{makkapati2018pursuit} studied the pursue evasion problem with two pursuers and one evader; \cite{sun2017multiple} used multiple pursuers, i.e. unmanned aircraft systems (UAS), to capture the ground vehicle. The differential game formulation associated with the Hamilton-Jacobi-Isaacs (HJI) equation is used in those studies to obtain the optimal strategies. However, there are two common limitations in these studies, 1) the agent number cannot be large, 2) a high-quality and reliable communication system is needed for supporting information exchange among distributed agents. In large scale Multi-agent Systems (MAS), these limitations cannot be ignored due to the notorious ``curse of dimensionality", and unreliable communication network in practical (Fig. \ref{fig:challenge_diagram}).

\begin{figure}
\centerline{\includegraphics[width=1\linewidth]{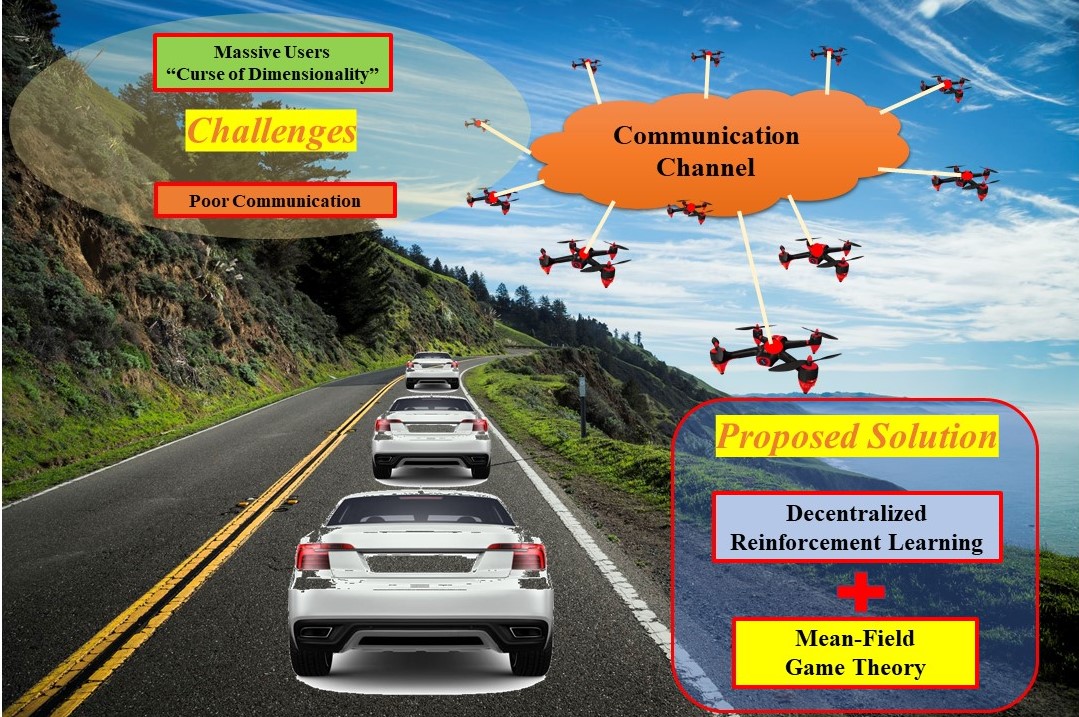}}
\caption{An illustration of challenges and the proposed solution in massive multi-agent pursuit-evasion game, i.e. Massive UASs are tracking the ground vehicles.}
\label{fig:challenge_diagram}
\end{figure}

To overcome these challenges, the emerging Mean Field Game (MFG) theory is adopted and engaged with pursuit-evasion game to develop a decentralized strategy for massive MAS. The key feature of MFG is that a new \textbf{mass} function has been constructed to approximate all the other agents' states through their probability distribution \cite{faguo1}. Different than other mean states based algorithm such as the ``average consensus" algorithm \cite{olfati2007consensus} where the deterministic average states are observed, the MFG estimates the stochastic distribution of all agents' states by solving a Partial Differential Equation (PDE), named Fokker-Planck-Kolmogorov (FPK) equation. The distribution (i.e. mass) is then used to represent the effect from all other agents in the agent's decision-making process. Lasry and Lions \cite{faguo1} first proved that by integrating the \textbf{mass} with the Hamilton-Jacobi-Bellman (HJB) equation from optimal control theory \cite{frankbook}, one can obtain the $\varepsilon-$ Nash equilibrium of the game and further converge to the Nash equilibrium as the agent number goes to infinity. Since the \textbf{mass} is approximated by a PDE which is independent on the agent number, the MFG can be used to tackle the communication limit and the ``curse of dimensionality". In this paper, the pursuers and evaders are using two mass functions to represent the pursuer group density and evader density during the game. Both mass function are integrated into the HJI equation to represent the influence from other agents in the same group.

However, solving Mean Field Game (MFG) is computationally expensive due to the coupled HJI and FPK equations especially with infinite-dimensional states. Meanwhile, the reinforcement learning and adaptive approximate dynamic programming (ADP) techniques \cite{frankbook} have been successfully utilized to solve general HJI equations and learn optimal nonlinear control. Therefore, we extend the ADP technique to a novel Actor-Critic-Mass (ACM) algorithm that can approximate the coupled HJI and FPK equations simultaneously and further obtain the optimal pursuit-evasion strategy. Specifically, five neural networks are designed to approximate the solutions of coupled two HJI equations, two FPK equations, and the optimal control. 

The main contributions of this paper can be summarized as follows:
1) The pursuit-evasion game with massive multi agents has been solved through integrating the Mean Field Game theory which tackles the ``curse of dimentionality" problem as well as requires no communication or observation.
2) A novel reinforcement learning structure named Actor-Critic-Mass (ACM) for differential games has been proposed to numerically solve the optimal strategy for pursuit-evasion game online. The solution of coupled HJI and FPK equations can thus be approximated by ACM.  

\section{Background and Problem Formulation}
Consider a group of pursuers $\mathcal{G}_{1}$ and a group of evaders $\mathcal{G}_{2}$ with identical $N$ agents in each group being travelling in an $l$ dimensional space. The states of individual agent in $\mathcal{G}_{1}$ and $\mathcal{G}_{2}$ are denoted by $x_{g 1,i}\in\mathbb{R}^{l}$ and $x_{g 2,j}\in\mathbb{R}^{l}$, respectively. The system dynamics for each agent are affected by other agents and can be described through a group of stochastic differential equations (SDEs), i.e.:
\begin{align}
&\label{eq:system_dynamics1_d}dx_{g 1, i}=\big[f_{g 1}\left(x_{g 1, i}\right)+g_{g 1}\left(x_{g 1, i}\right) u_{g 1, i}\\\nonumber
&+\boldsymbol{G}_{g 2}\left(\boldsymbol{x}_{g 2}\right)\big]dt+\sigma_{g 1,i} d w_{g 1, i}\\
&\label{eq:system_dynamics2_d}dx_{g 2, j}=\big[f_{g 2}\left(x_{g 2, j}\right)+g_{g 2}\left(x_{g 2, j}\right) u_{g 2, j}\\\nonumber
&+\boldsymbol{G}_{g 1}\left(\boldsymbol{x}_{g 1}\right)\big]dt+\sigma_{g 2,j} d w_{g 2, j}
\end{align}
where $u_{g1,i}\in\mathbb{R}^{l}$ is the control input of the $i$th agent, $w_{g1,i}$ denotes a set of independent Wiener processes representing environment noise for agents in the group $\mathcal{G}_{1}$, $\sigma_{g 1, i}$ is the coefficient matrix of the Wiener process, the functions $f_{g 1, i}\left(x_{g 1, i}\right)$ and $g_{g 1, i}\left(x_{g 1, i}\right)$ represent the intrinsic dynamics of the agents in the group $\mathcal{G}_{1}$, and the $\boldsymbol{G}_{g 2}\left(\boldsymbol{x}_{g 2}\right)$ denotes the influence from the group $\mathcal{G}_{2}$. The parameters in (\ref{eq:system_dynamics2_d}) is similar to those in (\ref{eq:system_dynamics1_d}) but for group $\mathcal{G}_{2}$.

The objective for agents in the pursuer group $\mathcal{G}_{1}$ are to intercept the evader at the fixed time $T$ while the agents in the 
evader group $\mathcal{G}_{2}$ attempts to do the opposite. 
\begin{remark}
Different than the conventional pursuit-evasion problem, which has very limited number of pursuers and evaders, the pursuers' and evaders' groups in this problem has countably infinite number of agents, i.e., $N\rightarrow\infty$. Moreover, the agents in each group can neither communicate nor observe the other agents' states, which indicates a decentralized control problem. 
\end{remark}

Next, two cost functions are constructed to evaluate the performance of agents in different groups. The cost function for agents in the group $\mathcal{G}_{1}$ is defined as:
\begin{align}\label{eq:bare_cost1}
&V_{g 1}\left(x_{g 1, i}(t),u_{g 1, i}(t),m_{g 1},m_{g 2}\right)\\\nonumber
&=\int_{0}^{T}\left[\begin{array}{l}{x_{g 1, i}^TQ_{g1}x_{g 1, i}+u_{g 1,i}^{T}(\tau) R_{g 1, i} u_{g 1, i}(\tau)} \\ {+\Phi_{g 1}\left(m_{g 1}(x_{g 1, i}(\tau),\tau), x_{g 1, i}(\tau)\right)}\\{
-\Phi_{g 2}\left(m_{g 2}(x_{g 1, i}(\tau),\tau), x_{g 1, i}(\tau)\right)}\end{array}\right] d \tau
\end{align}
where $m_{g 1}(\tau)$ and $m_{g 2}(\tau)$ are defined as \textbf{mass}, which are the probability density function of group $\mathcal{G}_{1}$'s and $\mathcal{G}_{2}$'s states, respectively. $\Phi_{g 1}\left(m_{g 1}(x_{g 1, i}(\tau),\tau), x_{g 1, i}(\tau)\right)$ and $\Phi_{g 2}\left(m_{g 2}(x_{g 1, i}(\tau),\tau),x_{g 1, i}(\tau)\right)$ are the Mean Field coupling functions that represent the influence on agent $i$ from group $\mathcal{G}_{1}$ and $\mathcal{G}_{2}$, respectively. $Q$ and $R$ are symmetric positive semi-definite and symmetric positive definite matrices, respectively, with compatible dimensions.

Similarly, the cost function for agents in group $\mathcal{G}_{2}$ is given as:
\begin{align}\label{eq:bare_cost2}
&V_{g 2}\left(x_{g 2, j}(t),u_{g 2, j}(t),m_{g 1},m_{g 2}\right)\\\nonumber
&=\int_{0}^{T}\left[\begin{array}{l}{x_{g 2, j}^TQ_{g2}x_{g 2, j}+u_{g 2, j}^{T}(\tau) R_{g 2} u_{g 2, j}(\tau)} \\ {+\Phi_{g 2}\left(m_{g 2}(x_{g 2, j}(\tau),\tau), x_{g 2, j}(\tau)\right)}\\{
-\Phi_{g 1}\left(m_{g 1}(x_{g 2, j}(\tau),\tau), x_{g 2, j}(\tau)\right)}\end{array}\right] d \tau
\end{align}

\begin{figure*}[!t]
\normalsize
\begin{align}
&\label{eq:hji1}\text{HJI-}\mathcal{G}_{1}:-\frac{\partial V\left(x_{g 1, i},u_{g1,i}\right)}{\partial t}-\frac{\sigma_{g1,i}^2}{2} \frac{\partial^{2} V\left(x_{g_{1}, i},u_{g1,i}\right)}{\partial x_{g 1, i}^{2}}+H_{g1}\left(x_{g 1, i}, \frac{\partial V\left(x_{g_{1}, i},u_{g1,i}\right)}{\partial x_{g_{1, i}}}\right)\nonumber\\
&=\Phi_{g 1, i}\left(m_{g 1}, x_{g 1, i}\right)-\Phi_{g 2, i}\left(m_{g 2},x_{g1,i}\right)\\
&\label{eq:fpk1}\text{FPK-}\mathcal{G}_{1}:\frac{\partial m_{g 1}\left(x_{g_{1}, i}, t\right)}{\partial t}-\frac{\sigma_{g1,i}^2}{2} \frac{\partial^{2} m_{g 1}\left(x_{g 1, i}, t\right)}{\partial x_{g 1, i}^{2}}-\operatorname{div}\left(m_{g 1} D_{p} H\left(x_{g 1, i}, \frac{\partial V\left(x_{g 1, i}, u_{g 1,i}\right)}{\partial x_{g 1, i}}\right)\right)=0\\
&\label{eq:hji2}\text{HJI-}\mathcal{G}_{2}:-\frac{\partial V\left(x_{g 2, j}, u_{g2,j}\right)}{\partial t}-\frac{\sigma_{g1,j}^2}{2} \frac{\partial^{2} V\left(x_{g 2, j}, u_{g2,j}\right)}{\partial x_{g 2, j}^{2}}+H_{g1}\left(x_{g 2, i}, \frac{\partial V\left(x_{g 2, j},u_{g2,j}\right)}{\partial x_{g 2, j}}\right)\nonumber\\
&=\Phi_{g 2, j}\left(m_{g 2}, x_{g 2, j}\right)-\Phi_{g 1, j}\left(m_{g 1},x_{g2,j}\right)\\
&\label{eq:fpk2}\text{FPK-}\mathcal{G}_{2}:\frac{\partial m_{g 2}\left(x_{g 2, j}, t\right)}{\partial t}-\frac{\sigma_{g1,j}^2}{2} \frac{\partial^{2} m_{g 2}\left(x_{g 2, j}, t\right)}{\partial x_{g 2, j}^{2}}-\operatorname{div}\left(m_{g 2} D_{p} H\left(x_{g 2, j}, \frac{\partial V\left(x_{g 2, j}, x_{g 2,-j}, x_{g 1}\right)}{\partial x_{g 2, j}}\right)\right)=0\\
&m_{g 1}\left(x_{g 1, i}, 0\right)=m_{g 1, 0}\left(x_{g 1,i}\right)\nonumber\\
&m_{g 2}\left(x_{g 2, j}, 0\right)=m_{g 2, 0}\left(x_{g 2, j}\right)\nonumber
\end{align}
\hrulefill
\vspace*{2pt}
\end{figure*}

Considering the two groups are competitive while the agents in the same group share the same goal (but non-cooperative), the optimal strategy for one agent must satisfy two conditions: 1) the agent's control input belongs to a joint action set which is the saddle point of the groups' cost function; 2) the agent's control input must reach the Nash equilibrium with other agents in the same group. The two conditions for the pursuers' group $\mathcal{G}_{1}$ are equivalent to the following equation:
\begin{align}\label{eq:optimal1}
    &V^*_{g 1}\left(x_{g 1, i}(t),u^*_{g 1, i}(t),m_{g 1},m_{g 2}\right)\nonumber\\
    &=\inf_{\boldsymbol{u}_{g1}} \sup _{\boldsymbol{u}_{g2}}V_{g 1}\left(x_{g 1, i}, u^*_{g 1, i},m_{g 1},m_{g 2}\right)\nonumber \\
    &\leq V_{g 1}\left(x_{g 1, i}, u_{g 1, i},m_{g 1},m_{g 2}\right)
\end{align}
with $u^*_{g1, i}\in\boldsymbol{u}_{g1}$. The optimal cost function and control input for agents in $\mathcal{G}_{2}$ can be similarly obtained as:

\begin{align}\label{eq:optimal2}
    &V^*_{g 2}\left(x_{g 2, j}(t),u^*_{g 2, j}(t),m_{g 1},m_{g 2}\right)\nonumber\\
    &=\inf_{\boldsymbol{u}_{g1}} \sup _{\boldsymbol{u}_{g2}}V_{g 2}\left(x_{g 2, j}, u^*_{g 2, j},m_{g 1},m_{g 2}\right)\nonumber\\
    &\leq V_{g 2}\left(x_{g 2, j}, u_{g 2, j},m_{g 1},m_{g 2}\right)
\end{align}
where $u^*_{g2, j}\in\boldsymbol{u}_{g2}$.

\begin{figure}
    \centering
    \includegraphics[width=1\linewidth]{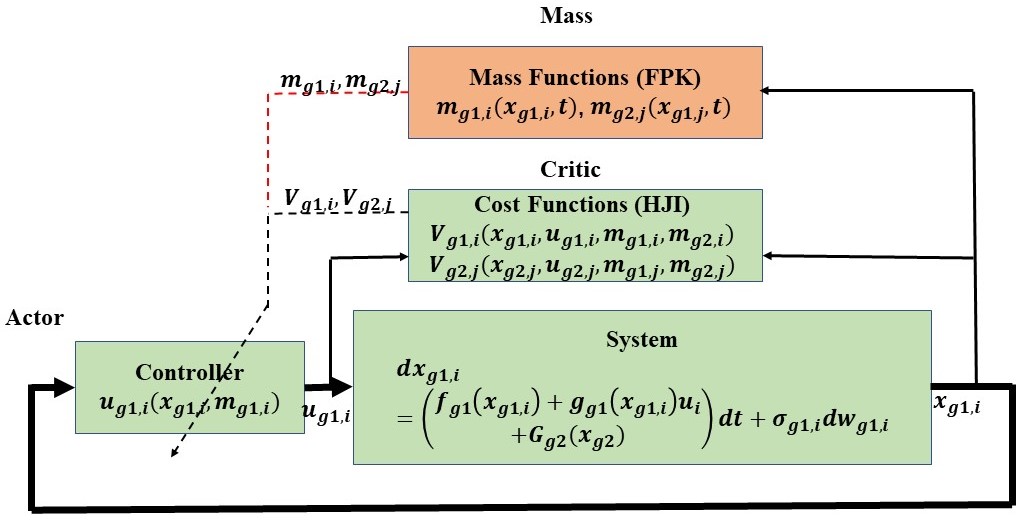}
    \caption{The structure of the ACM algorithm for the pursuers}
    \label{fig:diagramm_acm}
\end{figure}

\section{Mean Field Optimal Strategy for Massive MAS Pursuit-evasion Game}
In this section, the ACM algorithm is introduced in detail. The structure of the proposed algorithm for pursuers is illustrated in \ref{fig:diagramm_acm}.  To obtain the optimal strategies for the agents in two groups (i.e., (\ref{eq:optimal1}) and (\ref{eq:optimal2})), the Mean Field Game theory and Adaptive Dynamic Programming (ADP) has been adopted. The Mean Field Game theory can estimate the mass in (\ref{eq:bare_cost1}) and (\ref{eq:bare_cost2}) by the Fokker-Planck-Kolmogorov (FPK) equation \cite{faguo1}. Inspired by the most recent mean-field game approach such as \cite{zhou2019intelligent} and \cite{zhou2019decentralized} , a coupled HJI-mutli-FPKs equations has been constructed in (\ref{eq:hji1})-(\ref{eq:fpk2}) for obtaining the optimal strategy with large population of multi-agent system. The $H(\cdot)$ functions in (\ref{eq:hji1}) and (\ref{eq:hji2}) are the Hamiltonian which can be represented as:
\begin{align}
    &H_{g1}\left(x_{g 1, i}, \partial V_x\left(x_{g_{1}, i},u_{g1,i}\right)\right)\nonumber\\
    &=\Phi_{g 1}\left(m_{g 1}, x_{g 1, i}\right)-\Phi_{g 2}\left(m_{g 2}, x_{g 1, i}\right)\nonumber\\
    &+x_{g 1, i}^TQ_{g1}x_{g 1, i}+u_{g 1}^{T} R_{g 1, i} u_{g 1, i}+\partial V_x\left(x_{g_{1}, i},u_{g1,i}\right)\dot{x}_{g1,i}\\
    &H_{g2}\left(x_{g 2,j}, \partial V_x\left(x_{g_{2},j},u_{g2,j}\right)\right)\nonumber\\
    &=\Phi_{g 2}\left(m_{g 2}, x_{g 2,j}\right)-\Phi_{g 1}\left(m_{g 1}, x_{g 2, j}\right)\nonumber\\
    &+x_{g 2,j}^TQ_{g2}x_{g 2,j}+u_{g 2}^{T} R_{g 2,j} u_{g 2,j}+\partial V_x\left(x_{g_{2},j},u_{g2,j}\right)\dot{x}_{g2,j}
\end{align}
where $\Phi_{g 1}\left(m_{g 1}, x_{g 1, i}\right)$ and $\Phi_{g 2}\left(m_{g 2}, x_{g 2,j}\right)$ are the mean field function which calculates the affect from all other agents in the same group.

It has been shown by numerous studies (e.g. \cite{nourian2012mean}) that the solution of the coupled HJI-multi-FPKs equations yields the $\varepsilon-$Nash equilibrium, i.e.:
\begin{align}
    &\nonumber V_{g 1}\left(x_{g 1, i},u^{*}_{g1, i},u_{g1,-i}\right)<V_{g 1}\left(x_{g 1, i},u_{g1, i},u_{g1,-i}\right)+\varepsilon_N\\
    &\nonumber V_{g 2}\left(x_{g 2, j},u^{*}_{g2, j},u_{g2,-j}\right)<V_{g 2}\left(x_{g 2, j},u_{g2, j},u_{g2,-j}\right)+\varepsilon_N
\end{align}
where $\varepsilon_N$ is the error that goes to zero as $N$ goes to infinity \cite{nourian2012mean} thus yield (\ref{eq:optimal1}) and (\ref{eq:optimal2}).

Similar to \cite{vamvoudakis2012online}, the optimal control for agents in two groups can be solved separately as:
\begin{align}
    &\label{eq:optimal_control1}u^*_{g1,i}(x_{g1,i})\nonumber\\
    &=-\frac{1}{2} R^{-1}_{g1} g_{g 1}^{T}\left(x_{g 1, i}\right) \frac{\partial V_{g 1, i}\left(x_{g 1, i}, u_{g 1,i},m_{g1},m_{g2}\right)}{\partial x_{g 1, i}}\\
    &\label{eq:optimal_control2}u^*_{g2,j}(x_{g2,j})\nonumber\\
    &=-\frac{1}{2} R^{-1}_{g1} g_{g 2}^{T}\left(x_{g 2, j}\right) \frac{\partial V_{g 2, j}\left(x_{g 2, j,} u_{g 2,j},m_{g1},m_{g2}\right)}{\partial x_{g 2, j}}
\end{align}

\begin{remark}
To obtain the optimal control, the coupled HJI-multi-FPKs equations need to be solved simultaneously. However, the HJI equations ((\ref{eq:hji1}) and (\ref{eq:hji2})) as well as the FPK equation ((\ref{eq:fpk1}) and (\ref{eq:fpk2})) are two complicate infinite-dimensional Partial Differential Equations (PDEs) whose solutions are difficult to solve analytically. Therefore, inspired by adaptive dynamic programming (ADP) and reinforcement learning techniques, a novel neural network based Actor-Critic-Mass algorithm has been developed to learn the coupled HJI-multi-FPKs equations' solution online in this paper.
\end{remark}

\section{Actor-Critic-Mass Based Optimal Pursuit-evasion Strategy Design}
\subsection{Optimal ACM estimator design}
The proposed reinforcement learning ADP algorithm can be implemented into an Actor-Critic-Mass structure which consists of five neural networks for individual agent. For the pursuer agents in the group $\mathcal{G}_{1}$, the actor neural network is utilized to approximate the solution of optimal control (i.e. (\ref{eq:optimal_control1})); the critic is designed to approximate the solution of the HJI equation (i.e. (\ref{eq:hji1})), and the mass neural network is employed to approximate the solution of the FPK equation (i.e. (\ref{eq:fpk1})). Except for the three neural networks for the group $\mathcal{G}_{1}$, the pursuers also needs to estimate the optimal value function, mass, and optimal strategy for evaders since the estimated states and optimal strategy of evaders are also considered in the cost function symmetrically. Similarly, the evader agents in the group $\mathcal{G}_{2}$ admits the same neural network structure and update laws so we will use the agents in the group $\mathcal{G}_{1}$ only to illustrate the controller design.

According to the universal approximation theory of neural network (NN) \cite{b9}, the optimal cost function, decentralized strategy and mass distribution function for pursuers can be approximated as:
\begin{equation}\label{eq:NN_estimate_formula1}
\left\{\begin{matrix}
\begin{aligned}
&\hat{V}_{g1,i}\left(x_{g 1, i}, \hat{u}_{g 1,i},\hat{m}_{g 1,i},\hat{m}_{g 2,i}\right)\\
&=\hat{W}_{V, g 1, i}^{T} \phi_{V, g 1, i}\left(x_{g 1, i},\hat{m}_{g 1,i},\hat{m}_{g 2,i}\right)\\ 
&\hat{u}_{g 1, i}(x_{g 1, i}(t))=\hat{W}_{u 1, i}^{T}(t) \phi_{u, g 1, i}\left(x_{g 1, i}, \hat{m}_{g 1, i}, t\right)\\ 
&\hat{m}_{g 1, i}(x_{g 1, i},t)=\hat{W}_{m, g 1, i}^{T}(t) \phi_{m, g 1, i}\left(x_{g 1, i}, t\right)
\end{aligned}
\end{matrix}\right.
\end{equation}

Besides estimating the evaders' mass distribution required in (\ref{eq:NN_estimate_formula1}), the pursuers also need to maintain two neural networks for the evaders' optimal cost function and mass distribution, i.e.,
\begin{equation}\label{eq:NN_estimate_formula2}
\left\{\begin{matrix}
\begin{aligned}
&\hat{V}_{g2,i}\left(x_{g 2, i}, \hat{u}_{g 2,i},\hat{m}_{g 1,i},\hat{m}_{g 2,i}\right)\\
&=\hat{W}_{V, g 2, i}^{T} \phi_{V, g 2, i}\left(x_{g 2, i},\hat{m}_{g 1,i},\hat{m}_{g 2,i}\right)\\ 
&\hat{u}_{g 2, i}(x_{g 2, i}(t))=\hat{W}_{u 1, i}^{T}(t) \phi_{u, g 2, i}\left(x_{g 2, i}, \hat{m}_{g 2, i}, t\right)\\ 
&\hat{m}_{g 2, i}(x_{g 2, i},t)=\hat{W}_{m, g 2, i}^{T}(t) \phi_{m, g 2, i}\left(x_{g 2, i},  t\right)
\end{aligned}
\end{matrix}\right.
\end{equation}

Substituting (\ref{eq:NN_estimate_formula1}) into (\ref{eq:hji1}), (\ref{eq:optimal_control1}), and (\ref{eq:fpk1}), equations will not hold. The residual errors will be computed and used to tune the actor, critic, and mass NNs along with time, i.e.
\begin{align}
  &e_{HJI1,i}\label{eq:critic_error1_d}=\Phi_{g 1, i}\left(m_{g 1}, x_{g 1, i}\right)-\Phi_{g 2, i}\left(m_{g 2}, x_{g 1, i}\right)\nonumber\\
  &+\hat{W}^T_{V,g1,i}(t)\hat{\Psi}_{V,g1,i}\\
  &e_{FPK1,i}= \begin{aligned}[t] \label{eq:mass_error1_d}
      &\hat{W}^T_{m,g1,i}(t)\hat{\Psi}_{m,g1,i}
       \end{aligned}\\
       &e_{u1,i}= \begin{aligned}[t] \label{eq:control_error1}
      &\hat{W}_{m,g1,i}^T(t)\phi_{u,g1,i}(x_i,\hat{m}_i,t)+\frac{1}{2}R^{-1}_{g1}(x_i)\partial_x\hat{\phi}_{V,g1,i}
       \end{aligned}
\end{align}
where 
\begin{align*}
  &\hat{\Psi}_{V,g1,i} = \begin{aligned}[t]\partial_t \hat{\phi}_{V,g1,i}+\frac{\sigma_{g1,i}^2}{2}\partial_{xx}\hat{\phi}_{V,g1,i}-\hat{H}_{WV}
  \end{aligned}\\
  &\hat{\Psi}_{m,g1,i}=\partial_t \phi_{m,g1,i}-\frac{\sigma_{g1,i}^2}{2}\partial_{xx} \phi_{m,g1,i}-div(\phi_{m,g1,i}D_p\hat{H})  
\end{align*}
with $\hat{\phi}_{V,g1,i}=\phi_{V, g 1, i}\left(x_{g 1, i}, \hat{m}_{g 1, i}, t\right)$,
  $\hat{H}=H\left(x_{g 1, i}, \partial_x (\hat{W}^T_{V,g1,i}\hat{\phi}_{V,g1,i})\right)$ and $\hat{H}_{WV}$ being the left term such that  $\hat{H}=\hat{W}_{V,g1,i}^T\hat{H}_{WV}$.

Next, submitting (\ref{eq:NN_estimate_formula2}) into (\ref{eq:hji2}) and (\ref{eq:optimal_control2}), one obtains:
\begin{align}
  &e_{HJI2,i}\label{eq:critic_error1_3}=\Phi_{g 2, i}\left(m_{g 2}, x_{g 2, i}\right)-\Phi_{g 1, i}\left(m_{g 1}, x_{g 2, i}\right)\nonumber\\
  &+\hat{W}^T_{V,g2,i}(t)\hat{\Psi}_{V,g2,i}\\
  &e_{FPK2,i}= \begin{aligned}[t] \label{eq:mass_error2_d}
      &\hat{W}^T_{m,g2,i}(t)\hat{\Psi}_{m,g2,i}
       \end{aligned}
\end{align}
where $\hat{\Psi}_{V,g2,i}$ and $\hat{\Psi}_{m,g2,i}$ is similarly defined as in (\ref{eq:critic_error1_d}) and (\ref{eq:mass_error1_d}).

By applying the the gradient descent algorithm, the ACM NNs' update laws can be derived as
\begin{align}
  &\text{C}\text{ritic NN-1: }\hat{\dot{W}}_{V g 1, i}=-\alpha_{h,g1,i}\frac{\hat{\Psi}_{V,g1,i}e^T_{HJI1,i}}{1+\|\hat{\Psi}_{V,g1,i}\|^2} \label{eq:critic1_d}\\
   &\text{M}\text{ass NN-1: }\hat{\dot{W}}_{m, g 1, i}=-\alpha_{m,g1,i}\frac{\hat{\Psi}_{m,g1,i}e^T_{FPK1,i}}{1+\|\hat{\Psi}_{m,g1,i}\|^2}\label{eq:mass1_d}\\
  &\text{A}\text{ctor NN-1: }\hat{\dot{W}}_{u, g 1, i}=-\alpha_{u,g1,i}\frac{\phi_{u,g1,i}(x_{g1,i},\hat{m}_{g1,i},t)e^T_{u1,i}}{1+\|\phi_{u,g1,i}(x_{g1,i},\hat{m}_{g1,i},t)\|^2}\label{eq:actor1_d}\\
  &\text{C}\text{ritic NN-2: }\hat{\dot{W}}_{V g2, i}=-\alpha_{h,g2,i}\frac{\hat{\Psi}_{V,g2,i}e^T_{HJI1,i}}{1+\|\hat{\Psi}_{V,g2,i}\|^2} \label{eq:critic2_d}\\
   &\text{M}\text{ass NN-2: }\hat{\dot{W}}_{m, g2, i}=-\alpha_{m,g2,i}\frac{\hat{\Psi}_{m,g2,i}e^T_{FPK1,i}}{1+\|\hat{\Psi}_{m,g2,i}\|^2}\label{eq:mass2_d}
\end{align}
where $\alpha_{h,g1,i}$, $\alpha_{m,g1,i}$, $\alpha_{u,g1,i}$, $\alpha_{h,g2,i}$, $\alpha_{m,g2,i}$, $\alpha_{u,g2,i}$ are the learning rates.
\begin{theorem}  \label{theorem5}
\emph{(Closed-loop Stability}) 
Given an admissible initial control input and let the actor, critic, and mass NNs weights be selected within a compact set. Moreover, the critic, actor, and mass NNs' weight tuning laws for pursuers in $\mathcal{G}_{1}$ are given as (\ref{eq:critic1_d}), (\ref{eq:critic2_d}), (\ref{eq:actor1_d}), (\ref{eq:mass1_d}), and (\ref{eq:mass2_d}), respectively. Then, there exists constants $\alpha_{h,g1,i}$, $\alpha_{m,g1,i}$, $\alpha_{u,g1,i}$, $\alpha_{h,g2,i}$, $\alpha_{m,g2,i}$, $\alpha_{u,g2,i}$, such that the system states $x_{g1,i}$, actor, critic, and  mass NNs weights estimation errors, $\Tilde{W}_{V,g1,i}$, $\Tilde{W}_{m,g1,i}$, $\Tilde{W}_{u,g1,i}$, $\Tilde{W}_{V,g2,i}$, $\Tilde{W}_{m,g2,i}$, and $\Tilde{W}_{u,g2,i}$ are all uniformly ultimately bounded (UUB). In addition, the estimated cost function, mass function and control inputs are all UUB. If the number of neurons and NN architecture has been designed effectively, those NN reconstruction error can be as small as possible and trivial. Furthermore, the system states $x_{g1,i}$, actor, critic, and  mass NNs weights estimation errors, $\Tilde{W}_{V,g1,i}$, $\Tilde{W}_{m,g1,i}$, $\Tilde{W}_{u,g1,i}$, $\Tilde{W}_{V,g2,i}$, $\Tilde{W}_{m,g2,i}$, and $\Tilde{W}_{u,g2,i}$ will still be asymptotically stable. 
\end{theorem}
\begin{proof}
Omitted due to page limitation.
\end{proof}

\section{Simulation Results}
\begin{figure} \centering 
\subfigure[$t=0s$] {
\includegraphics[width=0.46\linewidth]{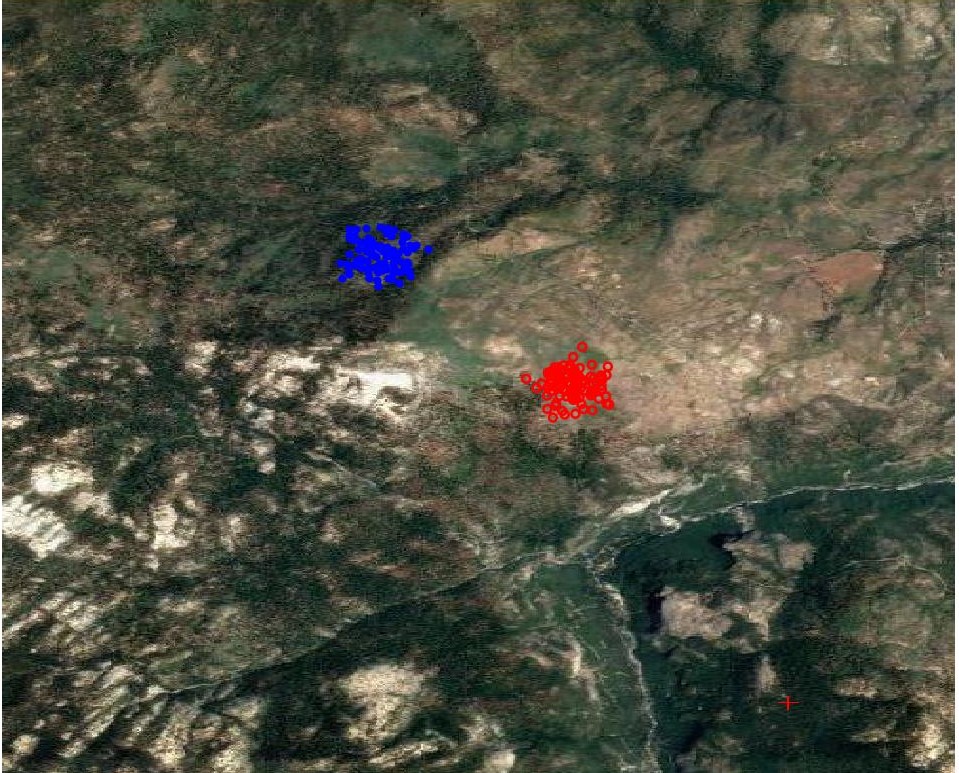}
}
\subfigure[$t=5s$] { 
\includegraphics[width=0.46\linewidth]{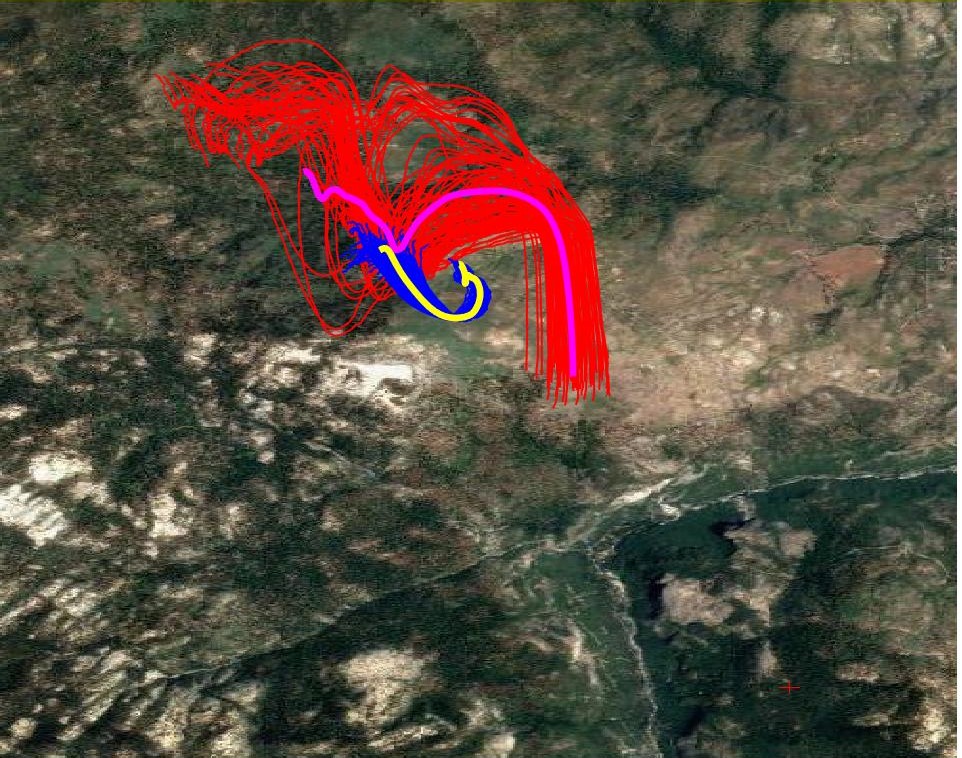} 
}
\subfigure[$t=70s$] {
\includegraphics[width=0.46\linewidth]{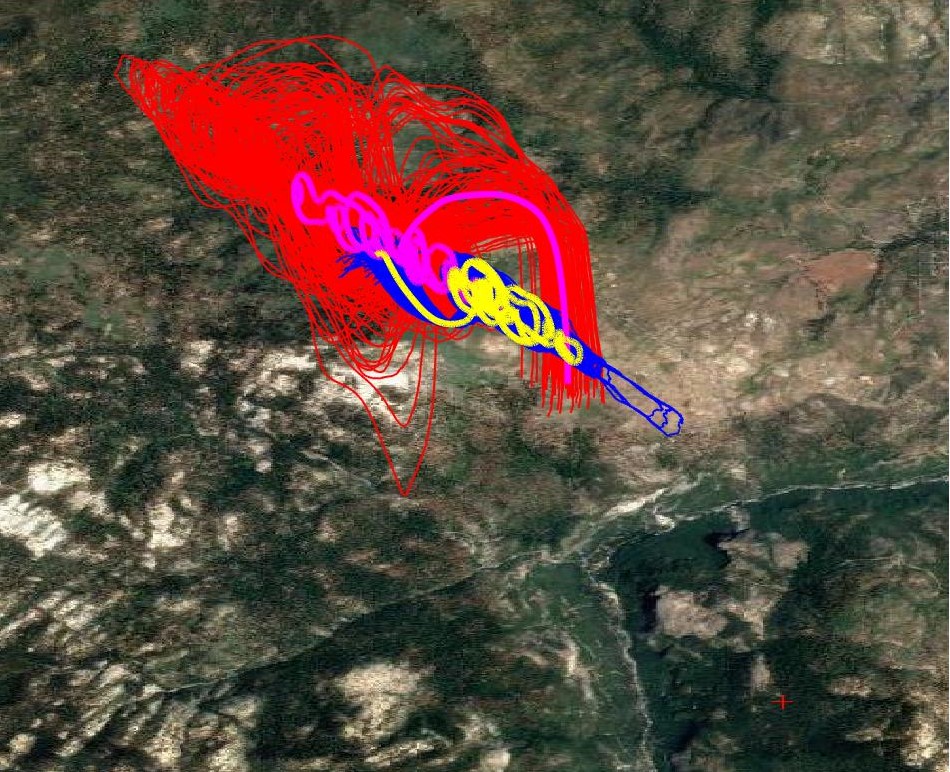}
} 
\subfigure[$t=100s$] { 
\includegraphics[width=0.46\linewidth]{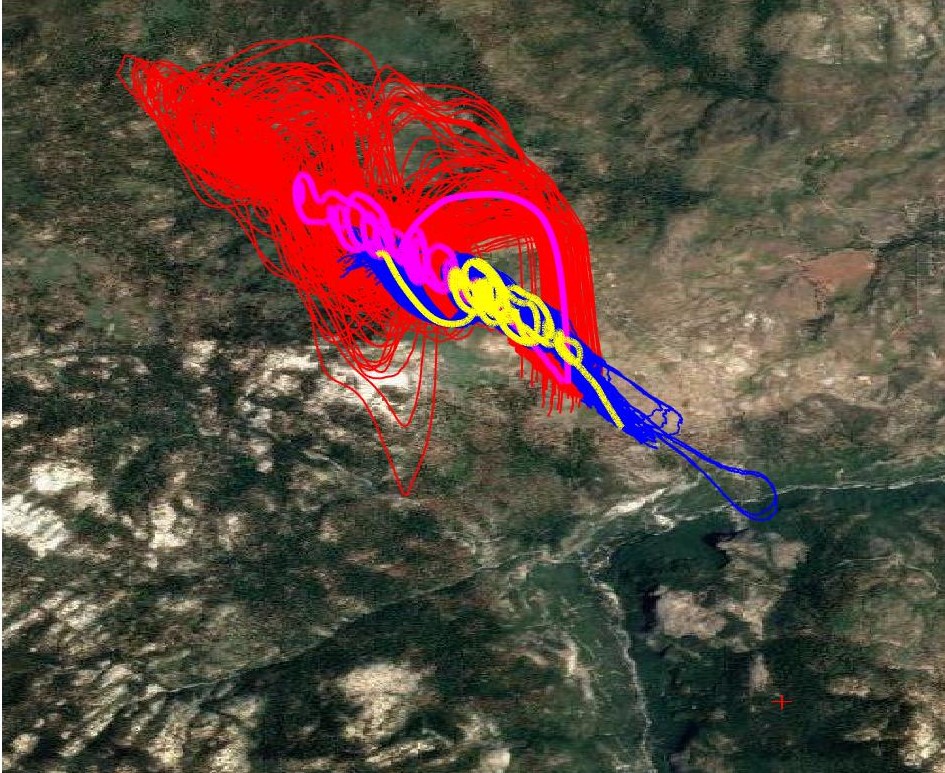} 
}
\caption{Evolution of the overall trajectory at different times. The blue and red curves represent the trajectory of all pursuers and evaders respectively. The magenta and yellow curve represent the average trajectory of pursuers and evaders respectively.}
\label{fig:overal_traj}
\end{figure}

In this section, the proposed decentralized adaptive pursuit evasion strategy has been evaluated under the noised environment. The map we use is the 2-D map of the Yosemite valley in California. A total of 100 pursuer UAVs and 100 evader ground vehicles were employed, with initial velocities set to zero, and positions randomly distributed on the map. The pursuer UAVs intended to intercept the ground vehicles while the evader ground vehicles do the opposite In this paper, we defined a successful interception as the overlap of the centers of  the two groups (i.e. $ \exists t\in[0,T],\quad s.t. \frac{1}{N}\sum_{j=1}^Nx_{g2,j}(t)=\frac{1}{N}\sum_{i=1}^Nx_{g1,i}(t)$).

To demonstrate the effectiveness of the proposed algorithm, we limit each agent's observation ability so that only his own states can be observed. Moreover, all agents are not allowed to communicate in this experiment set.

The nonlinear stochastic system dynamics functions for pursuers are selected as 
\begin{align}
    f_{g1}(x)=\begin{bmatrix}
-x_1+x_2\\ 
-\frac{1}{2}x_1^2-\frac{1}{2}x_2
\end{bmatrix},\quad\quad g_{g1}(x)=\begin{bmatrix}
0\\ 
1
\end{bmatrix}
\end{align}
where $x=[x_1\quad x_2]^T\in\mathbb{R}^2$ represents the agent's position.

The evaders' affect function $\boldsymbol{G}_{g 2}\left(\boldsymbol{x}_{g 2}\right)$ is defined as the average position, i.e., 
\begin{align}
    \boldsymbol{G}_{g 2}\left(\boldsymbol{x}_{g 2}(t)\right)=\frac{1}{N}\sum_{j=1}^Nx_{g2,j}(t)\approx  \mathbb{E}[m_{g2}]
\end{align}
where $m_{g2}$ is the mass function (i.e. probability distribution function of states) for evaders. When $N\rightarrow\infty$, the approximately equal sign can be replaced by equal sign.

Next, the system dynamics functions for evaders are selected as
\begin{align}
       f_{g2}(x)=\begin{bmatrix}
x_1+2x_2\\ 
2x_1+x_2
\end{bmatrix},\quad\quad g_{g2}(x)=\begin{bmatrix}
1\\ 
2
\end{bmatrix} 
\end{align}
Similarly, the pursuers' affect function is defined as
\begin{align}
    \boldsymbol{G}_{g 1}\left(\boldsymbol{x}_{g 1}(t)\right)=\frac{1}{N}\sum_{j=1}^Nx_{g1,j}(t)\approx\int_{\Theta1} x_{g1}m_{g1}dx_{g1}
\end{align}

The diffusion rate in (\ref{eq:system_dynamics1_d}) and (\ref{eq:system_dynamics2_d}) are set to $0.02$ for all agents in both groups. The Mean Field coupling functions in (\ref{eq:bare_cost1}) and  (\ref{eq:bare_cost2}) are defined as
\begin{align*}
    &\Phi_{g1}(m_{g1},x_{g1,i})=\left\|x_{g1,i}-\mathbb{E}[m_{g1}(x_{g1,i},t)]\right\|^2\\
    &\Phi_{g2}(m_{g2},x_{g2,i})=\left\|x_{g2,i}-\mathbb{E}[m_{g2}(x_{g2,i},t)]\right\|^2
\end{align*}
where functions $\Phi_{g1}$ and $\Phi_{g2}$ drive each individual agent to keep cohesion with their population center.  The parameters in the cost functions are selected as $Q_{g1}=Q_{g2}=2I_2$, and $R_{g1}=R_{g2}=2I_2$.

The agents' initial positions were randomly generated by a 2-variant normal distribution. Furthermore, to estimate the solution of HJI equations (i.e., (\ref{eq:hji1}) and (\ref{eq:hji2})), FPK equations (i.e., (\ref{eq:fpk1}) and (\ref{eq:fpk2})), and optimal control input (i.e., (\ref{eq:optimal_control1})), 2 critic NNs, 2 mass NNs, and an actor NN are constructed. Additionally, a random noise is injected to the control input from $0s$ to $50s$ to increase the NN approximators' exploration.

\begin{figure} \centering 
\includegraphics[width=1\linewidth]{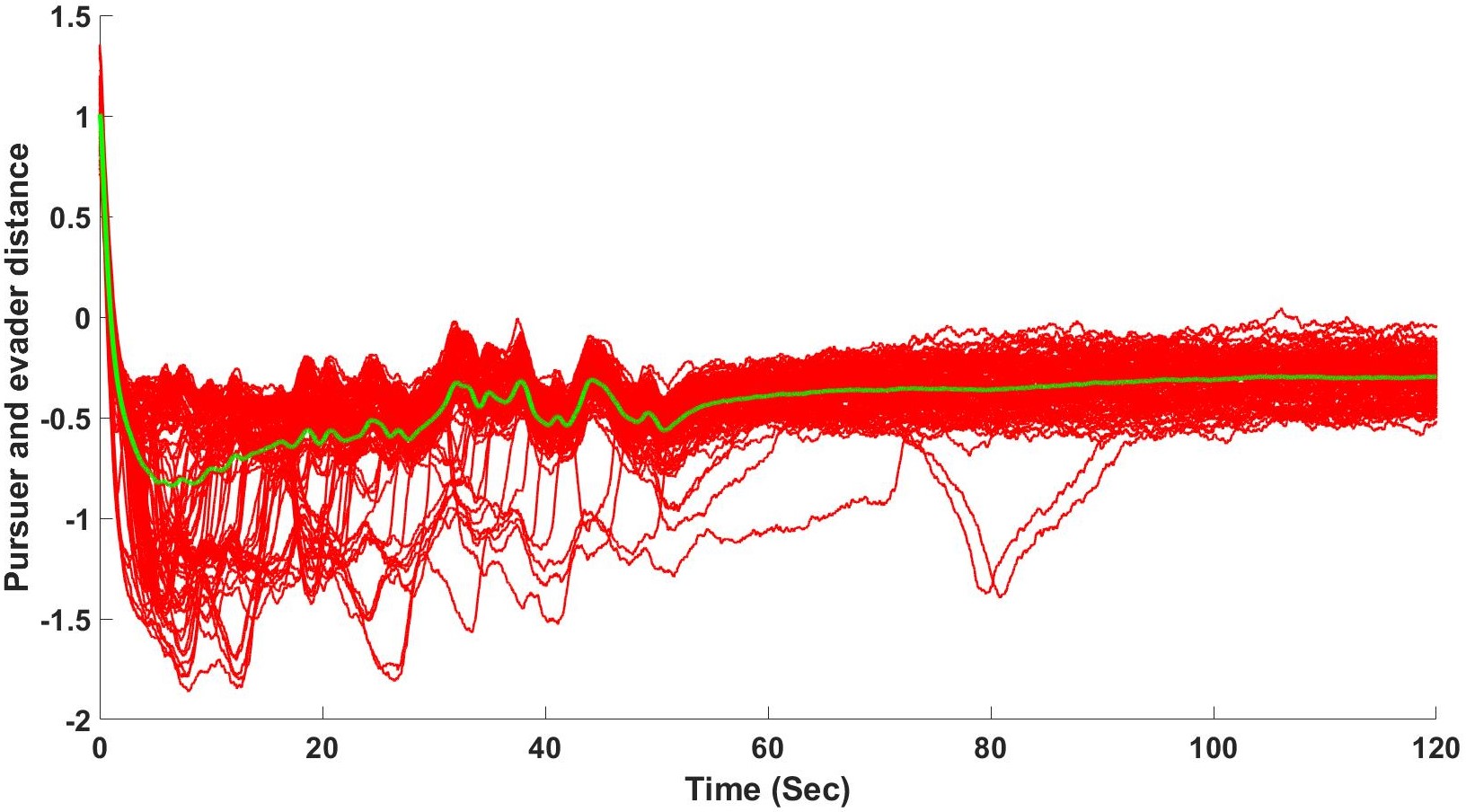}
\caption{States difference of pursuers and evaders. The red curve represent each agent's distance and the green curve is the average distance }
\label{fig:state_error}
\end{figure}

\begin{figure} \centering  
\subfigure[Pursuer 1's FPK equation error] {
\includegraphics[width=1\linewidth]{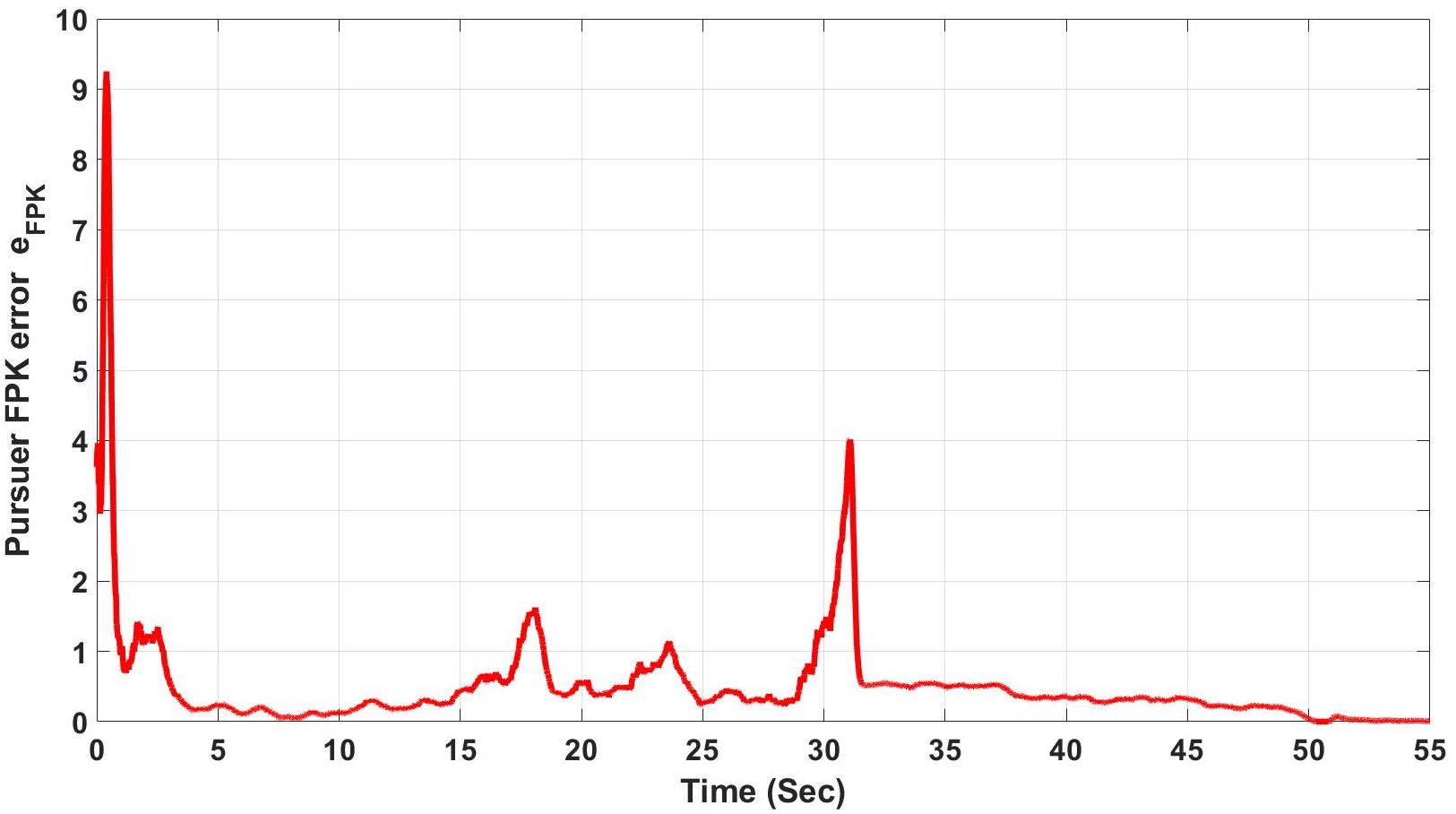}}
\subfigure[Evader 1's FPK equation error] { 
\includegraphics[width=1\linewidth]{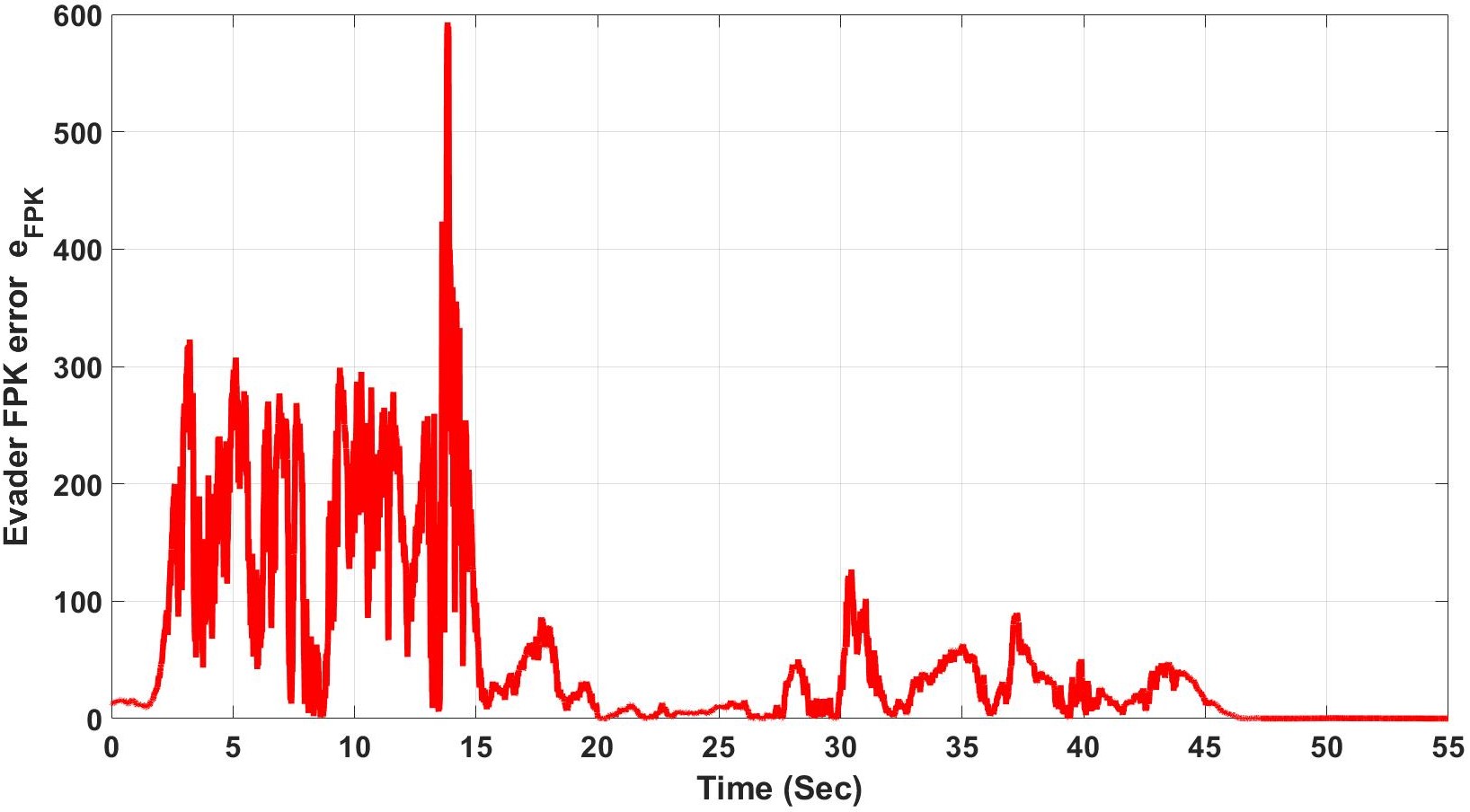} 
}
\caption{FPK equation errors of pursuer 1 and evader 1}
\label{fig:FPK_error}
\end{figure}

The overall trajectory of the pursuers and evaders at different time instants are shown in Fig. \ref{fig:overal_traj}. The initial positions are first shown in Fig. \ref{fig:overal_traj}(a). Then the agents' positions at $5s$, $70s$, and $100s$ are plotted in  Fig. \ref{fig:overal_traj}(b)-(d). From  Fig. \ref{fig:overal_traj}, it's not difficult to observe that the pursuers are able to track the evader and the evaders can escape successfully. However, after $70s$, the distances between pursuers' and evaders' remain similar until the game ends. The reason is that the equilibrium point between two groups (i.e. saddle point of cost function) is reached. We will further analysis the equilibrium point from two aspects: 1) the distance between en two groups, 2) the coupled HJI-multi-FPKs equation error.

Firstly, the distance in x axis between pursuers and evaders are plotted in Fig. \ref{fig:state_error}. The distance in this figure is defined as:
\begin{align*}
    &\text{Individual distance: }\xi_i(t)=x_{g1,i}(t)-x_{g2,i}(t)\\
    &\text{Average distance: }\bar{\xi}(t)=\frac{1}{N}\sum^N_{i=1}x_{g1,i}(t)-\frac{1}{N}\sum^N_{j=1}x_{g2,j}(t)
\end{align*}
The green curve (i.e. average difference) in Fig. \ref{fig:state_error} demonstrates that after $80s$, neither the pursuers nor the evaders can benefit their groups by changing the strategies. This stable point proves that the saddle point (i.e. Nash equilibrium) of the cost function is achieved.

Secondly, the Nash equilibrium point is further examined by the error of the HJI equations (\ref{eq:critic_error1_d}) (\ref{eq:critic_error1_3}). Due to the limit of this paper's size, we only plot pursuer 1 's HJI equation errors in Fig. \ref{fig:HJI_error}. From Fig. \ref{fig:HJI_error} we can clearly observe that the HJI equation errors are bounded near zero after about 53 seconds. The convergence of HJI equation error indicates that the optimal cost function (i.e. Nash equilibrium) is approximated by the critic NN successfully.  
\begin{figure} \centering 
\includegraphics[width=1\linewidth]{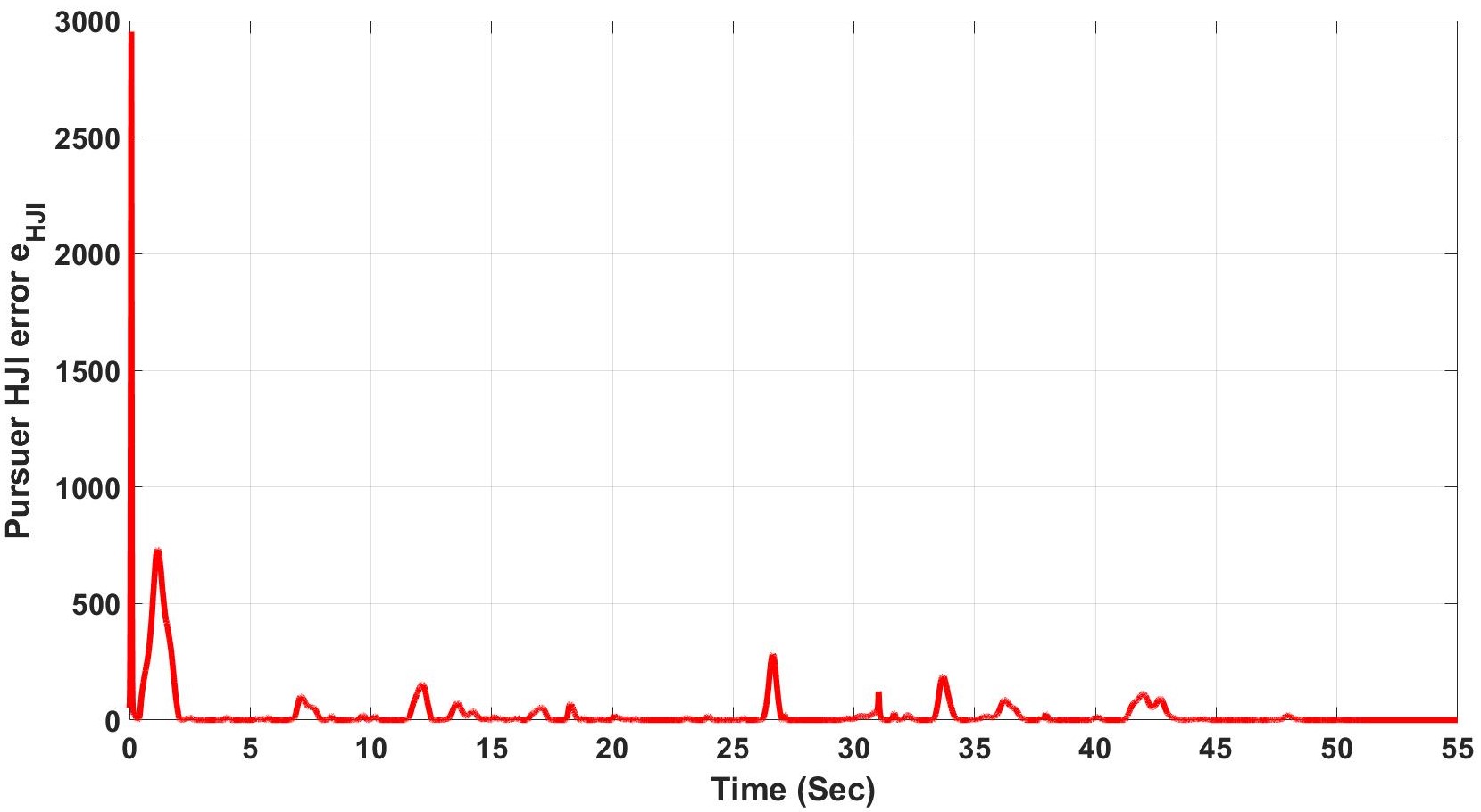}
\caption{HJI equation errors of pursuer 1}
\label{fig:HJI_error}
\end{figure}

Finally, the mass NN's performance is shown by the FPK errors (i.e. equation (\ref{eq:mass_error1_d}) and \ref{eq:mass_error2_d})) plot in Fig. \ref{fig:FPK_error}. Similarly to the HJI equation errors, we only plot pursuer 1's and evader 1's HJI error for convenience. Figure \ref{fig:FPK_error} shows that the FPK equation error converges near zero after $55s$ for both agents. The convergence of both FPK equations and HJI equations proves that a good approximation of the optimal cost function, group population distribution (i.e. mass) has been successfully obtained by the proposed ACM algorithm. Moreover, both the distance and HJI-multi-FPKs equations' error demonstrated the $\epsilon-$ Nash equilibrium point is reached. This proves the fact that the online ACM algorithm can effectively solve the decentralized optimal control for massive multi-agent persuit-evasion games.

\section{Conclusions}
In this paper, the decentralized optimal pursuit-evasion strategies with two large scale groups of pursuers and evaders has been investigated. A novel online Actor-Critic-Mass (ACM) algorithm with five neural networks are designed for individual agent to calculate the decentralized optimal strategy which satisfy the saddle point of the cost function between groups and the $\epsilon-$ Nash equilibrium in the group. The five neural networks can effectively approximate the solution of the HJI equation, the population mass (i.e. the solution of FPK equation), the decentralized optimal control, estimate the mass of the other group, and sample the value function of the opponent's group. The proposed strategy can effectively tackle the ``Curse of dimentionality" as well as eliminating the problem of communication limitation for massive MAS. Moreover, a series of numerical simulations has been conducted to demonstrate the optimality of the strategy. In the future, a pursuer group based on massive UASs will be designed as a testbed to further evaluate the performance of the proposed decentralized optimal pursuit-evasion strategy for massive MAS. 

\appendix[Proof and details]
We discuss the optimal ACM as a pursuer in this manuscript but the evaders can be similarly analyzed. A virtual evader is considered in this manuscript and will be abbreviated as ``evader". Given the system dynamics:
\begin{align}
&\text{Pursuers: }\label{eq:system_dynamics1_p}dx_{g 1, i}=\big[f_{g 1}\left(x_{g 1, i}\right)+g_{g 1}\left(x_{g 1, i}\right) u_{g 1, i}+\boldsymbol{G}_{g 2}\left(\boldsymbol{x}_{g 2}\right)\big]dt+\sigma_{g 1,i} d w_{g 1, i}\\
&\text{Evaders: }\label{eq:system_dynamics2_p}dx_{g 2, j}=\big[f_{g 2}\left(x_{g 2, j}\right)+g_{g 2}\left(x_{g 2, j}\right) u_{g 2, j}+\boldsymbol{G}_{g 1}\left(\boldsymbol{x}_{g 1}\right)\big]dt+\sigma_{g 2,j} d w_{g 2, j}
\end{align}

Neural network representation:
\begin{align}
&\text{Pursuer Critic NN: }V_{g1,i}\left(x_{g 1, i}, u_{g 1,i},m_{g 1,i},m_{g 2,i}\right)=W_{V, g 1, i}^{T} \phi_{V, g 1, i}\left(x_{g 1, i},m_{g 1,i},m_{g 2,i}\right)+\varepsilon_{HJI1,i}\label{eq:critic_repre1}\\ 
&\text{Pursuer Actor NN: }u_{g 1, i}(x_{g 1, i}(t))=W_{u 1, i}^{T}(t) \phi_{u, g 1, i}\left(x_{g 1, i}, m_{g 1, i},m_{g 2, i}\right)+\varepsilon_{m1,i}\label{eq:mass_repre1}\\ 
&\text{Pursuer Mass NN: }m_{g 1, i}(x_{g 1, i},t)=W_{m, g 1, i}^{T}(t) \phi_{m, g 1, i}\left(x_{g 1, i}, t\right)+\varepsilon_{u1,i}\label{eq:actor_repre1}\\
&\text{Evader Critic NN: }V_{g2,j}\left(x_{g 2, j}, u_{g 2,j},m_{g 1,j},m_{g 2,j}\right)=W_{V, g 2, j}^{T} \phi_{V, g 2, j}\left(x_{g 2, j},m_{g 1,j},m_{g 2,j}\right)+\varepsilon_{HJI2,j}\label{eq:critic_repre2}\\ 
&\text{Evader Actor NN: }u_{g 2, j}(x_{g 2, j}(t))=W_{u 2, j}^{T}(t) \phi_{u, g 2, j}\left(x_{g 2, j}, m_{g 1, j},m_{g 2, j}\right)+\varepsilon_{m2,j}\label{eq:mass_repre2}\\ 
&\text{Evader Mass NN: }m_{g 2, j}(x_{g 2, j},t)=W_{m, g 2, j}^{T}(t) \phi_{m, g 2, j}\left(x_{g 2, j}, t\right)+\varepsilon_{u2,j}\label{eq:actor_repre2}
\end{align}
where $\varepsilon_{HJI1,i}$, $\varepsilon_{FPK1,i}$, $\varepsilon_{u1,i}$, $\varepsilon_{HJI2,j}$, $\varepsilon_{FPK2,j}$, and $\varepsilon_{u2,j}$ are the reconstruction errors which are related to the NNs' structures. Note that the evader actor NN is just for proof purpose, not maintained in implementation. 

Neural network estimation representation:
\begin{align}
&\text{Pursuer Critic NN: }\hat{V}_{g1,i}\left(x_{g 1, i}, \hat{u}_{g 1,i},\hat{m}_{g 1,i},\hat{m}_{g 2,i}\right)=\hat{W}_{V, g 1, i}^{T} \hat{\phi}_{V, g 1, i}\left(x_{g 1, i},\hat{m}_{g 1,i},\hat{m}_{g 2,i}\right)\label{eq:critic_estimate1}\\ 
&\text{Pursuer Actor NN: }\hat{u}_{g 1, i}(x_{g 1, i}(t))=\hat{W}_{u 1, i}^{T}(t) \hat{\phi}_{u, g 1, i}\left(x_{g 1, i}, \hat{m}_{g 1, i},\hat{m}_{g 2, i}\right)\label{eq:mass_estimate1}\\ 
&\text{Pursuer Mass NN: }\hat{m}_{g 1, i}(x_{g 1, i},t)=\hat{W}_{m, g 1, i}^{T}(t) \hat{\phi}_{m, g 1, i}\left(x_{g 1, i}, t\right)\label{eq:actor_estimate1} \\

&\text{Evader Critic NN: }\hat{V}_{g2,j}\left(x_{g 2, j}, u_{g 2,j},\hat{m}_{g 1,j},\hat{m}_{g 2,j}\right)=\hat{W}_{V, g 2, j}^{T} \hat{\phi}_{V, g 2, j}\left(x_{g 2, j},\hat{m}_{g 1,j},\hat{m}_{g 2,j}\right) \label{eq:critic_estimate2}\\ 
&\text{Evader Actor NN: }\hat{u}_{g 2, j}(x_{g 2, j}(t))=\hat{W}_{u 2, j}^{T}(t) \hat{\phi}_{u, g 2, j}\left(x_{g 2, j}, \hat{m}_{g 1, j},\hat{m}_{g 2, j}\right) \label{eq:mass_estimate2}\\ 
&\text{Evader Mass NN: }\hat{m}_{g 2, j}(x_{g 2, j},t)=\hat{W}_{m, g 2, j}^{T}(t) \hat{\phi}_{m, g 2, j}\left(x_{g 2, j}, t\right) \label{eq:actor_estimate2}
\end{align}

Estimation error:
\begin{align}
  &e_{HJI1,i}\label{eq:critic_error1_p}=\Phi_{g 1, i}\left(\hat{m}_{g 1,i}, x_{g 1, i}\right)-\Phi_{g 2, i}\left(\hat{m}_{g 2,i}, x_{g 1, i}\right)+\hat{W}^T_{V,g1,i}(t)\hat{\Psi}_{V,g1,i}(x_{g1,i},\hat{m}_{g1,i},\hat{m}_{g2,i})\\
  &e_{FPK1,i}=\hat{W}^T_{m,g1,i}(t)\hat{\Psi}_{m,g1,i}(x_{g1,i},\hat{V}_{g1,i})\label{eq:mass_error1_p}\\
   &e_{u1,i}=\hat{W}_{m,g1,i}^T(t)\hat{\phi}_{u,g1,i}(x_{g1,i},\hat{m}_{g1,i},\hat{m}_{g2,i})+\frac{1}{2}R^{-1}_{g1}(x_{g1,i})\partial_x\hat{V}_{g1,i}\label{eq:actor_error1}\\
   
    &e_{HJI2,j}\label{eq:critic_error2_p}=\Phi_{g 2, j}\left(\hat{m}_{g 2,j}, x_{g 2, j}\right)-\Phi_{g 2, j}\left(\hat{m}_{g 2,j}, x_{g 2, j}\right)+\hat{W}^T_{V,g2,j}(t)\hat{\Psi}_{V,g2,j}(x_{g2,j},\hat{m}_{g2,j},\hat{m}_{g2,j})\\
  &e_{FPK2,j}=\hat{W}^T_{m,g2,j}(t)\hat{\Psi}_{m,g2,j}(x_{g2,j},\hat{V}_{g2,j})\label{eq:mass_error2_p}\\
   &e_{u2,j}=\hat{W}_{m,g2,j}^T(t)\hat{\phi}_{u,g2,j}(x_{g2,j},\hat{m}_{g2,j},\hat{m}_{g2,j})+\frac{1}{2}R^{-1}_{g2}(x_{g2,j})\partial_x\hat{V}_{g2,j}\\
  
\end{align}
where 
\begin{align*}
  &\hat{\Psi}_{V,g1,i}\left(x_{g1,i},\hat{m}_{g1,i},\hat{m}_{g2,i} \right) = \begin{aligned}[t]\partial_t \hat{\phi}_{V,g1,i}+\frac{\sigma_{g1,i}^2}{2}\partial_{xx}\hat{\phi}_{V,g1,i}-\hat{H}_{WV1}
  \end{aligned}\\
  &\hat{\Psi}_{m,g1,i}\left(x_{g1,i},\hat{V}_{g1,i}\right)=\partial_t \hat{\phi}_{m,g1,i}-\frac{\sigma_{g1,i}^2}{2}\partial_{xx} \hat{\phi}_{m,g1,i}-\operatorname{div}(\hat{\phi}_{m,g1,i}D_p\hat{H})\\
  &H_{g1}\left(x_{g 1, i}, \partial V_x\left(x_{g_{1}, i},u_{g1,i}\right)\right)=\Phi_{g 1}\left(m_{g 1},m_{g 2}, x_{g 1, i}\right)-\Phi_{g 2}\left(m_{g 1},m_{g 2}, x_{g 1, i}\right)\nonumber\\
    &+x_{g 1, i}^TQ_{g1}x_{g 1, i}+u_{g 1}^{T} R_{g 1, i} u_{g 1, i}+\partial V_x\left(x_{g_{1}, i},u_{g1,i}\right)\dot{x}_{g1,i}\\
    
&\hat{\Psi}_{V,g2,i}\left(x_{g2,i},\hat{m}_{g1,i},\hat{m}_{g2,i} \right) = \begin{aligned}[t]\partial_t \hat{\phi}_{V,g2,j}+\frac{\sigma_{g2,j}^2}{2}\partial_{xx}\hat{\phi}_{V,g2,j}-\hat{H}_{WV2}
  \end{aligned}\\
  &\hat{\Psi}_{m,g2,j}\left(x_{g2,j},\hat{V}_{g2,j}\right)=\partial_t \hat{\phi}_{m,g2,j}-\frac{\sigma_{g2,j}^2}{2}\partial_{xx} \phi_{m,g2,j}-\operatorname{div}(\hat{\phi}_{m,g2,j}D_p\hat{H})\\
  &H_{g2}\left(x_{g 2, j}, \partial V_x\left(x_{g_{2}, j},u_{g2,j}\right)\right)=\Phi_{g 2}\left(m_{g 1},m_{g 2}, x_{g 2, j}\right)-\Phi_{g 1}\left(m_{g 1},m_{g 2}, x_{g 2, j}\right)\nonumber\\
    &+x_{g 2, j}^TQ_{g2}x_{g 2, j}+u_{g 2}^{T} R_{g 2, j} u_{g 2, j}+\partial V_x\left(x_{g_{2}, j},u_{g2,j}\right)\dot{x}_{g2,j}
\end{align*}
with $\hat{H}_{g1}=H_{g1}\left(x_{g 1, i}, \partial_x (\hat{W}^T_{V,g1,i}\hat{\phi}_{V,g1,i})\right)$, $\hat{H}_{g2}=H_{g2}\left(x_{g2, j}, \partial_x (\hat{W}^T_{V,g2,j}\hat{\phi}_{V,g2,j})\right)$ and $\hat{H}_{WV1}$, $\hat{H}_{WV2}$ being the left term such that  $\hat{H}_{g1}=\hat{W}_{V,g1,i}^T\hat{H}_{WV1}$, $\hat{H}_{g2}=\hat{W}_{V,g2,j}^T\hat{H}_{WV2}$.

The update law for neural networks:
\begin{align}
  &\text{C}\text{ritic NN-pursuer: }\hat{\dot{W}}_{V g 1, i}=-\alpha_h\frac{\hat{\Psi}_{V,g1,i}e^T_{HJI1,i}}{1+\|\hat{\Psi}_{V,g1,i}\|^2} \label{eq:critic1_p}\\
   &\text{M}\text{ass NN-pursuer: }\hat{\dot{W}}_{m, g 1, i}=-\alpha_m\frac{\hat{\Psi}_{m,g1,i}e^T_{FPK1,i}}{1+\|\hat{\Psi}_{m,g1,i}\|^2}\label{eq:mass1_p}\\
  &\text{A}\text{ctor NN-pursuer: }\hat{\dot{W}}_{u, g 1, i}=-\alpha_u\frac{\hat{\phi}_{u,g1,i}(x_{g1,i},\hat{m}_{g1,i},\hat{m}_{g2,i})e^T_{u1,i}}{1+\|\phi_{u,g1,i}(x_{g1,i},\hat{m}_{g1,i})\|^2}\label{eq:actor1_p}\\
  
    &\text{C}\text{ritic NN-evader: }\hat{\dot{W}}_{V g 2, j}=-\alpha_h\frac{\hat{\Psi}_{V,g2,j}e^T_{HJI2,j}}{1+\|\hat{\Psi}_{V,g2,j}\|^2} \label{eq:critic2_p}\\
   &\text{M}\text{ass NN-evader: }\hat{\dot{W}}_{m, g 2, j}=-\alpha_m\frac{\hat{\Psi}_{m,g2,j}e^T_{FPK2,j}}{1+\|\hat{\Psi}_{m,g2,j}\|^2}\label{eq:mass2_p}\\
  &\text{A}\text{ctor NN-evader: }\hat{\dot{W}}_{u, g 2, j}=-\alpha_u\frac{\hat{\phi}_{u,g2,j}(x_{g2,j},\hat{m}_{g1,j},\hat{m}_{g2,j})e^T_{u2,j}}{1+\|\phi_{u,g2,j}(x_{g2,j},\hat{m}_{g2,j})\|^2}\label{eq:actor2}
\end{align}

Because each agent is homogeneous, we drop the subscript of the agent number $i$ and make the following simplification on the notation, $x_{g 1, i}\rightarrow x_1$,
     $f_{g 1}\left(x_{g 1, i}\right)\rightarrow f_1(x_1)$,
     $g_{g 1}\left(x_{g 1, i}\right)\rightarrow g_1(x_1)$,
     $u_{g 1, i}\rightarrow u$,
     $x_{g 1, i}\rightarrow x_1$,
     $W_{V,g 1, i}\rightarrow W_V1$,
     $W_{m,g 1, i}\rightarrow W_m1$,
     $W_{u,g 1, i}\rightarrow W_u1$,
     $\alpha_{h,i}\rightarrow \alpha_h$,
     $\hat{m}_{g1,i}\rightarrow \hat{m}_1$,
     $m_{g1,i}\rightarrow m_1$,
     $\hat{V}_{g1,i}\rightarrow \hat{V}_1$,
     $V_{g1,i}\rightarrow V_1$,
     $\hat{u}_{g1,i}\rightarrow \hat{u}_1$,
     $u_{g1,i}\rightarrow u_1$,
     $e_{HJI1,i}\rightarrow e_{HJI1}$,
     $e_{u1,i}\rightarrow e_{u1}$,
     $e_{FPK1,i}\rightarrow e_{FPK1}$,
     $\varepsilon_{HJI1,i}\rightarrow \varepsilon_{HJI1}$,
     $\varepsilon_{u1,i}\rightarrow \varepsilon_{u1}$,
     $\varepsilon_{FPK1,i}\rightarrow \varepsilon_{FPK1}$,
     $\boldsymbol{G}_{g 1}\left(\boldsymbol{x}_{g 1}\right)\rightarrow G_1$,
     $\sigma_{g 1,i}\rightarrow \sigma_1 $,
     $d w_{g 1, i} \rightarrow dw_1$,
     
    $f_{g 2}\left(x_{g 2, j}\right)\rightarrow f_2(x_2)$,
     $g_{g 2}\left(x_{g 2, j}\right)\rightarrow g_2(x_2)$,
     $u_{g 2, j}\rightarrow u$,
     $x_{g 2, j}\rightarrow x_2$,
     $W_{V,g 2, j}\rightarrow W_V2$,
     $W_{m,g 2, j}\rightarrow W_m2$,
     $W_{u,g 2, j}\rightarrow W_u2$,
     $\alpha_{h,j}\rightarrow \alpha_h$,
     $\hat{m}_{g2,j}\rightarrow \hat{m}_2$,
     $m_{g2,j}\rightarrow m_2$,
     $\hat{V}_{g2,j}\rightarrow \hat{V}_2$,
     $V_{g2,j}\rightarrow V_2$,
     $\hat{u}_{g2,j}\rightarrow \hat{u}_2$,
     $u_{g2,j}\rightarrow u_2$,
     $e_{HJI2,j}\rightarrow e_{HJI2}$,
     $e_{u2,j}\rightarrow e_{u2}$,
     $e_{FPK2,j}\rightarrow e_{FPK2}$,
     $\varepsilon_{HJI2,j}\rightarrow \varepsilon_{HJI2}$,
     $\varepsilon_{u2,j}\rightarrow \varepsilon_{u2}$,
     $\varepsilon_{FPK2,j}\rightarrow \varepsilon_{FPK2}$,
     $\boldsymbol{G}_{g 2}\left(\boldsymbol{x}_{g 2}\right)\rightarrow G_2$,
     $\sigma_{g 2,j}\rightarrow \sigma_2 $,
     $d w_{g 2, j} \rightarrow dw_2$,

\section{Convergence of Critic NN}
\begin{theorem} \label{theorem2}
\emph{(Convergence of pursuer's  Critic NN weights and optimal cost function estimations)}\label{th1}
Given the initial critic NN weights, $\hat{W}_{V1}$, in a compact set, and let the critic NN weights be updated as Eq. \ref{eq:critic1_p} shows. Then, when the critic NN tuning parameters $\alpha_{h}$ satisfies the condition, $\alpha_{h}>0$, the critic NN weights estimation error $\tilde{W}_{V1}$ and the cost function estimation error $\tilde{V}_{1}=V_1-\hat{V}_1$ will be uniformly ultimately bounded (UUB) where the boundedness can be negligible if the NN reconstruction errors are trivial. While the number of neurons and NN architecture has been designed perfectly, the NN reconstruction error can be as small as possible and trivial. Furthermore, the critic NN weights and cost function estimation errors will be asymptotically stable. 
\end{theorem}
\begin{proof}
Consider the following Lyapunov function candidate as:
\begin{align}
    L_{V1}(t)=\frac{1}{2} \operatorname{tr}\left\{\tilde{W}_{V1}^{T}(t) \tilde{W}_{V1}(t)\right\}
\end{align}
Take the first derivative on the Lyapunov function candidate, one obtains:
\begin{align}\label{eq:div_lyapunov}
    \dot{L}_{V1}(t)=\frac{1}{2} \operatorname{tr}\left\{\tilde{W}_{V1}^{T}(t) \dot{\tilde{W}}_{V1}(t)\right\}+\frac{1}{2} \operatorname{tr}\left\{\dot{\tilde{W}}_{V1}^{T}(t) \tilde{W}_{V1}(t)\right\}=\operatorname{tr}\left\{\tilde{W}_{V1}^{T}(t) \dot{\tilde{W}}_{V1}(t)\right\}
\end{align}
Substitute the critic NN weights update law into \eqref{eq:div_lyapunov}, we get
\begin{align}\label{eq20}
    \dot{L}_V1(t)=\alpha_{h} \operatorname{tr}\left\{\tilde{W}_{V1}^{T}(t) \frac{\hat{\Psi}_{V1}\left(x_1, \hat{m}_1,\hat{m}_2\right) e_{HJI1}^{T}}{1+\hat{\Psi}_{V1}^{T}\left(x_1,\hat{m}_1,\hat{m}_2 \right) \hat{\Psi}_{V1}\left(x_1, \hat{m}_1,\hat{m}_2\right)}\right\}
\end{align}

Let $\Phi(x_1,\hat{m}_1,\hat{m}_2)=\Phi_{g 1, i}\left(\hat{m}_{g 1,i}, x_{g 1, i}\right)-\Phi_{g 2, i}\left(\hat{m}_{g 2,i}, x_{g 1, i}\right)$, and $\tilde{\Phi}(x_1,m_1,m_2,\hat{m}_1,\hat{m}_2)=\hat{\Phi}(x_1,\hat{m}_1,\hat{m}_2)-\Phi(x_1,m_1,m_2)$. Substitute $\tilde{\Phi}(x_1,m_1,m_2,\hat{m}_1,\hat{m}_2)$ into critic NN's error function \eqref{eq:critic_error1_p}, we get 
\begin{align}\label{eq:temp1}
    \Phi(x_1,m_1,m_2)+\tilde{\Phi}(x_1,m_1,m_2,\hat{m}_1,\hat{m}_2)+\hat{W}_{V1}^T(t)\hat{\Psi}_{V1}\left(x_1,\hat{m}_1,\hat{m}_2 \right)=e_{HJI1}
\end{align}

Since the correct estimated optimal cost function leads to the HJI equation equals zero, we have
\begin{align}\label{eq:temp3}
    \Phi(x_1,m_1,m_2)+W_V^T(t)\Psi_{V1}\left(x_1,m_1,m_2 \right)=0
\end{align}

Substitute \eqref{eq:temp3} into  \eqref{eq:temp1}, we have
\begin{align} \label{eq:temp2}
    -W_V^T(t)\Psi_{V1}\left(x_1,m_1,m_2 \right)-\varepsilon_{HJI1}+\tilde{\Phi}(x_1,m_1,m_2,\hat{m}_1,\hat{m}_2)-\hat{W}_{V1}^T(t)\hat{\Psi}_{V1}\left(x_1,\hat{m}_1,\hat{m}_2 \right)=e_{HJI1}
\end{align}

Let $\tilde{W}_{V1}(t)=W_V(t)-\hat{W}_{V1}(t)$, and $\tilde{\Psi}_{V1}(x_1,m_1,m_2,\hat{m}_1,\hat{m}_2)=\Psi_V(x_1,m_1,m_2)-\hat{\Psi}_{V1}(x_1,\hat{m}_1,\hat{m}_2)$. After manipulating terms in \eqref{eq:temp2}, we obtain
\begin{align}\label{eq14,1}
    &-W_V^T(t)\left(\hat{\Psi}_{V1}\left(x_1,\hat{m}_1,\hat{m}_2 \right)+ \tilde{\Psi}_{V1}(x_1,m_1,m_2,\hat{m}_1,\hat{m}_2)\right)\nonumber\\
    &-\varepsilon_{HJI1}+\tilde{\Phi}(x_1,m_1,m_2,\hat{m}_1,\hat{m}_2)+\hat{W}_{V1}^T(t)\hat{\Psi}_{V1}\left(x_1,\hat{m}_1,\hat{m}_2 \right)=e_{HJI1}\nonumber\\
    &\tilde{\Phi}(x_1,m_1,m_2,\hat{m}_1,\hat{m}_2)-\tilde{W}^T_V\hat{\Psi}_{V1}\left(x_1,\hat{m}_1,\hat{m}_2\right)-W_V^T\tilde{\Psi}_{V1}(x_1,m_1,m_2,\hat{m}_1,\hat{m}_2)-\varepsilon_{HJI1}=e_{HJI1}
\end{align}
where $\varepsilon_{HJI1}$ is the error resulted from the reconstruction error.

Let's further simplify the notations as: $\hat{\Psi}_{V1}\left(x_1, \hat{m}_1,\hat{m}_2\right)\rightarrow\hat{\Psi}_{V1}$, $\tilde{\Psi}_{V1}(x_1,m_1,m_2,\hat{m}_1,\hat{m}_2)\rightarrow\tilde{\Psi}_{V1}$, $\Psi_{V1}\left(x_1, m_1,m_2\right)\rightarrow\Psi_{V1}$, $\tilde{\Phi}(x_1,m_1,m_2,\hat{m}_1,\hat{m}_2)\rightarrow\tilde{\Phi}$

Substitute \eqref{eq14,1} into \eqref{eq20}, 
\begin{align}\label{eq21}
    &\dot{L}_{V1}(t)=\alpha_{h} \operatorname{tr}\left\{\tilde{W}_{V1}^{T}(t) \frac{\hat{\Psi}_{V1} \left[\tilde{\Phi}-\tilde{W}^T_V\hat{\Psi}_{V1}-W_V^T\tilde{\Psi}_{V1}-\varepsilon_{HJI1}\right]^T}{1+\hat{\Psi}_{V1}^{T} \hat{\Psi}_{V1}}\right\}\nonumber\\
    &=\alpha_{h} \operatorname{tr}\left\{\tilde{W}_{V1}^{T}(t) \frac{\hat{\Psi}_{V1} \tilde{\Phi}^T}{1+\hat{\Psi}_{V1}^{T} \hat{\Psi}_{V1}}\right\}  -\alpha_{h} \operatorname{tr}\left\{\tilde{W}_{V1}^{T}(t) \frac{\hat{\Psi}_{V1}\hat{\Psi}^T_{V1} }{1+\hat{\Psi}_{V1}^{T} \hat{\Psi}_{V1}}\tilde{W}_{V1}(t)\right\}\nonumber\\  &-\alpha_{h} \operatorname{tr}\left\{\tilde{W}_{V1}^{T}(t) \frac{\hat{\Psi}_{V1}\tilde{\Psi}^T_{V1} }{1+\hat{\Psi}_{V1}^{T} \hat{\Psi}_{V1}}W_{V1}^{T}(t)\right\}  -\alpha_{h} \operatorname{tr}\left\{\tilde{W}_{V1}^{T}(t) \frac{\hat{\Psi}_{V1} \varepsilon_{HJI1}^T}{1+\hat{\Psi}_{V1}^{T} \hat{\Psi}_{V1}}\right\}
\end{align}

Apply Cauchy-Schwarz inequality on \eqref{eq21}, 
\begin{align}\label{eq22}
    &\dot{L}_{V1}(t)=\alpha_{h} \operatorname{tr}\left\{\tilde{W}_{V1}^{T}(t) \frac{\hat{\Psi}_{V1} \tilde{\Phi}^T}{1+\hat{\Psi}_{V1}^{T} \hat{\Psi}_{V1}}\right\}  -\alpha_{h} \operatorname{tr}\left\{\tilde{W}_{V1}^{T}(t) \frac{\hat{\Psi}_{V1}\hat{\Psi}^T_{V1} }{1+\hat{\Psi}_{V1}^{T} \hat{\Psi}_{V1}}\tilde{W}_{V1}(t)\right\}\nonumber\\ 
    
    &-\alpha_{h} \operatorname{tr}\left\{\tilde{W}_{V1}^{T}(t) \frac{\hat{\Psi}_{V1}\tilde{\Psi}^T_{V1} }{1+\hat{\Psi}_{V1}^{T} \hat{\Psi}_{V1}}W_{V1}(t)\right\}  -\alpha_{h} \operatorname{tr}\left\{\tilde{W}_{V1}^{T}(t) \frac{\hat{\Psi}_{V1} \varepsilon_{HJI1}^T}{1+\hat{\Psi}_{V1}^{T} \hat{\Psi}_{V1}}\right\}\nonumber\\
    
    &\leq-\frac{\alpha_h}{4}\frac{\left\|\hat{\Psi}_{V1} \right\|^2}{1+\left\| \hat{\Psi}_{V1}\right\|^2}\left\|\tilde{W}_{V1}(t)\right\|^2  -\frac{\alpha_h}{4}\frac{\left\|\hat{\Psi}_{V1} \right\|^2}{1+\left\| \hat{\Psi}_{V1}\right\|^2}\left\|\tilde{W}_{V1}(t)\right\|^2  +\alpha_{h} \operatorname{tr}\left\{\tilde{W}_{V1}^{T}(t) \frac{\hat{\Psi}_{V1} \tilde{\Phi}^T}{1+\hat{\Psi}_{V1}^{T} \hat{\Psi}_{V1}}\right\}\nonumber\\ 
    
    &-\alpha_h\frac{\left\|\tilde{\Phi}\right\|^2}{1+\left\| \hat{\Psi}_{V1}\right\|^2}  +\alpha_h\frac{\left\|\tilde{\Phi}\right\|^2}{1+\left\| \hat{\Psi}_{V1}\right\|^2}  -\frac{\alpha_h}{4}\frac{\left\|\hat{\Psi}_{V1} \right\|^2}{1+\left\| \hat{\Psi}_{V1}\right\|^2}\left\|\tilde{W}_{V1}(t)\right\|^2  -\alpha_{h} \operatorname{tr}\left\{\tilde{W}_{V1}^{T}(t) \frac{\hat{\Psi}_{V1}\tilde{\Psi}^T_{V1} }{1+\hat{\Psi}_{V1}^{T} \hat{\Psi}_{V1}}W_{V1}(t)\right\}\nonumber\\
    
    &-\alpha_h\frac{\left\|W^T_V(t)\tilde{\Psi}_{V1}\right\|}{1+\left\|\hat{\Psi}_{V1}\right\|}  +\alpha_h\frac{\left\|W^T_V(t)\tilde{\Psi}_{V1}\right\|}{1+\left\|\hat{\Psi}_{V1}\right\|}  -\frac{\alpha_h}{4}\frac{\left\|\hat{\Psi}_{V1} \right\|^2}{1+\left\| \hat{\Psi}_{V1}\right\|^2}\left\|\tilde{W}_{V1}(t)\right\|^2 -\alpha_{h} \operatorname{tr}\left\{\tilde{W}_{V1}^{T}(t) \frac{\hat{\Psi}_{V1} \varepsilon_{HJI1}^T}{1+\hat{\Psi}_{V1}^{T} \hat{\Psi}_{V1}}\right\}\nonumber\\
    
    &-\alpha_h\frac{\left\|\varepsilon_{HJI1}\right\|^2}{1+\left\| \hat{\Psi}_{V1}\right\|^2}  +\alpha_h\frac{\left\|\varepsilon_{HJI1}\right\|^2}{1+\left\| \hat{\Psi}_{V1}\right\|^2}
\end{align}

Combining terms in \eqref{eq22}, 
\begin{align}\label{eq23}
    &\dot{L}_{V1}(t)\leq-\frac{\alpha_h}{4}\frac{\left\|\hat{\Psi}_{V1} \right\|^2}{1+\left\| \hat{\Psi}_{V1}\right\|^2}\left\|\tilde{W}_{V1}(t)\right\|^2  -\frac{\alpha_h}{1+\left\| \hat{\Psi}_{V1}\right\|^2}\left\|\frac{\tilde{W}_{V1}(t)\hat{\Psi}_{V1}}{2}  -\tilde{\Phi}\right\|^2-\frac{\alpha_h}{1+\left\|\hat{\Psi}_{V1}\right\|^2}\left\|\frac{\tilde{W}_{V1}(t)\hat{\Psi}_{V1}}{2}  -W^T_V(t)\tilde{\Psi}_{V1}\right\|^2\nonumber\\
    
    &-\frac{\alpha_h}{1+\left\|\hat{\Psi}_{V1}\right\|^2}\left\|\frac{\tilde{W}_{V1}(t)\hat{\Psi}_{V1}}{2}  -\varepsilon_{HJI1}\right\|^2  +\alpha_h\frac{\left\|\tilde{\Phi}\right\|^2}{1+\left\| \hat{\Psi}_{V1}\right\|^2}  +\alpha_h\frac{\left\|\tilde{\Psi}_{V1}\right\|^2}{1+\left\| \hat{\Psi}_{V1}\right\|^2}  +\underbrace{\alpha_h\frac{\left\|\varepsilon_{HJI1}\right\|^2}{1+\left\| \hat{\Psi}_{V1}\right\|^2}}_{\varepsilon_{VHJI}}
\end{align}

Drop the negative terms in the right side of the inequality yields,
\begin{align}\label{eq24}
    &\dot{L}_{V1}(t)\leq-\frac{\alpha_h}{4}\frac{\left\|\hat{\Psi}_{V1} \right\|^2}{1+\left\| \hat{\Psi}_{V1}\right\|^2}\left\|\tilde{W}_{V1}(t)\right\|^2  +\alpha_h\frac{\left\|\tilde{\Phi}\right\|^2}{1+\left\| \hat{\Psi}_{V1}\right\|^2}  +\alpha_h\frac{\left\|\tilde{\Psi}_{V1}\right\|^2}{1+\left\| \hat{\Psi}_{V1}\right\|^2}  +\varepsilon_{VHJI}
\end{align}

Assume that the coupling function $\phi(x_1,m_1,m_2)$, and the function $\Psi_V(x_1,m_1,m_2)$ are Lipschitz and the Lipschitz constant are $L_\Phi$, $L_{\Psi V}$. \eqref{eq24} can be simplified as
\begin{align}\label{eq24.2}
    \dot{L}_{V1}(t)&\leq -\frac{\alpha_h}{4}\frac{\left\|\hat{\Psi}_{V1} \right\|^2}{1+\left\| \hat{\Psi}_{V1}\right\|^2}\left\|\tilde{W}_{V1}(t)\right\|^2   +\alpha_h\frac{\left[ L_\Phi+L_{\Psi V}\|W_V\|^2\right]\|\tilde{m}_1\tilde{m}_2\|^2}{1+\left\| \hat{\Psi}_{V1}\right\|^2}+\varepsilon_{VHJI}\nonumber\\
    &\leq -\frac{\alpha_h}{4}\frac{\left\|\hat{\Psi}_{V1} \right\|^2}{1+\left\| \hat{\Psi}_{V1}\right\|^2}\left\|\tilde{W}_{V1}(t)\right\|^2+B_V(t)
\end{align}

According to the Lyapunov stability analysis, the critic NN weight estimation error will be Uniformly Ultimately Bounded (UUB) with the bound given as
\begin{align}
    \|\tilde{W}_{V1}\|\leq\sqrt{\frac{4(1+\|\hat{\Psi}_{V1}\|^2)}{\alpha_h\|\hat{\Psi}_{V1}\|^2}B_V(t)}\equiv b_{WV}(t)
\end{align}
\end{proof}

We also derive the bound of estimated optimal cost function as follows:

Let $\tilde{V}_1=V_1-\hat{V}_1$, and substitute \eqref{eq:critic_repre1}, \eqref{eq:critic_estimate1}, one obtains,
\begin{align}
    \tilde{V}_1(t)&=W^T_{V1}(t)\phi_{V1}-\hat{W}_{V1}^T(t)\hat{\phi}_{V1}+\varepsilon_{HJI1}\nonumber\\
    &=W^T_{V1}(t)(\tilde{\phi}_{V1}+\hat{\phi}_{V1})-\hat{W}_{V1}(t)^T\hat{\phi}_{V1}+\varepsilon_{HJI1}\nonumber\\
    &=\tilde{W}^T_{V1}(t)\hat{\phi}_{V1}+W_{V1}^T(t)\tilde{\phi}_{V1}+\varepsilon_{HJI1}
\end{align}

Assume the critic NN activation function is Lipschitz, and the Lipschitz constant is denoted as $L_{\phi v}$. The value function estimation error can be represented as:
\begin{align}\label{eq27}
    \|\tilde{V}_1(t)\|&=\|\tilde{W}^T_{V1}(t)\hat{\phi}_{V1}+W_{V1}^T(t)\tilde{\phi}_{V1}+\varepsilon_{HJI1}\|\nonumber\\
    &\leq \|\tilde{W}_{V1}(t)\|\|\hat{\phi}_{V1}\|+L_{\phi v}\|W_{V1}(t)\|\|\tilde{m_1}\tilde{m_2}\|+\|\varepsilon_{HJI1}\|\nonumber\\
    &\leq b_{WV}(t)\|\hat{\phi}_{V1}\|+L_{\phi v}\|W_{V1}(t)\|\|\tilde{m_1}\tilde{m_2}\|+\|\varepsilon_{HJI1}\|\equiv b_{V1}(t)
\end{align}

\begin{theorem}
\emph{(Convergence of virtual evader's  Critic NN weights and optimal cost function estimations)}\label{th2_p}
Given the initial critic NN weights, $\hat{W}_{V2}$, in a compact set, and let the critic NN weights be updated as Eq. \ref{eq:critic2_p} shows. Then, when the critic NN tuning parameters $\alpha_{h}$ satisfies the condition, $\alpha_{h}>0$, the critic NN weights estimation error $\tilde{W}_{V2}$ and the cost function estimation error $\tilde{V}_{2}=V_2-\hat{V}_2$ will be uniformly ultimately bounded (UUB) where the boundedness can be negligible if the NN reconstruction errors are trivial. While the number of neurons and NN architecture has been designed perfectly, the NN reconstruction error can be as small as possible and trivial. Furthermore, the critic NN weights and cost function estimation errors will be asymptotically stable. 
\end{theorem}
\begin{proof}
Similar to above.
\end{proof}

\section{Convergence of Mass NN}
\begin{theorem}  \label{th3}
\emph{(Convergence of pursuer's Mass NN weights and mass function estimation)}:
Given the initial mass NN weights, $\hat{W}_{m1}(t)$, in a compact set, and let the mass NN weights be updated as Eq. \ref{eq:mass1_p} shows. Then, when the mass NN tuning parameter $\alpha_{m}$ satisfies the condition, $\alpha_{m}>0$, the mass NN weights estimation error $\tilde{W}_{m1}$ and mass function estimation error $\tilde{m}_1=m_1-\hat{m}_1$ will be uniformly ultimately bounded (UUB) where the boundedness can be negligible if the NN reconstruction errors are trivial. While the number of neurons and NN architecture has been designed perfectly, the NN reconstruction error can be as small as possible and trivial. Furthermore, the mass NN weights and mass function estimation errors will be asymptotically stable. 
\end{theorem}

\begin{proof}
Consider the following Lyapunov function 
\begin{align}
    L_{m1}(t)=\frac{1}{2} \operatorname{tr}\left\{\tilde{W}_{m1}^{T}(t) \tilde{W}_{m1}(t)\right\}
\end{align}
Take the first derivative on the Lyapunov function candidate, one obtains:
\begin{align}\label{eq:div_lyapunov_actor}
    \dot{L}_{m1}(t)=\frac{1}{2} \operatorname{tr}\left\{\tilde{W}_{m1}^{T}(t) \dot{\tilde{W}}_{m1}(t)\right\}+\frac{1}{2} \operatorname{tr}\left\{\dot{\tilde{W}}_{m1}^{T}(t) \tilde{W}_{m1}(t)\right\}=\operatorname{tr}\left\{\tilde{W}_{m1}^{T}(t) \dot{\tilde{W}}_{m1}(t)\right\}
\end{align}

Since the correct estimated optimal cost function leads to the FPK equation equals zero, we have
\begin{align}\label{eq:temp4}
    W_{m1}^T(t)\Psi_{m1}\left(x_1,V_1 \right)+\varepsilon_{FPK1}=0
\end{align}

Combine \eqref{eq:temp4} and \eqref{eq:mass_error1_p}, we have
\begin{align} \label{eq:temp5}
    -W_{m1}^T(t)\Psi_{m1}\left(x_1,V_1 \right)-\varepsilon_{FPK1}-\hat{W}_{m1}^T(t)\hat{\Psi}_{m1}\left(x_1,\hat{V}_1 \right)=e_{FPK1}
\end{align}

Let $\tilde{W}_{m1}(t)=W_{m1}(t)-\hat{W}_{m1}(t)$, and $\tilde{\Psi}_{m1}(x_1,V_1,\hat{V}_1)=\Psi_{m1}(x_1,V_1)-\hat{\Psi}_{m1}(x_1,\hat{V}_1)$. After manipulating terms in \eqref{eq:temp5}, we obtain
\begin{align}\label{eq14,2}
    &-W_{m1}^T(t)\left(\hat{\Psi}_{m1}\left(x_1,\hat{V}_1 \right)+ \tilde{\Psi}_{m1}(x_1,V_1,\hat{V}_1)\right)-\varepsilon_{FPK1}+\hat{W}_{m1}^T(t)\hat{\Psi}_{m1}\left(x_1,\hat{V}_1 \right)=e_{FPK1}\nonumber\\
    &-\tilde{W}^T_{m1}\hat{\Psi}_{m1}\left(x_1,\hat{V}_1\right)-W_{m1}^T\tilde{\Psi}_{m1}(x_1,V_1,\hat{V}_1)-\varepsilon_{FPK1}=e_{FPK1}
\end{align}
where $\varepsilon_{FPK1}$ is the error resulted from the reconstruction error.

Let's further simplify the notations as: $\hat{\Psi}_{m1}\left(x_1, \hat{V}_1\right)\rightarrow\hat{\Psi}_{m1}$, $\tilde{\Psi}_{m1}(x_1,V_1,\hat{V}_1)\rightarrow\tilde{\Psi}_{m1}$, $\Psi_{m1}\left(x_1, V_1\right)\rightarrow\Psi_{m1}$

Substitute \eqref{eq14,2} into \eqref{eq:div_lyapunov_actor}, 
\begin{align}\label{eq31}
    &\dot{L}_{m1}(t)=\alpha_{m} \operatorname{tr}\left\{\tilde{W}_{m1}^{T}(t) \frac{\hat{\Psi}_{m1} \left[-\tilde{W}^T_{m1}\hat{\Psi}_{m1}-W_{m1}^T\tilde{\Psi}_{m1}-\varepsilon_{FPK1}\right]^T}{1+\hat{\Psi}_{m1}^{T} \hat{\Psi}_{m1}}\right\}\nonumber\\
    &=-\alpha_{m} \operatorname{tr}\left\{\tilde{W}_{m1}^{T}(t) \frac{\hat{\Psi}_{m1}\hat{\Psi}^T_{m1} }{1+\hat{\Psi}_{m1}^{T} \hat{\Psi}_{m1}}\tilde{W}_{m1}(t)\right\}  -\alpha_{m} \operatorname{tr}\left\{\tilde{W}_{m1}^{T}(t) \frac{\hat{\Psi}_{m1}\tilde{\Psi}^T_{m1} }{1+\hat{\Psi}_{m1}^{T} \hat{\Psi}_{m1}}W_{m1}(t)\right\}  -\alpha_{m} \operatorname{tr}\left\{\tilde{W}_{m1}^{T}(t) \frac{\hat{\Psi}_{m1} \varepsilon_{FPK1}^T}{1+\hat{\Psi}_{m1}^{T} \hat{\Psi}_{m1}}\right\}
\end{align}

Apply Cauchy-Schwarz inequality on \eqref{eq31}, 
\begin{align}\label{eq32}
    &\dot{L}_{m1}(t)= -\alpha_{m} \operatorname{tr}\left\{\tilde{W}_{m1}^{T}(t) \frac{\hat{\Psi}_{m1}\hat{\Psi}^T_{m1} }{1+\hat{\Psi}_{m1}^{T} \hat{\Psi}_{m1}}\tilde{W}_{m1}(t)\right\}  -\alpha_{m} \operatorname{tr}\left\{\tilde{W}_{m1}^{T}(t) \frac{\hat{\Psi}_{m1}\tilde{\Psi}^T_{m1} }{1+\hat{\Psi}_{m1}^{T} \hat{\Psi}_{m1}}W_{m1}(t)\right\}  \nonumber\\
    &-\alpha_{m} \operatorname{tr}\left\{\tilde{W}_{m1}^{T}(t) \frac{\hat{\Psi}_{m1} \varepsilon_{FPK1}^T}{1+\hat{\Psi}_{m1}^{T} \hat{\Psi}_{m1}}\right\}\nonumber\\
    
    &\leq-\frac{\alpha_{m}}{2}\frac{\left\|\hat{\Psi}_{m1} \right\|^2}{1+\left\| \hat{\Psi}_{m1}\right\|^2}\left\|\tilde{W}_{m1}(t)\right\|^2  -\frac{\alpha_{m}}{4}\frac{\left\|\hat{\Psi}_{m1} \right\|^2}{1+\left\| \hat{\Psi}_{m1}\right\|^2}\left\|\tilde{W}_{m1}(t)\right\|^2  -\alpha_{m} \operatorname{tr}\left\{\tilde{W}_{m1}^{T}(t) \frac{\hat{\Psi}_{m1}\tilde{\Psi}^T_{m1} }{1+\hat{\Psi}_{m1}^{T} \hat{\Psi}_{m1}}W_{m1}(t)\right\}\nonumber\\
    
    &-\alpha_{m}\frac{\left\|W^T_{m1}(t)\tilde{\Psi}_{m1}\right\|}{1+\left\|\hat{\Psi}_{m1}\right\|}  +\alpha_{m}\frac{\left\|W^T_{m1}(t)\tilde{\Psi}_{m1}\right\|}{1+\left\|\hat{\Psi}_{m1}\right\|}  -\frac{\alpha_{m}}{4}\frac{\left\|\hat{\Psi}_{m1} \right\|^2}{1+\left\| \hat{\Psi}_{m1}\right\|^2}\left\|\tilde{W}_{m1}(t)\right\|^2 -\alpha_{m} \operatorname{tr}\left\{\tilde{W}_{m1}^{T}(t) \frac{\hat{\Psi}_{m1} \varepsilon_{FPK1}^T}{1+\hat{\Psi}_{m1}^{T} \hat{\Psi}_{m1}}\right\}\nonumber\\
    
    &-\alpha_{m}\frac{\left\|\varepsilon_{FPK1}\right\|^2}{1+\left\| \hat{\Psi}_{m1}\right\|^2}  +\alpha_{m}\frac{\left\|\varepsilon_{FPK1}\right\|^2}{1+\left\| \hat{\Psi}_{m1}\right\|^2}
\end{align}

Combining terms in \eqref{eq32}, 
\begin{align}\label{eq33}
    &\dot{L}_{m1}(t)\leq-\frac{\alpha_{m}}{2}\frac{\left\|\hat{\Psi}_{m1} \right\|^2}{1+\left\| \hat{\Psi}_{m1}\right\|^2}\left\|\tilde{W}_{m1}(t)\right\|^2  -\frac{\alpha_{m}}{1+\left\|\hat{\Psi}_{m1}\right\|^2}\left\|\frac{\tilde{W}_{m1}(t)\hat{\Psi}_{m1}}{2}  -W^T_{m1}(t)\tilde{\Psi}_{m1}\right\|^2\nonumber\\
    
    &-\frac{\alpha_{m}}{1+\left\|\hat{\Psi}_{m1}\right\|^2}\left\|\frac{\tilde{W}_{m1}(t)\hat{\Psi}_{m1}}{2}  -\varepsilon_{FPK1}\right\|^2   +\alpha_{m}\frac{\left\|\tilde{\Psi}_{m1}\right\|^2}{1+\left\| \hat{\Psi}_{m1}\right\|^2}  +\underbrace{\alpha_{m}\frac{\left\|\varepsilon_{FPK1}\right\|^2}{1+\left\| \hat{\Psi}_{m1}\right\|^2}}_{\varepsilon_{NFPK1}}
\end{align}

Drop the negative terms in the right side of the inequality yields,
\begin{align}\label{eq34}
    &\dot{L}_{m1}(t)\leq-\frac{\alpha_{m}}{2}\frac{\left\|\hat{\Psi}_{m1} \right\|^2}{1+\left\| \hat{\Psi}_{m1}\right\|^2}\left\|\tilde{W}_{m1}(t)\right\|^2   +\alpha_{m}\frac{\left\|\tilde{\Psi}_{m1}\right\|^2}{1+\left\| \hat{\Psi}_{m1}\right\|^2}  +\varepsilon_{NFPK1}
\end{align}

Assume that the function $\Psi_{m1}(x_1,V_1)$ are Lipschitz and the Lipschitz constant is $L_{\Psi m}$. \eqref{eq34} can be simplified as
\begin{align}\label{eq34.2}
    \dot{L}_{m1}(t)&\leq -\frac{\alpha_{m}}{2}\frac{\left\|\hat{\Psi}_{m1} \right\|^2}{1+\left\| \hat{\Psi}_{m1}\right\|^2}\left\|\tilde{W}_{m1}(t)\right\|^2   +\alpha_{m}\frac{ L_{\Psi m}\|W_{m1}\|^2\|\tilde{V}_1\|^2}{1+\left\| \hat{\Psi}_{m1}\right\|^2}+\varepsilon_{NFPK1}\nonumber\\
    
    &\leq -\frac{\alpha_{m}}{2}\frac{\left\|\hat{\Psi}_{m1} \right\|^2}{1+\left\| \hat{\Psi}_{m1}\right\|^2}\left\|\tilde{W}_{m1}(t)\right\|^2+B_{m1}(t)
\end{align}

According to the Lyapunov stability analysis, the mass NN weight estimation error will be Uniformly Ultimately Bounded (UUB) with the bound given as
\begin{align}
    \|\tilde{W}_{m1}\|\leq\sqrt{\frac{2(1+\|\hat{\Psi}_{m1}\|^2)}{\alpha_{m}\|\hat{\Psi}_{m1}\|^2}B_{m1}(t)}\equiv b_{Wm}(t)
\end{align}
\end{proof}

We also derive the bound of estimated mass function as follows:

Let $\tilde{m}_1=m_1-\hat{m}_1$, and substitute \eqref{eq:mass_repre1}, \eqref{eq:mass_estimate1}, one obtains,
\begin{align}
    \tilde{m}_1(t)&=W^T_{m1}(t)\phi_{m1}-\hat{W}^T\phi_{m1}+\varepsilon_{FPK1}\nonumber\\
    &=\tilde{W}^T_{m1}(t)\phi_{m1}+\varepsilon_{FPK1}
\end{align}

The PDF estimation error can be represented as:
\begin{align}\label{eq37}
    \|\tilde{m}_1(t)\|&=\|\tilde{W}^T_{m1}(t)\phi_{m1}+\varepsilon_{FPK1}\|\nonumber\\
    &\leq \|\tilde{W}_{m1}(t)\|\|\phi_{m1}\|+\|\varepsilon_{FPK1}\|\nonumber\\
    &\leq b_{Wm}(t)\|\hat{\phi}_{m1}\|+\|\varepsilon_{FPK1}\|\equiv b_{m1}(t)
\end{align}

\begin{theorem}  \label{th4}
\emph{(Convergence of virtual evader's Mass NN weights and mass function estimation)}:
Given the initial mass NN weights, $\hat{W}_{m2}(t)$, in a compact set, and let the mass NN weights be updated as Eq. \ref{eq:mass2_p} shows. Then, when the mass NN tuning parameter $\alpha_{m}$ satisfies the condition, $\alpha_{m}>0$, the mass NN weights estimation error $\tilde{W}_{m2}$ and mass function estimation error $\tilde{m}_2=m_2-\hat{m}_2$ will be uniformly ultimately bounded (UUB) where the boundedness can be negligible if the NN reconstruction errors are trivial. While the number of neurons and NN architecture has been designed perfectly, the NN reconstruction error can be as small as possible and trivial. Furthermore, the mass NN weights and mass function estimation errors will be asymptotically stable. 
\end{theorem}
\begin{proof}
Similar to above.
\end{proof}

\section{Convergence of Actor NN}
\begin{theorem}\label{th5}
\emph{(Convergence of pursuer's Actor NN weights and optimal Mean Field type of control estimation errors)}: Given the initial mass NN weights, $\hat{W}_{u1}$, in a compact set, and let the actor NN weights be updated as Eq. \ref{eq:actor1_p} shows. Then, when the actor NN tuning parameter $\alpha_{u}$ satisfies the condition, $\alpha_{u}>0$, the actor NN weights estimation error $\tilde{W}_{u1}$ and optimal control estimation error $\tilde{u}_1=u_1-\hat{u}_1$ will be uniformly ultimately bounded (UUB) where the boundedness can be negligible if the NN reconstruction errors are trivial. While the number of neurons and NN architecture has been designed perfectly, the NN Reconstruction error can be as small as possible and trivial. Furthermore, the mass NN weights and actor function estimation errors will be asymptotically stable. 
\end{theorem}
\begin{proof}
Consider the following Lyapunov function 
\begin{align}
    L_{u1}(t)=\frac{1}{2} \operatorname{tr}\left\{\tilde{W}_{u1}^{T}(t) \tilde{W}_{u1}(t)\right\}
\end{align}
Take the first derivative on the Lyapunov function candidate, one obtains:
\begin{align}\label{eq:div_lyapunov_actor2}
    \dot{L}_{u1}(t)=\frac{1}{2} \operatorname{tr}\left\{\tilde{W}_{u1}^{T}(t) \dot{\tilde{W}}_{u1}(t)\right\}+\frac{1}{2} \operatorname{tr}\left\{\dot{\tilde{W}}_{u1}^{T}(t) \tilde{W}_{u1}(t)\right\}=\operatorname{tr}\left\{\tilde{W}_{u1}^{T}(t) \dot{\tilde{W}}_{u1}(t)\right\}
\end{align}

Since the correct estimated optimal cost function leads to the optimal control equation equals zero, we have
\begin{align}\label{eq:temp6}
    W_{u1}^T(t)\phi_{u1}\left(x_1,m_1,m_2 \right)+\frac{1}{2}R^{-1}_{g1}g_1(x_1)\frac{\partial \hat{V}_1(x_1,\hat{m}_1,\hat{m}_2)}{\partial x_1}+\varepsilon_{u1}=0
\end{align}

Let $\tilde{W}_{u1}(t)=W_{u1}(t)-\hat{W}_{u1}(t)$, and $\tilde{\phi}_{u1}(x_1,m_1,m_2,\hat{m}_1,\hat{m}_2)=\phi_{u1}(x_1,m_1,m_2)-\hat{\phi}_{u1}(x_1,\hat{m}_1,\hat{m}_2)$. Similar to the critic and actor NNs, after manipulating terms, we obtain
\begin{align}\label{eq14,3}
    &-\tilde{W}^T_{u1}\hat{\phi}_{u1}\left(x_1,\hat{m}_1,\hat{m}_2\right)-W_{u1}^T\tilde{\phi}_{u1}(x_1,m_1,m_2,\hat{m}_1,\hat{m}_2)-\frac{1}{2}R^{-1}_{g1}g_1(x_1)\frac{\partial \hat{V}_1(x_1,\hat{m}_1,\hat{m}_2)}{\partial x_1}-\varepsilon_{u1}=e_{u1}
\end{align}
where $\varepsilon_{u1}$ is the error resulted from the reconstruction error.

Let's further simplify the notations as: $\hat{\phi}_{u1}\left(x_1,\hat{m}_1,\hat{m}_2\right)\rightarrow\hat{\phi}_{u1}$, $\tilde{\phi}_{u1}(x_1,m_1,m_2,\hat{m}_1,\hat{m}_2)\rightarrow\tilde{\phi}_{u1}$, $\phi_{u1}\left(x_1, m_1,m_2\right)\rightarrow\phi_{u1}$

Substitute \eqref{eq14,3} into \eqref{eq:div_lyapunov_actor2}, 
\begin{align}\label{eq41}
    &\dot{L}_{u1}(t)=\alpha_{u} \operatorname{tr}\left\{\tilde{W}_{u1}^{T}(t) \frac{\hat{\phi}_{u1} \left[-\tilde{W}^T_{u1}\hat{\phi}_{u1}-W_{u1}^T\tilde{\phi}_{u1}-\frac{1}{2}R^{-1}_{g1}g_1(x_1)\frac{\partial \hat{V}_1}{\partial x_1}-\varepsilon_{u1}\right]^T}{1+\hat{\phi}_{u1}^{T} \hat{\phi}_{u1}}\right\}\nonumber\\
    
    &=-\alpha_{u} \operatorname{tr}\left\{\tilde{W}_{u1}^{T}(t) \frac{\hat{\phi}_{u1}\hat{\phi}^T_{u1} }{1+\hat{\phi}_{u1}^{T} \hat{\phi}_{u1}}\tilde{W}_{u1}(t)\right\}  -\alpha_{u} \operatorname{tr}\left\{\tilde{W}_{u1}^{T}(t) \frac{\hat{\phi}_{u1}\tilde{\phi}^T_{u1} }{1+\hat{\phi}_{u1}^{T} \hat{\phi}_{u1}}W_{u1}(t)\right\}\nonumber\\
    
    &-\alpha_{u} \operatorname{tr}\left\{\tilde{W}_{u1}^{T}(t) \frac{\hat{\phi}_{u1} \left[\frac{1}{2}R^{-1}_{g1}g_1(x_1)\frac{\partial \hat{V}_1}{\partial x_1}\right]^T}{1+\hat{\phi}_{u1}^{T} \hat{\phi}_{u1}}\right\}  -\alpha_{u} \operatorname{tr}\left\{\tilde{W}_{u1}^{T}(t) \frac{\hat{\phi}_{u1} \varepsilon_{u1}^T}{1+\hat{\phi}_{u1}^{T} \hat{\phi}_{u1}}\right\}
\end{align}

Apply Cauchy-Schwarz inequality on \eqref{eq41}, 
\begin{align}\label{eq42}
    &\dot{L}_{u1}(t)=-\alpha_{u} \operatorname{tr}\left\{\tilde{W}_{u1}^{T}(t) \frac{\hat{\phi}_{u1}\hat{\phi}^T_{u1} }{1+\hat{\phi}_{u1}^{T} \hat{\phi}_{u1}}\tilde{W}_{u1}(t)\right\}  -\alpha_{u} \operatorname{tr}\left\{\tilde{W}_{u1}^{T}(t) \frac{\hat{\phi}_{u1}\tilde{\phi}^T_{u1} }{1+\hat{\phi}_{u1}^{T} \hat{\phi}_{u1}}W_{u1}(t)\right\}\nonumber\\
    
    &-\alpha_{u} \operatorname{tr}\left\{\tilde{W}_{u1}^{T}(t) \frac{\hat{\phi}_{u1} \left[\frac{1}{2}R^{-1}_{g1}g_1(x_1)\frac{\partial \hat{V}_1}{\partial x_1}\right]^T}{1+\hat{\phi}_{u1}^{T} \hat{\phi}_{u1}}\right\}  -\alpha_{u} \operatorname{tr}\left\{\tilde{W}_{u1}^{T}(t) \frac{\hat{\phi}_{u1} \varepsilon_{u1}^T}{1+\hat{\phi}_{u1}^{T} \hat{\phi}_{u1}}\right\}\nonumber\\
    
    &\leq-\frac{\alpha_{u}}{4}\frac{\left\|\hat{\phi}_{u1} \right\|^2}{1+\left\| \hat{\phi}_{u1}\right\|^2}\left\|\tilde{W}_{u1}(t)\right\|^2  -\frac{\alpha_{u}}{4}\frac{\left\|\hat{\phi}_{u1} \right\|^2}{1+\left\| \hat{\phi}_{u1}\right\|^2}\left\|\tilde{W}_{u1}(t)\right\|^2  -\alpha_{u} \operatorname{tr}\left\{\tilde{W}_{u1}^{T}(t) \frac{\hat{\phi}_{u1} \left[\frac{1}{2}R^{-1}_{g1}g_1(x_1)\frac{\partial \hat{V}_1}{\partial x_1}\right]^T}{1+\hat{\phi}_{u1}^{T} \hat{\phi}_{u1}}\right\}\nonumber\\ 
    
    &-\alpha_{u}\frac{\left\|\frac{1}{2}R^{-1}_{g1}g_1(x_1)\frac{\partial \hat{V}_1}{\partial x_1}\right\|^2}{1+\left\| \hat{\phi}_{u1}\right\|^2}  +\alpha_{u}\frac{\left\|\frac{1}{2}R^{-1}_{g1}g_1(x_1)\frac{\partial \hat{V}_1}{\partial x_1}\right\|^2}{1+\left\| \hat{\phi}_{u1}\right\|^2}  -\frac{\alpha_{u}}{4}\frac{\left\|\hat{\phi}_{u1} \right\|^2}{1+\left\| \hat{\phi}_{u1}\right\|^2}\left\|\tilde{W}_{u1}(t)\right\|^2  -\alpha_{u} \operatorname{tr}\left\{\tilde{W}_{u1}^{T}(t) \frac{\hat{\phi}_{u1}\tilde{\phi}^T_{u1} }{1+\hat{\phi}_{u1}^{T} \hat{\phi}_{u1}}W_{u1}(t)\right\}\nonumber\\
    
    &-\alpha_{u}\frac{\left\|W^T_{u1}(t)\tilde{\phi}_{u1}\right\|}{1+\left\|\hat{\phi}_{u1}\right\|}  +\alpha_{u}\frac{\left\|W^T_{u1}(t)\tilde{\phi}_{u1}\right\|}{1+\left\|\hat{\phi}_{u1}\right\|}  -\frac{\alpha_{u}}{4}\frac{\left\|\hat{\phi}_{u1} \right\|^2}{1+\left\| \hat{\phi}_{u1}\right\|^2}\left\|\tilde{W}_{u1}(t)\right\|^2 -\alpha_{u} \operatorname{tr}\left\{\tilde{W}_{u1}^{T}(t) \frac{\hat{\phi}_{u1} \varepsilon_{u1}^T}{1+\hat{\phi}_{u1}^{T} \hat{\phi}_{u1}}\right\}\nonumber\\
    
    &-\alpha_{u}\frac{\left\|\varepsilon_{u1}\right\|^2}{1+\left\| \hat{\phi}_{u1}\right\|^2}  +\alpha_{u}\frac{\left\|\varepsilon_{u1}\right\|^2}{1+\left\| \hat{\phi}_{u1}\right\|^2}
\end{align}

Combining terms in \eqref{eq42}, 
\begin{align}\label{eq43}
    &\dot{L}_{u1}(t)\leq-\frac{\alpha_{u}}{4}\frac{\left\|\hat{\phi}_{u1} \right\|^2}{1+\left\| \hat{\phi}_{u1}\right\|^2}\left\|\tilde{W}_{u1}(t)\right\|^2  -\frac{\alpha_{u}}{1+\left\|\hat{\phi}_{u1}\right\|^2}\left\|\frac{\tilde{W}_{u1}(t)\hat{\phi}_{u1}}{2}  -W^T_{u1}(t)\tilde{\phi}_{u1}\right\|^2\nonumber\\
    
    &-\frac{\alpha_{u}}{1+\left\|\hat{\phi}_{u1}\right\|^2}\left\|\frac{\tilde{W}_{u1}(t)\hat{\phi}_{u1}}{2}  -\frac{1}{2}R^{-1}_{g1}g_1(x_1)\frac{\partial \hat{V}_1}{\partial x_1}\right\|^2\nonumber\\
    
    &-\frac{\alpha_{u}}{1+\left\|\hat{\phi}_{u1}\right\|^2}\left\|\frac{\tilde{W}_{u1}(t)\hat{\phi}_{u1}}{2}  -\varepsilon_{u1}\right\|^2   +\frac{\alpha_{u}}{4}\frac{\left\|R^{-1}_{g1}g_1(x_1)\frac{\partial \hat{V}_1}{\partial x_1}\right\|^2}{1+\left\| \hat{\phi}_{u1}\right\|^2}    +\underbrace{\alpha_{u}\frac{\left\|\varepsilon_{u1}\right\|^2}{1+\left\| \hat{\phi}_{u1}\right\|^2}}_{\varepsilon_{Nu1}}
\end{align}

Drop the negative terms in the right side of the inequality yields,
\begin{align}\label{eq44}
    &\dot{L}_{u1}(t)\leq-\frac{\alpha_{u}}{4}\frac{\left\|\hat{\phi}_{u1} \right\|^2}{1+\left\| \hat{\phi}_{u1}\right\|^2}\left\|\tilde{W}_{u1}(t)\right\|^2   +\frac{\alpha_{u}}{4}\frac{\left\|R^{-1}_{g1}g_1(x_1)\frac{\partial \hat{V}_1}{\partial x_1}\right\|^2}{1+\left\| \hat{\phi}_{u1}\right\|^2}   +\varepsilon_{Nu1}\nonumber\\

    &\leq -\frac{\alpha_{u}}{4}\frac{\left\|\hat{\phi}_{u1} \right\|^2}{1+\left\| \hat{\phi}_{u1}\right\|^2}\left\|\tilde{W}_{u1}(t)\right\|^2   +\alpha_{u}\frac{\|R^{-1}_{g1}g_1(x_1)\|^2\|\tilde{V}_1\|^2}{1+\left\| \hat{\phi}_{u1}\right\|^2}+\varepsilon_{Nu1}\nonumber\\
    
    &\leq -\frac{\alpha_{u}}{4}\frac{\left\|\hat{\phi}_{u1} \right\|^2}{1+\left\| \hat{\phi}_{u1}\right\|^2}\left\|\tilde{W}_{u1}(t)\right\|^2+B_{u1}(t)
\end{align}

According to the Lyapunov stability analysis, the actor NN weight estimation error will be Uniformly Ultimately Bounded (UUB) with the bound given as
\begin{align}
    \|\tilde{W}_{u1}\|\leq\sqrt{\frac{4(1+\|\hat{\phi}_{u1}\|^2)}{\alpha_{u}\|\hat{\phi}_{u1}\|^2}B_{u1}(t)}\equiv b_{Wm}(t)
\end{align}
\end{proof}

We also derive the bound of estimated optimal control function as follows:

Similarly, let $\tilde{u}_1=m_1-\hat{u}_1$, and substitute \eqref{eq:actor_repre1}, \eqref{eq:actor_estimate2}, one obtains,
\begin{align}
    \tilde{u}_1(t)&=\tilde{W}^T_{u1}(t)\phi_{u1}+W^T_{u1}(t)\tilde{\phi}_{u1}+\varepsilon_{u1}
\end{align}

The optimal control estimation error can be represented as:
\begin{align}\label{eq47}
    \|\tilde{u}_1(t)\|&=\|\tilde{W}^T_{u1}(t)\phi_{u1}+W^T_{u1}(t)\tilde{\phi}_{u1}+\varepsilon_{u1}\|\nonumber\\
    &\leq \|\tilde{W}_{u1}(t)\|\|\hat{\phi}_{u1}\|+L_{\phi u}\|W_{u1}\|\|\tilde{m_1}\tilde{m_2}\|+\|\varepsilon_{u1}\|\nonumber\\
    &\leq b_{Wu}(t)\|\hat{\phi}_{u1}\|+L_{\phi u}\|W_{u1}\|\|\tilde{m_1}\tilde{m_2}\|+\|\varepsilon_{u1}\|\equiv b_{u1}(t)
\end{align}
where $L_{\phi u}$ is the Lipschitz constant of the actor NN's activation function. 

\begin{theorem}\label{th6}
\emph{(Convergence of virtual evader's Actor NN weights and optimal Mean Field type of control estimation errors)}: Given the initial mass NN weights, $\hat{W}_{u2}$, in a compact set, and let the actor NN weights be updated as Eq. \ref{eq:actor2} shows. Then, when the actor NN tuning parameter $\alpha_{u}$ satisfies the condition, $\alpha_{u}>0$, the actor NN weights estimation error $\tilde{W}_{u2}$ and optimal control estimation error $\tilde{u}_2=u_2-\hat{u}_2$ will be uniformly ultimately bounded (UUB) where the boundedness can be negligible if the NN reconstruction errors are trivial. While the number of neurons and NN architecture has been designed perfectly, the NN Reconstruction error can be as small as possible and trivial. Furthermore, the mass NN weights and actor function estimation errors will be asymptotically stable. 
\end{theorem}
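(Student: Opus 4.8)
The plan is to transcribe the Lyapunov argument of Theorem~\ref{th5} under the substitutions $1\to 2$ and $g1\to g2$, now driven by the evader actor update law \eqref{eq:actor2}. First I would take
\begin{align}
    L_{u2}(t)=\frac{1}{2}\operatorname{tr}\left\{\tilde{W}_{u2}^{T}(t)\,\tilde{W}_{u2}(t)\right\},
\end{align}
whose derivative is $\dot{L}_{u2}(t)=\operatorname{tr}\{\tilde{W}_{u2}^{T}(t)\dot{\tilde{W}}_{u2}(t)\}$ exactly as in \eqref{eq:div_lyapunov_actor2}. Then, using the stationarity identity for the exact optimal evader control (the counterpart of \eqref{eq:temp6}),
\begin{align}
    W_{u2}^{T}(t)\phi_{u2}\left(x_2,m_1,m_2\right)+\frac{1}{2}R^{-1}_{g2}g_2(x_2)\frac{\partial \hat{V}_2}{\partial x_2}+\varepsilon_{u2}=0,
\end{align}
together with $\tilde{W}_{u2}=W_{u2}-\hat{W}_{u2}$ and $\tilde{\phi}_{u2}=\phi_{u2}-\hat{\phi}_{u2}$, I would recast the actor residual $e_{u2}$ in the split form analogous to \eqref{eq14,3}, namely $e_{u2}=-\tilde{W}_{u2}^{T}\hat{\phi}_{u2}-W_{u2}^{T}\tilde{\phi}_{u2}-\frac{1}{2}R^{-1}_{g2}g_2(x_2)\partial_{x_2}\hat{V}_2-\varepsilon_{u2}$.

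Next I would substitute this together with \eqref{eq:actor2} into $\dot{L}_{u2}$, which yields one strictly negative quadratic term in $\tilde{W}_{u2}$ plus three indefinite cross terms. Following the completion-of-squares and Cauchy--Schwarz manipulation that produced \eqref{eq42}--\eqref{eq44}, each cross term splits into a $-\frac{\alpha_u}{4}\frac{\|\hat{\phi}_{u2}\|^2}{1+\|\hat{\phi}_{u2}\|^2}\|\tilde{W}_{u2}\|^2$ contribution and a nonnegative remainder depending only on $\|\tilde{\phi}_{u2}\|$, on $\|\frac{1}{2}R^{-1}_{g2}g_2(x_2)\partial_{x_2}\hat{V}_2\|$, and on $\|\varepsilon_{u2}\|$. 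Discarding the surplus completed-square negative terms, invoking Lipschitz continuity of the evader actor activation function so that $\|\tilde{\phi}_{u2}\|\le L_{\phi u}\|\tilde{m}_1\tilde{m}_2\|$, and using $\partial_{x_2}\hat{V}_2=\partial_{x_2}V_2-\partial_{x_2}\tilde{V}_2$ together with the already-established cost-function bound $\|\tilde{V}_2\|\le b_{V2}(t)$ (the evader analogue of \eqref{eq27}) to control the $R^{-1}_{g2}g_2\partial_{x_2}\hat{V}_2$ term, I arrive at
\begin{align}
    \dot{L}_{u2}(t)\le-\frac{\alpha_u}{4}\frac{\left\|\hat{\phi}_{u2}\right\|^2}{1+\left\|\hat{\phi}_{u2}\right\|^2}\left\|\tilde{W}_{u2}(t)\right\|^2+B_{u2}(t),
\end{align}
with $B_{u2}(t)$ collecting the reconstruction-error and cascaded-error terms. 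The standard Lyapunov analysis then gives UUB of $\tilde{W}_{u2}$ with $\|\tilde{W}_{u2}\|\le\sqrt{4(1+\|\hat{\phi}_{u2}\|^2)B_{u2}(t)/(\alpha_u\|\hat{\phi}_{u2}\|^2)}\equiv b_{Wu}(t)$, and substituting the evader actor representations \eqref{eq:mass_repre2} and \eqref{eq:mass_estimate2} into $\tilde{u}_2=u_2-\hat{u}_2$ exactly as in the derivation of \eqref{eq47} yields $\|\tilde{u}_2(t)\|\le b_{Wu}(t)\|\hat{\phi}_{u2}\|+L_{\phi u}\|W_{u2}\|\,\|\tilde{m}_1\tilde{m}_2\|+\|\varepsilon_{u2}\|$. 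Asymptotic stability is obtained by observing that once $\varepsilon_{u2}$, $\varepsilon_{HJI2}$, and $\varepsilon_{FPK2}$ are driven to zero, $B_{u2}(t)$ collapses and the inequality is negative definite in $\tilde{W}_{u2}$.

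The main obstacle is the cascade bookkeeping rather than any isolated estimate. The residual $B_{u2}(t)$ is not a pure reconstruction-error quantity: the term $\frac{1}{2}R^{-1}_{g2}g_2(x_2)\partial_{x_2}\hat{V}_2$ drags in the critic error $\|\tilde{V}_2\|$ and the term $\tilde{\phi}_{u2}$ drags in the mass errors $\|\tilde{m}_1\tilde{m}_2\|$, so the UUB conclusion is only meaningful in conjunction with the evader critic result and the evader mass-NN result (Theorem~\ref{th4}), which themselves couple back to the pursuer quantities through $\Phi_{g1}-\Phi_{g2}$, hence ultimately to all the adaptive laws. A rigorous treatment would replace the isolated $L_{u2}$ with a composite Lyapunov function $L_{V2}+L_{m2}+L_{u2}$ (or the full sum over both agents) and close the loop with a triangularity / small-gain argument; one also needs a persistence-of-excitation-type hypothesis guaranteeing that $\|\hat{\phi}_{u2}\|$ stays bounded away from zero, so that the decay coefficient $\|\hat{\phi}_{u2}\|^2/(1+\|\hat{\phi}_{u2}\|^2)$ remains uniformly positive. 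Apart from settling this interconnection, the proof is a mechanical restatement of Theorem~\ref{th5} with every subscript $1$ replaced by $2$ and $g1$ by $g2$.
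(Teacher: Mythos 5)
Your proposal is correct and follows essentially the same route the paper intends: the paper's own proof of this theorem is simply ``Similar to above,'' deferring to the pursuer-side argument of Theorem~\ref{th5}, and your transcription with the subscript substitutions $1\to 2$, $g1\to g2$ (Lyapunov candidate $L_{u2}$, stationarity identity, residual decomposition, Cauchy--Schwarz and completion of squares, then the UUB bound and the control-error bound analogous to \eqref{eq47}) is exactly that argument. Your added observations about the cascaded dependence of $B_{u2}(t)$ on $\tilde{V}_2$ and $\tilde{m}_1\tilde{m}_2$ and the implicit excitation requirement on $\|\hat{\phi}_{u2}\|$ are fair caveats that apply equally to the paper's Theorem~\ref{th5}, but they do not change the method.
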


\begin{proof}
Similar to above.
\end{proof}

\section{Closed-loop Stability}
Before prove the closed-loop stability, a lemma is needed. 

\begin{lemma}\label{lm1}
Consider the system dynamics given in \eqref{eq:system_dynamics1_p}, there exists an optimal mean-field type of optimal control, $u^*_1$, such that the closed-loop system dynamics, $f_1\left(x_1\right)+g_1\left(x_1\right) u^*_1+G_2+\sigma_1 \frac{dw_1}{dt}$
\begin{align}
    x_1^T\left[f_1\left(x_1\right)+g_1\left(x_1\right) u^*+G_2+\sigma_1 \frac{dw_1}{dt}\right]\leq -\gamma_1\|x_1\|^2
\end{align}
where $\gamma_1>0$ is a constant.
\end{lemma}

\begin{lemma}\label{lm2}
Consider the system dynamics given in \eqref{eq:system_dynamics2_p}, there exists an optimal mean-field type of optimal control, $u^*_2$, such that the closed-loop system dynamics, $f_2\left(x_2\right)+g_2\left(x_2\right) u^*_2+G_1+\sigma_2 \frac{dw_2}{dt}$
\begin{align}
    x_2^T\left[f_2\left(x_2\right)+g_2\left(x_2\right) u^*+G_2+\sigma_2 \frac{dw_2}{dt}\right]\leq -\gamma_2\|x_2\|^2
\end{align}
where $\gamma_2>0$ is a constant.
\end{lemma}

\begin{theorem}  \label{theorem_closed_loop}
\emph{(Closed-loop Stability}) 
Given an admissible initial control input and let the actor, critic, and mass NNs weights be selected within a compact set. Moreover, the critic, actor, and mass NNs' weight tuning laws for pursuers in $\mathcal{G}_{1}$ are given as (\ref{eq:critic1_p}), (\ref{eq:critic2_p}), (\ref{eq:actor1_p}), (\ref{eq:mass1_p}), and (\ref{eq:mass2_p}), respectively. Then, there exists constants $\alpha_h$, $\alpha_m$, and $\alpha_u$, such that the system states $x_1$, $x_2$, actor, critic, and  mass NNs weights estimation errors, $\Tilde{W}_{V1}$, $\Tilde{W}_{m1}$, $\Tilde{W}_{u1}$, $\Tilde{W}_{V2}$, $\Tilde{W}_{m2}$, and $\Tilde{W}_{u2}$ are all uniformly ultimately bounded (UUB). In addition, the estimated cost function, mass function and control inputs are all UUB. If the number of neurons and NN architecture has been designed effectively, those NN reconstruction error can be as small as possible and trivial. Furthermore, the system states $x_1$, $x_2$, actor, critic, and  mass NNs weights estimation errors, $\Tilde{W}_{V1}$, $\Tilde{W}_{m1}$, $\Tilde{W}_{u1}$, $\Tilde{W}_{V2}$, $\Tilde{W}_{m2}$, and $\Tilde{W}_{u2}$ will still be asymptotically stable. 
\end{theorem}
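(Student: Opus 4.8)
The plan is to lump everything into one composite Lyapunov function built from the pieces already established and show its derivative is negative outside a compact set. Concretely, take
\[
L(t) = \lambda_0\Big(\tfrac12\,x_1^{T}x_1 + \tfrac12\,x_2^{T}x_2\Big) + \lambda_{V1}L_{V1}(t) + \lambda_{V2}L_{V2}(t) + \lambda_{m1}L_{m1}(t) + \lambda_{m2}L_{m2}(t) + \lambda_{u1}L_{u1}(t) + \lambda_{u2}L_{u2}(t),
\]
with positive weights $\lambda_0,\lambda_{V1},\dots,\lambda_{u2}$ fixed later, where $L_{V1},\dots,L_{u2}$ are the quadratic weight-error candidates from Theorems \ref{th1}--\ref{th6}. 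For the six neural-network blocks I would not redo anything: I directly reuse the bounds already derived, namely $\dot L_{V1}\le -c_{V1}\|\tilde W_{V1}\|^2 + B_V(t)$ (Eq. \eqref{eq24.2}), $\dot L_{m1}\le -c_{m1}\|\tilde W_{m1}\|^2 + B_{m1}(t)$ (Eq. \eqref{eq34.2}), $\dot L_{u1}\le -c_{u1}\|\tilde W_{u1}\|^2 + B_{u1}(t)$ (Eq. \eqref{eq44}), and the three evader analogues, where each coefficient $c_\bullet$ is the corresponding tuning gain times a persistence-of-excitation factor $\|\hat\Psi\|^2/(1+\|\hat\Psi\|^2)$ (resp. $\|\hat\phi_{u1}\|^2/(1+\|\hat\phi_{u1}\|^2)$) that I assume is bounded below by some $\kappa\in(0,1]$.

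Next I would handle the state blocks. Write the closed-loop pursuer dynamics as $\dot x_1 = f_1(x_1)+g_1(x_1)u_1^{*}+G_2+\sigma_1\,dw_1/dt + g_1(x_1)(\hat u_1-u_1^{*})$, apply Lemma \ref{lm1} to the first bracket together with a bound $\|g_1(x_1)\|\le g_{1M}$, and use Young's inequality on the cross term, giving $x_1^{T}\dot x_1 \le -(\gamma_1/2)\|x_1\|^2 + (g_{1M}^2/2\gamma_1)\|\tilde u_1\|^2$; analogously for $x_2$ via Lemma \ref{lm2}. Then I substitute the function-error bounds $\|\tilde u_1\|\le b_{u1}(t)$, $\|\tilde m_1\|\le b_{m1}(t)$, $\|\tilde V_1\|\le b_{V1}(t)$ from Eqs. \eqref{eq47}, \eqref{eq37}, \eqref{eq27} (and their evader versions) everywhere they appear inside the $B_\bullet$ terms and inside $\|\tilde u_i\|^2$. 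The structural observation that makes the bookkeeping close is: after these substitutions, $B_V$ is controlled by $\|\tilde W_{m1}\|^2,\|\tilde W_{m2}\|^2$ plus reconstruction errors; $B_{m1}$ by $\|\tilde W_{V1}\|^2$ (through $\|\tilde V_1\|^2$) plus $\|\tilde W_{m1}\|^2,\|\tilde W_{m2}\|^2$; $B_{u1}$ by $\|\tilde V_1\|^2$; and $\|\tilde u_1\|^2$ by $\|\tilde W_{u1}\|^2$ and $\|\tilde m_1\tilde m_2\|^2$. Collecting all the terms then yields $\dot L \le -\mathbf{z}^{T}M\mathbf{z} + \varepsilon_{\mathrm{tot}}$, where $\mathbf{z}=(\|x_1\|,\|x_2\|,\|\tilde W_{V1}\|,\|\tilde W_{V2}\|,\|\tilde W_{m1}\|,\|\tilde W_{m2}\|,\|\tilde W_{u1}\|,\|\tilde W_{u2}\|)^{T}$, $M$ is symmetric with diagonal entries the $\lambda_\bullet$-scaled $c_\bullet$'s and $\gamma_i/2$'s, off-diagonal entries coming from the coupling terms just listed, and $\varepsilon_{\mathrm{tot}}$ collects all the reconstruction-error contributions (the already-named $\varepsilon_{VHJI}$, $\varepsilon_{NFPK1}$, $\varepsilon_{Nu1}$ and their evader counterparts), so that $\varepsilon_{\mathrm{tot}}\to0$ as those reconstruction errors vanish.

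The final step is to choose the Lyapunov weights $\lambda_\bullet$ and the tuning gains $\alpha_h,\alpha_m,\alpha_u$ so that $M\succ0$. Here I would stress that the critic$\leftrightarrow$mass coupling is two-sided --- the $\tilde W_{V}\!\to\!\tilde W_m$ gain scales with $\alpha_m$ while the $\tilde W_m\!\to\!\tilde W_V$ gain scales with $\alpha_h$ --- so simply enlarging one gain does not decouple the system; instead one rescales the $\lambda_\bullet$ to render $M$ strictly diagonally dominant, which amounts to a small-gain-type inequality relating $\alpha_h,\alpha_m,\alpha_u,\gamma_1,\gamma_2,\kappa$, the Lipschitz constants $L_\Phi,L_{\Psi V},L_{\Psi m},L_{\phi v},L_{\phi u}$, and the ideal-weight norms $\|W_\bullet\|$. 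Once $M\succ0$, $\dot L<0$ whenever $\|\mathbf z\|>\sqrt{\varepsilon_{\mathrm{tot}}/\lambda_{\min}(M)}$, which gives the UUB conclusion for $x_1,x_2$ and all six weight errors; boundedness of $\hat V_1,\hat m_1,\hat u_1$ (and the evader estimates) then follows from boundedness of the true $V_i,m_i,u_i$ on the invariant compact set together with the now-finite error bounds $b_{V1},b_{m1},b_{u1}$. If in addition the NN reconstruction errors are negligible, $\varepsilon_{\mathrm{tot}}\to0$ and $\dot L\le -\mathbf z^{T}M\mathbf z<0$, so Lyapunov's theorem upgrades UUB to asymptotic stability.

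I expect the genuine obstacle to be precisely the positive-definiteness of $M$: making the bidirectional critic--mass coupling and the actor's one-way dependence on the critic all dominated at once really needs the persistence-of-excitation floor $\kappa>0$ and an honest small-gain condition on the tuning gains, not merely "$\alpha_h,\alpha_m,\alpha_u>0$" as the statement suggests. A secondary, more cosmetic, gap is the pathwise treatment of the Itô term $\sigma_i\,dw_i/dt$ that is inherited from Lemmas \ref{lm1}--\ref{lm2}; a fully rigorous argument would instead take expectations and invoke a stochastic Lyapunov (UUB-in-mean) theorem, but this does not change the structure of the proof.
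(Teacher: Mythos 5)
Your proposal follows essentially the same route as the paper: the paper forms the composite Lyapunov candidate $L_{sysm}=\sum_k \frac{\beta_k}{2}\operatorname{tr}\{\cdot\}$ over the two states and six weight errors, invokes Lemmas \ref{lm1}--\ref{lm2} for the state terms, reuses the per-block bounds \eqref{eq24.2}, \eqref{eq34.2}, \eqref{eq44}, substitutes the function-error bounds \eqref{eq27}, \eqref{eq37}, \eqref{eq47} to close the coupling, and regroups everything into $-\kappa_\bullet\|\tilde W_\bullet\|^2$ terms plus residuals $\varepsilon_{CLS1},\varepsilon_{CLS2}$ --- exactly your $-\mathbf z^T M\mathbf z+\varepsilon_{\mathrm{tot}}$ in coordinate form. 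The one place you go beyond the paper is that you correctly identify that positivity of the $\kappa_\bullet$ (your $M\succ0$) requires an honest small-gain condition on $\alpha_h,\alpha_m,\alpha_u$ and a persistence-of-excitation floor, whereas the paper simply asserts the $\kappa_\bullet$ are positive definite without justification.
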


\begin{proof}
Consider the Lyapunov function candidate as: 
\begin{align}
    L_{s y s m}(t)&=\frac{\beta_{1}}{2} \operatorname{tr}\left\{x_1^{T}(t) x_{1}(t)\right\}+\frac{\beta_{2}}{2} \operatorname{tr}\left\{\tilde{W}_{V1}^{T}(t) \tilde{W}_{V1}(t)\right\}+\frac{\beta_{3}}{2}\operatorname{tr}\left\{\tilde{W}_{m1}^{T}(t) \tilde{W}_{m1}(t)\right\}+\frac{\beta_{4}}{2}\operatorname{tr}\left\{\tilde{W}_{u1}^{T}(t) \tilde{W}_{u1}(t)\right\}\nonumber\\
    
    &+\frac{\beta_{5}}{2} \operatorname{tr}\left\{x_2^{T}(t) x_{2}(t)\right\}+\frac{\beta_{6}}{2} \operatorname{tr}\left\{\tilde{W}_{V2}^{T}(t) \tilde{W}_{V2}(t)\right\}+\frac{\beta_{7}}{2}\operatorname{tr}\left\{\tilde{W}_{m2}^{T}(t) \tilde{W}_{m2}(t)\right\} +\frac{\beta_{8}}{2}\operatorname{tr}\left\{\tilde{W}_{u2}^{T}(t) \tilde{W}_{u2}(t)\right\}
\end{align}
According to the Lyapunov stability method, taking the first derivative of the selected Lyapunov function candidate
\begin{align}\label{eq:eq50}
&{{\dot L}_{sysm}}(t) = \frac{{{\beta _1}}}{2}{\mathop{\rm tr}\nolimits} \left\{ {x_1^T(t){{\dot x}_1}(t)} \right\} + \frac{{{\beta _1}}}{2}{\mathop{\rm tr}\nolimits} \left\{ {\dot x_1^T(t){x_1}(t)} \right\}+ \frac{{{\beta _2}}}{2}{\mathop{\rm tr}\nolimits} \left\{ {\tilde W_{V1}^T(t){{\dot{ \tilde W}}_{V1}}(t)} \right\} + \frac{{{\beta _2}}}{2}{\mathop{\rm tr}\nolimits} \left\{ {\dot{ \tilde W}_{V1}^T(t){{\tilde W}_{V1}}(t)} \right\}\nonumber\\

&+ \frac{{{\beta _3}}}{2}{\mathop{\rm tr}\nolimits} \left\{ {\tilde W_{m1}^T(t){{\dot{ \tilde W}}_{m1}}(t)} \right\} + \frac{{{\beta _3}}}{2}{\mathop{\rm tr}\nolimits} \left\{ {\dot{ \tilde W}_{m1}^T(t){{\tilde W}_{m1}}(t)} \right\}+ \frac{{{\beta _4}}}{2}{\mathop{\rm tr}\nolimits} \left\{ {\tilde W_{u1}^T(t){{\dot{ \tilde W}}_{u1}}(t)} \right\} + \frac{{{\beta _4}}}{2}{\mathop{\rm tr}\nolimits} \left\{ {\dot{ \tilde W}_{u1}^T(t){{\tilde W}_{u1}}(t)} \right\}\nonumber\\

&+\frac{{{\beta _5}}}{2}{\mathop{\rm tr}\nolimits} \left\{ {x_2^T(t){{\dot x}_2}(t)} \right\} + \frac{{{\beta _5}}}{2}{\mathop{\rm tr}\nolimits} \left\{ {\dot x_2^T(t){x_2}(t)} \right\}+ \frac{{{\beta _6}}}{2}{\mathop{\rm tr}\nolimits} \left\{ {\tilde W_{V2}^T(t){{\dot{ \tilde W}}_{V2}}(t)} \right\} + \frac{{{\beta _6}}}{2}{\mathop{\rm tr}\nolimits} \left\{ {\dot{ \tilde W}_{V2}^T(t){{\tilde W}_{V2}}(t)} \right\}\nonumber\\

&+ \frac{{{\beta _7}}}{2}{\mathop{\rm tr}\nolimits} \left\{ {\tilde W_{m2}^T(t){{\dot{ \tilde W}}_{m2}}(t)} \right\} + \frac{{{\beta _7}}}{2}{\mathop{\rm tr}\nolimits} \left\{ {\dot{ \tilde W}_{m2}^T(t){{\tilde W}_{m2}}(t)} \right\} +\frac{{{\beta _8}}}{2}\operatorname{tr}\left\{ {\tilde W_{u2}^T(t){{\dot{ \tilde W}}_{u2}}(t)} \right\} + \frac{{{\beta _8}}}{2}{\mathop{\rm tr}\nolimits} \left\{ {\dot{ \tilde W}_{u2}^T(t){{\tilde W}_{u2}}(t)} \right\}  \nonumber\\

&={\beta _1}{\mathop{\rm tr}\nolimits} \left\{ {x_1^T(t){{\dot x}_1}(t)} \right\} + {\beta _2}{\mathop{\rm tr}\nolimits} \left\{ {\tilde W_{V1}^T(t){{\dot{ \tilde W}}_{V1}}(t)} \right\}+{\beta _3}{\mathop{\rm tr}\nolimits} \left\{ {\tilde W_{m1}^T(t){{\dot{ \tilde W}}_{m1}}(t)} \right\} + {\beta _4}{\mathop{\rm tr}\nolimits} \left\{ {\tilde W_{u1}^T(t){{\dot{ \tilde W}}_{u1}}(t)} \right\}   \nonumber\\

&+ \beta _5{\mathop{\rm tr}\nolimits} \left\{ {\tilde x_2^T(t) x_2(t)} \right\} + \beta _6{\mathop{\rm tr}\nolimits} \left\{ {\dot{ \tilde W}_{V2}^T(t){{\tilde W}_{V2}}(t)} \right\}+ \beta _7{\mathop{\rm tr}\nolimits} \left\{ {\tilde W_{m2}^T(t){{\dot{ \tilde W}}_{m2}}(t)} \right\} +{\beta _8}{\mathop{\rm tr}\nolimits} \left\{ {\tilde W_{u2}^T(t){{\dot{ \tilde W}}_{u2}}(t)} \right\}
\end{align}

Recall to Lemmas \ref{lm1}, \ref{lm2}, Theorems \ref{th1}-\ref{th6}, and equations \eqref{eq24.2}, \eqref{eq34.2}, \eqref{eq44}, \eqref{eq:eq50} can be represented as:
\begin{align}\label{eq:eq51}
&{{\dot L}_{sysm}}(t) ={\beta _1}{\mathop{\rm tr}\nolimits} \left\{ {x_1^T(t){{\dot x}_1}(t)} \right\} + {\beta _2}{\mathop{\rm tr}\nolimits} \left\{ {\tilde W_{V1}^T(t){{\dot{ \tilde W}}_{V1}}(t)} \right\}+{\beta _3}{\mathop{\rm tr}\nolimits} \left\{ {\tilde W_{m1}^T(t){{\dot{ \tilde W}}_{m1}}(t)} \right\} + {\beta _4}{\mathop{\rm tr}\nolimits} \left\{ {\tilde W_{u1}^T(t){{\dot{ \tilde W}}_{u1}}(t)} \right\}   \nonumber\\

&+ \beta _5{\mathop{\rm tr}\nolimits} \left\{ {\tilde x_2^T(t) x_2(t)} \right\} + \beta _6{\mathop{\rm tr}\nolimits} \left\{ {\dot{ \tilde W}_{V2}^T(t){{\tilde W}_{V2}}(t)} \right\}+ \beta _7{\mathop{\rm tr}\nolimits} \left\{ {\tilde W_{m2}^T(t){{\dot{ \tilde W}}_{m2}}(t)} \right\} +{\beta _8}{\mathop{\rm tr}\nolimits} \left\{ {\tilde W_{u2}^T(t){{\dot{ \tilde W}}_{u2}}(t)} \right\}\nonumber\\ \nonumber\\

&\leq \beta_{1} \operatorname{tr}\left\{x_{1}^{T}\left[f_1\left(x_{1}\right)+g_{1}\left(x_{1}\right) u^{*}_1+\sigma_1 \frac{d w_{1}}{d t}\right]\right\}-\beta_{1} \operatorname{tr}\left\{x_{1}^{T} g_{1}\left(x_{1}\right) \tilde{u}_1\right\}-\frac{2 \beta_{1}}{\gamma_1}\left\|g_{1}\left(x_{1}\right) \tilde{u}_1\right\|^{2}+\frac{2 \beta_{1}}{\gamma_1}\left\|g_{1}\left(x_{1}\right) \tilde{u}_1\right\|^{2}  \nonumber\\

&-\frac{\alpha_{h} \beta_{2}}{4}  \frac{\left\|\hat{\Psi}_{V1}\right\|^{2}}{1+\left\|\hat{\Psi}_{V1}\right\|^{2}}\left\|\tilde{W}_{V1}\right\|^{2}   +\alpha_{h} \frac{\beta_{2}\left[L_\Phi+L_{\Psi V1}\left\|W_{V1}\right\|^{2}\right]\left\|\tilde{m}_{1}\tilde{m}_2\right\|^{2}}{1+\left\|\hat{\Psi}_{V1}\right\|^{2}}+\beta_{2} \varepsilon_{VHJI1}  \nonumber\\

&-\frac{\alpha_{m} \beta_{3}}{2} \frac{\left\|\hat{\Psi}_{m1}\right\|^{2}}{1+\left\|\hat{\Psi}_{m1}\right\|^{2}}\left\|\tilde{W}_{m1}\right\|^{2}+\alpha_{m} \frac{\beta_{3} L_{\Psi_{m1}}\left\|W_{m1}\right\|^{2}\left\|\tilde{V}_{1}\right\|^{2}}{1+\left\|\hat{\Psi}_{m1}\right\|^{2}}+\beta_{3} \varepsilon_{N F P K1} \nonumber\\

&-\frac{\alpha_{u} \beta_{4}}{4} \frac{\left\|\hat{\phi}_{u1}\right\|^{2}}{1+\left\|\hat{\phi}_{u1}\right\|^{2}}\left\|\tilde{W}_{u1}\right\|^{2}+\alpha_u \beta_{4} \frac{\left\|R_1^{-1} g_1^{T}\left(x_1\right)\right\|^{2}\left\|\tilde{V}_{1}\right\|^{2}}{1+\left\|\hat{\phi}_{u1}\right\|^{2}}+\beta_{4} \varepsilon_{Nu1}  \nonumber\\

&+\beta_{5} \operatorname{tr}\left\{x_{2}^{T}\left[f_2\left(x_{2}\right)+g_{2}\left(x_{2}\right) u^{*}_2+\sigma_2 \frac{d w_{2}}{d t}\right]\right\}-\beta_{5} \operatorname{tr}\left\{x_{2}^{T} g_{2}\left(x_{2}\right) \tilde{u}_2\right\}-\frac{2 \beta_{5}}{\gamma_2}\left\|g_{2}\left(x_{2}\right) \tilde{u}_2\right\|^{2}+\frac{2 \beta_{5}}{\gamma_2}\left\|g_{2}\left(x_{2}\right) \tilde{u}_2\right\|^{2}  \nonumber\\

&-\frac{\alpha_{h} \beta_{6}}{4}  \frac{\left\|\hat{\Psi}_{V2}\right\|^{2}}{1+\left\|\hat{\Psi}_{V2}\right\|^{2}}\left\|\tilde{W}_{V2}\right\|^{2}   +\alpha_{h} \frac{\beta_{6}\left[L_\Phi+L_{\Psi V2}\left\|W_{V2}\right\|^{2}\right]\left\|\tilde{m}_{1}\tilde{m}_2\right\|^{2}}{1+\left\|\hat{\Psi}_{V2}\right\|^{2}}+\beta_{6} \varepsilon_{VHJI2}  \nonumber\\

&-\frac{\alpha_{m} \beta_{7}}{2} \frac{\left\|\hat{\Psi}_{m2}\right\|^{2}}{1+\left\|\hat{\Psi}_{m2}\right\|^{2}}\left\|\tilde{W}_{m2}\right\|^{2}+\alpha_{m} \frac{\beta_{7} L_{\Psi_{m2}}\left\|W_{m2}\right\|^{2}\left\|\tilde{V}_{2}\right\|^{2}}{1+\left\|\hat{\Psi}_{m2}\right\|^{2}}+\beta_{7} \varepsilon_{N F P K2} \nonumber\\

&-\frac{\alpha_{u} \beta_{8}}{4} \frac{\left\|\hat{\phi}_{u2}\right\|^{2}}{1+\left\|\hat{\phi}_{u2}\right\|^{2}}\left\|\tilde{W}_{u2}\right\|^{2}+\alpha_u \beta_{8} \frac{\left\|R_2^{-1} g_2^{T}\left(x_2\right)\right\|^{2}\left\|\tilde{V}_2\right\|^{2}}{1+\left\|\hat{\phi}_{u2}\right\|^{2}}+\beta_{8} \varepsilon_{Nu1}  \nonumber\\\nonumber \\

&\leq-\frac{\gamma_1 \beta_{1}}{2}\left\|x_1\right\|^{2}-\frac{\gamma_1 \beta_{1}}{2}\left\|x_1\right\|^{2}-\beta_{1} \operatorname{tr}\left\{x_{1}^{T} g_{1}\left(x_{1}\right) \tilde{u}_1\right\}-\frac{2 \beta_{1}}{\gamma_1}\left\|g_{1}\left(x_{1}\right) \tilde{u}_1\right\|^{2}+\frac{2 \beta_{1}}{\gamma_1}\left\|g_{1}\left(x_{1}\right) \tilde{u}_1\right\|^{2}\nonumber \\

&-\frac{\alpha_{h} \beta_{2}}{4}  \frac{\left\|\hat{\Psi}_{V1}\right\|^{2}}{1+\left\|\hat{\Psi}_{V1}\right\|^{2}}\left\|\tilde{W}_{V1}\right\|^{2}   +\alpha_{h} \frac{\beta_{2}\left[L_\Phi+L_{\Psi V1}\left\|W_{V1}\right\|^{2}\right]\left\|\tilde{m}_{1}\tilde{m}_2\right\|^{2}}{1+\left\|\hat{\Psi}_{V1}\right\|^{2}}  \nonumber\\

&-\frac{\alpha_{m} \beta_{3}}{2} \frac{\left\|\hat{\Psi}_{m1}\right\|^{2}}{1+\left\|\hat{\Psi}_{m1}\right\|^{2}}\left\|\tilde{W}_{m1}\right\|^{2}+\alpha_{m} \frac{\beta_{3} L_{\Psi_{m1}}\left\|W_{m1}\right\|^{2}\left\|\tilde{V}_{1}\right\|^{2}}{1+\left\|\hat{\Psi}_{m1}\right\|^{2}} \nonumber\\

&-\frac{\alpha_{u} \beta_{4}}{4} \frac{\left\|\hat{\phi}_{u1}\right\|^{2}}{1+\left\|\hat{\phi}_{u1}\right\|^{2}}\left\|\tilde{W}_{u1}\right\|^{2}+\alpha_u \beta_{4} \frac{\left\|R_1^{-1} g_1^{T}\left(x_1\right)\right\|^{2}\left\|\tilde{V}_{1}\right\|^{2}}{1+\left\|\hat{\phi}_{u1}\right\|^{2}}+\beta_{4} \varepsilon_{Nu1} +\beta_{3} \varepsilon_{N F P K1}+\beta_{2} \varepsilon_{VHJI1}  \nonumber\\

&-\frac{\gamma_2 \beta_{5}}{2}\left\|x_2\right\|^{2}-\frac{\gamma_2 \beta_{5}}{2}\left\|x_2\right\|^{2}-\beta_{5} \operatorname{tr}\left\{x_{2}^{T} g_{2}\left(x_{2}\right) \tilde{u}_2\right\}-\frac{2 \beta_{5}}{\gamma_2}\left\|g_{2}\left(x_{2}\right) \tilde{u}_2\right\|^{2}+\frac{2 \beta_{5}}{\gamma_2}\left\|g_{2}\left(x_{2}\right) \tilde{u}_2\right\|^{2}  \nonumber\\

&-\frac{\alpha_{h} \beta_{6}}{4}  \frac{\left\|\hat{\Psi}_{V2}\right\|^{2}}{1+\left\|\hat{\Psi}_{V2}\right\|^{2}}\left\|\tilde{W}_{V2}\right\|^{2}   +\alpha_{h} \frac{\beta_{6}\left[L_\Phi+L_{\Psi V2}\left\|W_{V2}\right\|^{2}\right]\left\|\tilde{m}_{1}\tilde{m}_2\right\|^{2}}{1+\left\|\hat{\Psi}_{V2}\right\|^{2}}  \nonumber\\

&-\frac{\alpha_{m} \beta_{7}}{2} \frac{\left\|\hat{\Psi}_{m2}\right\|^{2}}{1+\left\|\hat{\Psi}_{m2}\right\|^{2}}\left\|\tilde{W}_{m2}\right\|^{2}+\alpha_{m} \frac{\beta_{7} L_{\Psi_{m2}}\left\|W_{m2}\right\|^{2}\left\|\tilde{V}_{2}\right\|^{2}}{1+\left\|\hat{\Psi}_{m2}\right\|^{2}}\nonumber \\

&-\frac{\alpha_{u} \beta_{8}}{4} \frac{\left\|\hat{\phi}_{u2}\right\|^{2}}{1+\left\|\hat{\phi}_{u2}\right\|^{2}}\left\|\tilde{W}_{u2}\right\|^{2}+\alpha_u \beta_{8} \frac{\left\|R_2^{-1} g_2^{T}\left(x_2\right)\right\|^{2}\left\|\tilde{V}_2\right\|^{2}}{1+\left\|\hat{\phi}_{u2}\right\|^{2}} +\beta_{8} \varepsilon_{Nu2}+\beta_{7} \varepsilon_{N F P K2}+\beta_{6} \varepsilon_{VHJI2}  \nonumber\\ \nonumber\\

&\leq-\frac{\gamma_1 \beta_{1}}{2}\left\|x_1\right\|^{2}-\beta_1\left[\sqrt{\frac{\gamma_1}{2}}\|x_1\|+\sqrt{\frac{2}{\gamma_1}}\|g_1(x_1)\tilde{u}_1 \right]^2+\frac{2g^2_{M1}\beta_1}{\gamma_1}\|\tilde{u}_1\|^2  \nonumber \\

&-\frac{\alpha_{h} \beta_{2}}{4}  \frac{\left\|\hat{\Psi}_{V1}\right\|^{2}}{1+\left\|\hat{\Psi}_{V1}\right\|^{2}}\left\|\tilde{W}_{V1}\right\|^{2}   +\alpha_{h} \frac{\beta_{2}\left[L_\Phi+L_{\Psi V1}\left\|W_{V1}\right\|^{2}\right]\left\|\tilde{m}_{1}\tilde{m}_2\right\|^{2}}{1+\left\|\hat{\Psi}_{V1}\right\|^{2}}  \nonumber\\

&-\frac{\alpha_{m} \beta_{3}}{2} \frac{\left\|\hat{\Psi}_{m1}\right\|^{2}}{1+\left\|\hat{\Psi}_{m1}\right\|^{2}}\left\|\tilde{W}_{m1}\right\|^{2}  +\alpha_{m} \frac{\beta_{3} L_{\Psi_{m1}}\left\|W_{m1}\right\|^{2}\left\|\tilde{V}_{1}\right\|^{2}}{1+\left\|\hat{\Psi}_{m1}\right\|^{2}} \nonumber\\

&-\frac{\alpha_{u} \beta_{4}}{4} \frac{\left\|\hat{\phi}_{u1}\right\|^{2}}{1+\left\|\hat{\phi}_{u1}\right\|^{2}}\left\|\tilde{W}_{u1}\right\|^{2} +\alpha_u \beta_{4} \frac{\left\|R_1^{-1} g_1^{T}\left(x_1\right)\right\|^{2}\left\|\tilde{V}_{1}\right\|^{2}}{1+\left\|\hat{\phi}_{u1}\right\|^{2}}  +  \beta_{4} \varepsilon_{Nu1} +\beta_{3} \varepsilon_{N F P K1}+\beta_{2} \varepsilon_{VHJI1}  \nonumber\\

&-\frac{\gamma_2 \beta_{5}}{2}\left\|x_2\right\|^{2}-\beta_5\left[\sqrt{\frac{\gamma_2}{2}}\|x_2\|+\sqrt{\frac{2}{\gamma_2}}\|g_2(x_2)\tilde{u}_2 \right]^2+\frac{2g^2_{M2}\beta_5}{\gamma_2}\|\tilde{u}_2\|^2  \nonumber\\

&-\frac{\alpha_{h} \beta_{6}}{4}  \frac{\left\|\hat{\Psi}_{V2}\right\|^{2}}{1+\left\|\hat{\Psi}_{V2}\right\|^{2}}\left\|\tilde{W}_{V2}\right\|^{2}   +\alpha_{h} \frac{\beta_{6}\left[L_\Phi+L_{\Psi V2}\left\|W_{V2}\right\|^{2}\right]\left\|\tilde{m}_{1}\tilde{m}_2\right\|^{2}}{1+\left\|\hat{\Psi}_{V2}\right\|^{2}}  \nonumber\\

&-\frac{\alpha_{m} \beta_{7}}{2} \frac{\left\|\hat{\Psi}_{m2}\right\|^{2}}{1+\left\|\hat{\Psi}_{m2}\right\|^{2}}\left\|\tilde{W}_{m2}\right\|^{2}+\alpha_{m} \frac{\beta_{7} L_{\Psi_{m2}}\left\|W_{m2}\right\|^{2}\left\|\tilde{V}_{2}\right\|^{2}}{1+\left\|\hat{\Psi}_{m2}\right\|^{2}}  \nonumber\\

&-\frac{\alpha_{u} \beta_{8}}{8} \frac{\left\|\hat{\phi}_{u2}\right\|^{2}}{1+\left\|\hat{\phi}_{u2}\right\|^{2}}\left\|\tilde{W}_{u2}\right\|^{2} +\alpha_u \beta_{8} \frac{\left\|R_2^{-1} g_2^{T}\left(x_2\right)\right\|^{2}\left\|\tilde{V}_2\right\|^{2}}{1+\left\|\hat{\phi}_{u2}\right\|^{2}} +\beta_{8} \varepsilon_{Nu2} +\beta_{7}\varepsilon_{N F P K2}+\beta_{6} \varepsilon_{VHJI2}  \nonumber\\ \nonumber\\

&\leq-\frac{\gamma_1}{2}\beta_1\|x_1\|^2+\frac{2g^2_{M1}\beta_1}{\gamma_1}\|\tilde{u}_1\|^2-\frac{\alpha_{h} \beta_{2}}{4}  \frac{\left\|\hat{\Psi}_{V1}\right\|^{2}}{1+\left\|\hat{\Psi}_{V1}\right\|^{2}}\left\|\tilde{W}_{V1}\right\|^{2}   +\alpha_{h} \frac{\beta_{2}\left[L_\Phi+L_{\Psi V1}\left\|W_{V1}\right\|^{2}\right]\left\|\tilde{m}_{1}\tilde{m}_2\right\|^{2}}{1+\left\|\hat{\Psi}_{V1}\right\|^{2}}  \nonumber\\

&-\frac{\alpha_{m} \beta_{3}}{2} \frac{\left\|\hat{\Psi}_{m1}\right\|^{2}}{1+\left\|\hat{\Psi}_{m1}\right\|^{2}}\left\|\tilde{W}_{m1}\right\|^{2}  -\frac{\alpha_{u} \beta_{4}}{4} \frac{\left\|\hat{\phi}_{u1}\right\|^{2}}{1+\left\|\hat{\phi}_{u1}\right\|^{2}}\left\|\tilde{W}_{u1}\right\|^{2}  +\beta_{4}\varepsilon_{Nu1} +\beta_{3} \varepsilon_{N F P K1}+\beta_{2} \varepsilon_{VHJI1}  \nonumber\\
&+\left[\alpha_{m} \frac{\beta_{3} L_{\Psi_{m1}}\left\|W_{m1}\right\|^{2}}{1+\left\|\hat{\Psi}_{m1}\right\|^{2}}  +\alpha_{u1} \beta_{4} \frac{\left\|R_1^{-1} g_1^{T}\left(x_1\right)\right\|^{2}}{1+\left\|\hat{\phi}_{u1}\right\|^{2}}\right]  \left\|\tilde{V}_{1}\right\|^{2}  \nonumber \\

&-\frac{\gamma_2}{2}\beta_5\|x_2\|^2+\frac{2g^2_{M2}\beta_5}{\gamma_2}\|\tilde{u}_2\|^2-\frac{\alpha_{h} \beta_6}{4}  \frac{\left\|\hat{\Psi}_{V2}\right\|^{2}}{1+\left\|\hat{\Psi}_{V2}\right\|^{2}}\left\|\tilde{W}_{V2}\right\|^{2}   +\alpha_{h} \frac{\beta_6\left[L_\Phi+L_{\Psi V2}\left\|W_{V2}\right\|^{2}\right]\left\|\tilde{m}_{1}\tilde{m}_2\right\|^{2}}{1+\left\|\hat{\Psi}_{V2}\right\|^{2}}  \nonumber\\

&-\frac{\alpha_{m} \beta_7}{2} \frac{\left\|\hat{\Psi}_{m2}\right\|^{2}}{1+\left\|\hat{\Psi}_{m2}\right\|^{2}}\left\|\tilde{W}_{m2}\right\|^{2} -\frac{\alpha_{u} \beta_{8}}{4} \frac{\left\|\hat{\phi}_{u2}\right\|^{2}}{1+\left\|\hat{\phi}_{u2}\right\|^{2}}\left\|\tilde{W}_{u2}\right\|^{2} +\beta_{8} \varepsilon_{Nu2} +\beta_7 \varepsilon_{N F P K2}+\beta_6 \varepsilon_{VHJI2}  \nonumber\\

&+\left[\alpha_{m} \frac{\beta_{7} L_{\Psi_{m2}}\left\|W_{m2}\right\|^{2}}{1+\left\|\hat{\Psi}_{m2}\right\|^{2}}  +\alpha_u \beta_{8} \frac{\left\|R_2^{-1} g_2^{T}\left(x_2\right)\right\|^{2}}{1+\left\|\hat{\phi}_{u2}\right\|^{2}}\right]  \left\|\tilde{V}_2\right\|^{2}
\end{align}
where $g^2_{M1}$ is the upper bound of $g^2_1(x_1)$, $g^2_{M2}$ is the upper bound of $g^2_2(x_2)$

Next, substituting \eqref{eq27} into \eqref{eq:eq51}, \eqref{eq:eq51} can be represented as
\begin{align}\label{eq52}
    &\cdot{L}_{sys}(t)\leq -\frac{\gamma_1}{2}\beta_1\|x_1\|^2+\frac{2g^2_{M1}\beta_1}{\gamma_1}\|\tilde{u}_1\|^2-\frac{\alpha_{h} \beta_{2}}{4}  \frac{\left\|\hat{\Psi}_{V1}\right\|^{2}}{1+\left\|\hat{\Psi}_{V1}\right\|^{2}}\left\|\tilde{W}_{V1}\right\|^{2}   +\alpha_{h} \frac{\beta_{2}\left[L_\Phi+L_{\Psi V1}\left\|W_{V1}\right\|^{2}\right]\left\|\tilde{m}_{1}\tilde{m}_2\right\|^{2}}{1+\left\|\hat{\Psi}_{V1}\right\|^{2}}  \nonumber\\

&-\frac{\alpha_{m} \beta_{3}}{2} \frac{\left\|\hat{\Psi}_{m1}\right\|^{2}}{1+\left\|\hat{\Psi}_{m1}\right\|^{2}}\left\|\tilde{W}_{m1}\right\|^{2}  -\frac{\alpha_{u} \beta_{4}}{4} \frac{\left\|\hat{\phi}_{u1}\right\|^{2}}{1+\left\|\hat{\phi}_{u1}\right\|^{2}}\left\|\tilde{W}_{u1}\right\|^{2}  +\beta_{4}\varepsilon_{Nu1} +\beta_{3} \varepsilon_{N F P K1}+\beta_{2} \varepsilon_{VHJI1}  \nonumber\\
&+\left[\alpha_{m} \frac{\beta_{3} L_{\Psi_{m1}}\left\|W_{m1}\right\|^{2}}{1+\left\|\hat{\Psi}_{m1}\right\|^{2}}  +\alpha_u \beta_{4} \frac{\left\|R_1^{-1} g_1^{T}\left(x_1\right)\right\|^{2}}{1+\left\|\hat{\phi}_{u1}\right\|^{2}}\right]
\left[\|\tilde{W}_{V1}(t)\|\|\hat{\phi}_{V1}\|+L_{\phi v1}\|W_{V1}\|\|\tilde{m_1}\tilde{m_2}\|+\|\varepsilon_{HJI1}\| \right]^2  \nonumber\\

&-\frac{\gamma_2}{2}\beta_5\|x_2\|^2+\frac{2g^2_{M2}\beta_5}{\gamma_2}\|\tilde{u}_2\|^2-\frac{\alpha_{h} \beta_6}{4}  \frac{\left\|\hat{\Psi}_{V2}\right\|^{2}}{1+\left\|\hat{\Psi}_{V2}\right\|^{2}}\left\|\tilde{W}_{V2}\right\|^{2}   +\alpha_{h} \frac{\beta_6\left[L_\Phi+L_{\Psi V2}\left\|W_{V2}\right\|^{2}\right]\left\|\tilde{m}_{1}\tilde{m}_2\right\|^{2}}{1+\left\|\hat{\Psi}_{V2}\right\|^{2}}  \nonumber\\

&-\frac{\alpha_{m} \beta_7}{2} \frac{\left\|\hat{\Psi}_{m2}\right\|^{2}}{1+\left\|\hat{\Psi}_{m2}\right\|^{2}}\left\|\tilde{W}_{m2}\right\|^{2} -\frac{\alpha_{u} \beta_{8}}{4} \frac{\left\|\hat{\phi}_{u2}\right\|^{2}}{1+\left\|\hat{\phi}_{u2}\right\|^{2}}\left\|\tilde{W}_{u2}\right\|^{2} +\beta_{8} \varepsilon_{Nu2} +\beta_7 \varepsilon_{N F P K2}+\beta_6 \varepsilon_{VHJI2}    \nonumber\\

&+\left[\alpha_{m} \frac{\beta_{7} L_{\Psi_{m2}}\left\|W_{m2}\right\|^{2}}{1+\left\|\hat{\Psi}_{m2}\right\|^{2}}  +\alpha_u \beta_{8} \frac{\left\|R_2^{-1} g_2^{T}\left(x_2\right)\right\|^{2}}{1+\left\|\hat{\phi}_{u2}\right\|^{2}}\right]  \left[\|\tilde{W}_{V2}(t)\|\|\hat{\phi}_{V2}\|+L_{\phi v2}\|W_{V2}\|\|\tilde{m_1}\tilde{m_2}\| +\|\varepsilon_{HJI2}\| \right]^2  \nonumber\\\nonumber\\

&\leq -\frac{\gamma_1}{2}\beta_1\|x_1\|^2+\frac{2g^2_{M1}\beta_1}{\gamma_1}\|\tilde{u}_1\|^2-\frac{\alpha_{h} \beta_{2}}{4}  \frac{\left\|\hat{\Psi}_{V1}\right\|^{2}}{1+\left\|\hat{\Psi}_{V1}\right\|^{2}}\left\|\tilde{W}_{V1}\right\|^{2}\nonumber\\

&-\frac{\alpha_{m} \beta_{3}}{2} \frac{\left\|\hat{\Psi}_{m1}\right\|^{2}}{1+\left\|\hat{\Psi}_{m1}\right\|^{2}}\left\|\tilde{W}_{m1}\right\|^{2}  -\frac{\alpha_{u} \beta_{4}}{4} \frac{\left\|\hat{\phi}_{u1}\right\|^{2}}{1+\left\|\hat{\phi}_{u1}\right\|^{2}}\left\|\tilde{W}_{u1}\right\|^{2}  +\beta_{4}\varepsilon_{Nu1} +\beta_{3} \varepsilon_{N F P K1}+\beta_{2} \varepsilon_{VHJI1}  \nonumber\\
&+3\left[\alpha_{m} \frac{\beta_{3} L_{\Psi_{m1}}\left\|W_{m1}\right\|^{2}}{1+\left\|\hat{\Psi}_{m1}\right\|^{2}}  +\alpha_u \beta_{4} \frac{\left\|R_1^{-1} g_1^{T}\left(x_1\right)\right\|^{2}}{1+\left\|\hat{\phi}_{u1}\right\|^{2}}\right]  \| \tilde{W}_{V1}(t)\|^2\|\hat{\phi}_{V1}\|^2\nonumber\\

&+\left[ 3\left[\alpha_{m} \frac{\beta_{3} L_{\Psi_{m1}}\left\|W_{m1}\right\|^{2}}{1+\left\|\hat{\Psi}_{m1}\right\|^{2}}  +\alpha_u \beta_{4} \frac{\left\|R_1^{-1} g_1^{T}\left(x_1\right)\right\|^{2}}{1+\left\|\hat{\phi}_{u1}\right\|^{2}}\right]  L^2_{\phi v1}\|W_{V1}\|^2  +\alpha_{h} \frac{\beta_{2}\left[L_\Phi+L_{\Psi V1}\left\|W_{V1}\right\|^{2}\right]}{1+\left\|\hat{\Psi}_{V1}\right\|^{2}} \right]  \left\|\tilde{m}_{1}\tilde{m}_2\right\|^{2}  \nonumber\\

&+3\left[\alpha_{m} \frac{\beta_{3} L_{\Psi_{m1}}\left\|W_{m1}\right\|^{2}}{1+\left\|\hat{\Psi}_{m1}\right\|^{2}}  +\alpha_u \beta_{4} \frac{\left\|R_1^{-1} g_1^{T}\left(x_1\right)\right\|^{2}}{1+\left\|\hat{\phi}_{u1}\right\|^{2}}\right]  \|\varepsilon_{HJI1}\|^2  \nonumber\\

&-\frac{\gamma_2}{2}\beta_5\|x_2\|^2+\frac{2g^2_{M2}\beta_5}{\gamma_2}\|\tilde{u}_2\|^2-\frac{\alpha_{h} \beta_6}{4}  \frac{\left\|\hat{\Psi}_{V2}\right\|^{2}}{1+\left\|\hat{\Psi}_{V2}\right\|^{2}}\left\|\tilde{W}_{V2}\right\|^{2}     \nonumber\\

&-\frac{\alpha_{m} \beta_7}{2} \frac{\left\|\hat{\Psi}_{m2}\right\|^{2}}{1+\left\|\hat{\Psi}_{m2}\right\|^{2}}\left\|\tilde{W}_{m2}\right\|^{2} -\frac{\alpha_{u} \beta_{8}}{4} \frac{\left\|\hat{\phi}_{u2}\right\|^{2}}{1+\left\|\hat{\phi}_{u2}\right\|^{2}}\left\|\tilde{W}_{u2}\right\|^{2} +\beta_{8} \varepsilon_{Nu2}+\beta_7 \varepsilon_{N F P K2}+\beta_6 \varepsilon_{VHJI2}    \nonumber\\

&+3\left[\alpha_{m} \frac{\beta_{7} L_{\Psi_{m2}}\left\|W_{m2}\right\|^{2}}{1+\left\|\hat{\Psi}_{m2}\right\|^{2}}  +\alpha_u \beta_{8} \frac{\left\|R_2^{-1} g_2^{T}\left(x_2\right)\right\|^{2}}{1+\left\|\hat{\phi}_{u2}\right\|^{2}}\right]  \|\tilde{W}_{V2}(t)\|^2\|\hat{\phi}_{V2}\|^2\nonumber\\

&+\left[3\left[\alpha_{m} \frac{\beta_{7} L_{\Psi_{m2}}\left\|W_{m2}\right\|^{2}}{1+\left\|\hat{\Psi}_{m2}\right\|^{2}}  +\alpha_u \beta_{8} \frac{\left\|R_2^{-1} g_2^{T}\left(x_2\right)\right\|^{2}}{1+\left\|\hat{\phi}_{u2}\right\|^{2}}\right] L^2_{\phi v2}\|W_{V2}\|^2  +\alpha_{h} \frac{\beta_6\left[L_\Phi+L_{\Psi V2}\left\|W_{V2}\right\|^{2}\right]}{1+\left\|\hat{\Psi}_{V2}\right\|^{2}} \right] \|\tilde{m_1}\tilde{m_2}\|^2  \nonumber\\

&+3\left[\alpha_{m} \frac{\beta_{7} L_{\Psi_{m2}}\left\|W_{m2}\right\|^{2}}{1+\left\|\hat{\Psi}_{m2}\right\|^{2}}  +\alpha_u \beta_{8} \frac{\left\|R_2^{-1} g_2^{T}\left(x_2\right)\right\|^{2}}{1+\left\|\hat{\phi}_{u2}\right\|^{2}}\right]  \|\varepsilon_{HJI2}\|^2
\end{align}

Furthermore, substituting \eqref{eq37} into \eqref{eq52}, \eqref{eq52} can be represented as 
\begin{align}\label{eq53}
    &\cdot{L}_{sys}(t)\leq -\frac{\gamma_1}{2}\beta_1\|x_1\|^2+\frac{2g^2_{M1}\beta_1}{\gamma_1}\|\tilde{u}_1\|^2-\frac{\alpha_{h} \beta_{2}}{4}  \frac{\left\|\hat{\Psi}_{V1}\right\|^{2}}{1+\left\|\hat{\Psi}_{V1}\right\|^{2}}\left\|\tilde{W}_{V1}\right\|^{2}\nonumber\\
&+\left[\begin{aligned}
 &3\left[\alpha_{m} \frac{\beta_{3} L_{\Psi_{m1}}\left\|W_{m1}\right\|^{2}}{1+\left\|\hat{\Psi}_{m1}\right\|^{2}}  +\alpha_u \beta_{4} \frac{\left\|R_1^{-1} g_1^{T}\left(x_1\right)\right\|^{2}}{1+\left\|\hat{\phi}_{u1}\right\|^{2}}\right]  L^2_{\phi v1}\|W_{V1}\|^2 \nonumber\\
&+\alpha_{h} \frac{\beta_{2}\left[L_\Phi+L_{\Psi V1}\left\|W_{V1}\right\|^{2}\right]}{1+\left\|\hat{\Psi}_{V1}\right\|^{2}} \nonumber
\end{aligned}\right] 
\|\tilde{m}_2\|^2\left[\|\tilde{W}_{m1}(t)\|\|\phi_{m1}\|+\|\varepsilon_{FPK1}\|\right]^2  \nonumber\\

&-\frac{\alpha_{m} \beta_{3}}{2} \frac{\left\|\hat{\Psi}_{m1}\right\|^{2}}{1+\left\|\hat{\Psi}_{m1}\right\|^{2}}\left\|\tilde{W}_{m1}\right\|^{2}  -\frac{\alpha_{u} \beta_{4}}{4} \frac{\left\|\hat{\phi}_{u1}\right\|^{2}}{1+\left\|\hat{\phi}_{u1}\right\|^{2}}\left\|\tilde{W}_{u1}\right\|^{2}  +\beta_{4}\varepsilon_{Nu1} +\beta_{3} \varepsilon_{N F P K1}+\beta_{2} \varepsilon_{VHJI1}  \nonumber\\

&+3\left[\alpha_{m} \frac{\beta_{3} L_{\Psi_{m1}}\left\|W_{m1}\right\|^{2}}{1+\left\|\hat{\Psi}_{m1}\right\|^{2}}  +\alpha_u \beta_{4} \frac{\left\|R_1^{-1} g_1^{T}\left(x_1\right)\right\|^{2}}{1+\left\|\hat{\phi}_{u1}\right\|^{2}}\right]  \| \tilde{W}_{V1}(t)\|^2\|\hat{\phi}_{V1}\|^2\nonumber\\

&+3\left[\alpha_{m} \frac{\beta_{3} L_{\Psi_{m1}}\left\|W_{m1}\right\|^{2}}{1+\left\|\hat{\Psi}_{m1}\right\|^{2}}  +\alpha_u \beta_{4} \frac{\left\|R_1^{-1} g_1^{T}\left(x_1\right)\right\|^{2}}{1+\left\|\hat{\phi}_{u1}\right\|^{2}}\right]  \|\varepsilon_{HJI1}\|^2  \nonumber\\

&-\frac{\gamma_2}{2}\beta_5\|x_2\|^2+\frac{2g^2_{M2}\beta_5}{\gamma_2}\|\tilde{u}_2\|^2-\frac{\alpha_{h} \beta_6}{4}  \frac{\left\|\hat{\Psi}_{V2}\right\|^{2}}{1+\left\|\hat{\Psi}_{V2}\right\|^{2}}\left\|\tilde{W}_{V2}\right\|^{2}     \nonumber\\
&+\left[\begin{aligned}
&3\left[\alpha_{m} \frac{\beta_{7} L_{\Psi_{m2}}\left\|W_{m2}\right\|^{2}}{1+\left\|\hat{\Psi}_{m2}\right\|^{2}}  +\alpha_u \beta_{8} \frac{\left\|R_2^{-1} g_2^{T}\left(x_2\right)\right\|^{2}}{1+\left\|\hat{\phi}_{u2}\right\|^{2}}\right] L^2_{\phi v2}\|W_{V2}\|^2\nonumber\\

&+\alpha_{h} \frac{\beta_6\left[L_\Phi+L_{\Psi V2}\left\|W_{V2}\right\|^{2}\right]}{1+\left\|\hat{\Psi}_{V2}\right\|^{2}}
\end{aligned}\right] \|\tilde{m}_1\|^2\left[\|\tilde{W}_{m2}(t)\|\|\phi_{m2}\|+\|\varepsilon_{FPK2}\|\right]^2  \nonumber\\

&-\frac{\alpha_{m} \beta_7}{2} \frac{\left\|\hat{\Psi}_{m2}\right\|^{2}}{1+\left\|\hat{\Psi}_{m2}\right\|^{2}}\left\|\tilde{W}_{m2}\right\|^{2} -\frac{\alpha_{u} \beta_{8}}{4} \frac{\left\|\hat{\phi}_{u2}\right\|^{2}}{1+\left\|\hat{\phi}_{u2}\right\|^{2}}\left\|\tilde{W}_{u2}\right\|^{2} +\beta_{8} \varepsilon_{Nu2} +\beta_7 \varepsilon_{N F P K2}+\beta_6 \varepsilon_{VHJI2}    \nonumber\\

&+3\left[\alpha_{m} \frac{\beta_{7} L_{\Psi_{m2}}\left\|W_{m2}\right\|^{2}}{1+\left\|\hat{\Psi}_{m2}\right\|^{2}}  +\alpha_u \beta_{8} \frac{\left\|R_2^{-1} g_2^{T}\left(x_2\right)\right\|^{2}}{1+\left\|\hat{\phi}_{u2}\right\|^{2}}\right]  \|\tilde{W}_{V2}(t)\|^2\|\hat{\phi}_{V2}\|^2\nonumber\\

&+3\left[\alpha_{m} \frac{\beta_{7} L_{\Psi_{m2}}\left\|W_{m2}\right\|^{2}}{1+\left\|\hat{\Psi}_{m2}\right\|^{2}}  +\alpha_u \beta_{8} \frac{\left\|R_2^{-1} g_2^{T}\left(x_2\right)\right\|^{2}}{1+\left\|\hat{\phi}_{u2}\right\|^{2}}\right]  \|\varepsilon_{HJI2}\|^2

\nonumber\\\nonumber\\

&\leq -\frac{\gamma_1}{2}\beta_1\|x_1\|^2+\frac{2g^2_{M1}\beta_1}{\gamma_1}\|\tilde{u}_1\|^2-\frac{\alpha_{h} \beta_{2}}{4}  \frac{\left\|\hat{\Psi}_{V1}\right\|^{2}}{1+\left\|\hat{\Psi}_{V1}\right\|^{2}}\left\|\tilde{W}_{V1}\right\|^{2}\nonumber\\
&+2\left[\begin{aligned}
 &3\left[\alpha_{m} \frac{\beta_{3} L_{\Psi_{m1}}\left\|W_{m1}\right\|^{2}}{1+\left\|\hat{\Psi}_{m1}\right\|^{2}}  +\alpha_u \beta_{4} \frac{\left\|R_1^{-1} g_1^{T}\left(x_1\right)\right\|^{2}}{1+\left\|\hat{\phi}_{u1}\right\|^{2}}\right]  L^2_{\phi v1}\|W_{V1}\|^2 \nonumber\\
&+\alpha_{h} \frac{\beta_{2}\left[L_\Phi+L_{\Psi V1}\left\|W_{V1}\right\|^{2}\right]}{1+\left\|\hat{\Psi}_{V1}\right\|^{2}}   \nonumber
\end{aligned}\right] \|\tilde{m}_2\|^2\|\tilde{W}_{m1}(t)\|^2\|\phi_{m1}\|^2  \nonumber\\

&+2\left[\begin{aligned}
 &3\left[\alpha_{m} \frac{\beta_{3} L_{\Psi_{m1}}\left\|W_{m1}\right\|^{2}}{1+\left\|\hat{\Psi}_{m1}\right\|^{2}}  +\alpha_u \beta_{4} \frac{\left\|R_1^{-1} g_1^{T}\left(x_1\right)\right\|^{2}}{1+\left\|\hat{\phi}_{u1}\right\|^{2}}\right]  L^2_{\phi v1}\|W_{V1}\|^2 \nonumber\\
&+\alpha_{h} \frac{\beta_{2}\left[L_\Phi+L_{\Psi V1}\left\|W_{V1}\right\|^{2}\right]}{1+\left\|\hat{\Psi}_{V1}\right\|^{2}}   \nonumber
\end{aligned}\right] \|\tilde{m}_2\|^2\|\varepsilon_{FPK}\|^2\nonumber \\

&-\frac{\alpha_{m} \beta_{3}}{2} \frac{\left\|\hat{\Psi}_{m1}\right\|^{2}}{1+\left\|\hat{\Psi}_{m1}\right\|^{2}}\left\|\tilde{W}_{m1}\right\|^{2}  -\frac{\alpha_{u} \beta_{4}}{4} \frac{\left\|\hat{\phi}_{u1}\right\|^{2}}{1+\left\|\hat{\phi}_{u1}\right\|^{2}}\left\|\tilde{W}_{u1}\right\|^{2}  +\beta_{4}\varepsilon_{Nu1} +\beta_{3} \varepsilon_{N F P K1}+\beta_{2} \varepsilon_{VHJI1}  \nonumber\\

&+3\left[\alpha_{m} \frac{\beta_{3} L_{\Psi_{m1}}\left\|W_{m1}\right\|^{2}}{1+\left\|\hat{\Psi}_{m1}\right\|^{2}}  +\alpha_u \beta_{4} \frac{\left\|R_1^{-1} g_1^{T}\left(x_1\right)\right\|^{2}}{1+\left\|\hat{\phi}_{u1}\right\|^{2}}\right]  \| \tilde{W}_{V1}(t)\|^2\|\hat{\phi}_{V1}\|^2\nonumber\\

&+3\left[\alpha_{m} \frac{\beta_{3} L_{\Psi_{m1}}\left\|W_{m1}\right\|^{2}}{1+\left\|\hat{\Psi}_{m1}\right\|^{2}}  +\alpha_u \beta_{4} \frac{\left\|R_1^{-1} g_1^{T}\left(x_1\right)\right\|^{2}}{1+\left\|\hat{\phi}_{u1}\right\|^{2}}\right]  \|\varepsilon_{HJI1}\|^2  \nonumber\\

&-\frac{\gamma_2}{2}\beta_5\|x_2\|^2+\frac{2g^2_{M2}\beta_5}{\gamma_2}\|\tilde{u}_2\|^2-\frac{\alpha_{h} \beta_6}{4}  \frac{\left\|\hat{\Psi}_{V2}\right\|^{2}}{1+\left\|\hat{\Psi}_{V2}\right\|^{2}}\left\|\tilde{W}_{V2}\right\|^{2}     \nonumber\\
&+2\left[\begin{aligned}
&3\left[\alpha_{m} \frac{\beta_{7} L_{\Psi_{m2}}\left\|W_{m2}\right\|^{2}}{1+\left\|\hat{\Psi}_{m2}\right\|^{2}}  +\alpha_u \beta_{8} \frac{\left\|R_2^{-1} g_2^{T}\left(x_2\right)\right\|^{2}}{1+\left\|\hat{\phi}_{u2}\right\|^{2}}\right] L^2_{\phi v2}\|W_{V2}\|^2\nonumber\\

&+\alpha_{h} \frac{\beta_6\left[L_\Phi+L_{\Psi V2}\left\|W_{V2}\right\|^{2}\right]}{1+\left\|\hat{\Psi}_{V2}\right\|^{2}}
\end{aligned}\right] \|\tilde{m}_1\|^2\|\tilde{W}_{m2}(t)\|^2 \|\phi_{m2}\|^2  \nonumber\\

&+2\left[\begin{aligned}
&3\left[\alpha_{m} \frac{\beta_{7} L_{\Psi_{m2}}\left\|W_{m2}\right\|^{2}}{1+\left\|\hat{\Psi}_{m2}\right\|^{2}}  +\alpha_u \beta_{8} \frac{\left\|R_2^{-1} g_2^{T}\left(x_2\right)\right\|^{2}}{1+\left\|\hat{\phi}_{u2}\right\|^{2}}\right] L^2_{\phi v2}\|W_{V2}\|^2\nonumber\\   

&+\alpha_{h} \frac{\beta_6\left[L_\Phi+L_{\Psi V2}\left\|W_{V2}\right\|^{2}\right]}{1+\left\|\hat{\Psi}_{V2}\right\|^{2}}
\end{aligned}\right] \|\tilde{m}_1\|^2\|\varepsilon_{FPK}\|^2  \nonumber\\

&-\frac{\alpha_{m} \beta_7}{2} \frac{\left\|\hat{\Psi}_{m2}\right\|^{2}}{1+\left\|\hat{\Psi}_{m2}\right\|^{2}}\left\|\tilde{W}_{m2}\right\|^{2} -\frac{\alpha_{u} \beta_{8}}{4} \frac{\left\|\hat{\phi}_{u2}\right\|^{2}}{1+\left\|\hat{\phi}_{u2}\right\|^{2}}\left\|\tilde{W}_{u2}\right\|^{2} +\beta_{8} \varepsilon_{Nu2} +\beta_7 \varepsilon_{N F P K2}+\beta_6 \varepsilon_{VHJI2}    \nonumber\\

&+3\left[\alpha_{m} \frac{\beta_{7} L_{\Psi_{m2}}\left\|W_{m2}\right\|^{2}}{1+\left\|\hat{\Psi}_{m2}\right\|^{2}}  +\alpha_u \beta_{8} \frac{\left\|R_2^{-1} g_2^{T}\left(x_2\right)\right\|^{2}}{1+\left\|\hat{\phi}_{u2}\right\|^{2}}\right]  \|\tilde{W}_{V2}(t)\|^2\|\hat{\phi}_{V2}\|^2\nonumber\\

&+3\left[\alpha_{m} \frac{\beta_{7} L_{\Psi_{m2}}\left\|W_{m2}\right\|^{2}}{1+\left\|\hat{\Psi}_{m2}\right\|^{2}}  +\alpha_u \beta_{8} \frac{\left\|R_2^{-1} g_2^{T}\left(x_2\right)\right\|^{2}}{1+\left\|\hat{\phi}_{u2}\right\|^{2}}\right]  \|\varepsilon_{HJI2}\|^2
\end{align}

Next, substitute \eqref{eq47} into \eqref{eq53}, \eqref{eq53} can be represented as:
\begin{align}\label{eq54}
    &\cdot{L}_{sys}(t)\leq -\frac{\gamma_1}{2}\beta_1\|x_1\|^2+\frac{2g^2_{M1}\beta_1}{\gamma_1}\left[\|\tilde{W}_u(t)\|\|\hat{\phi}_{u1}\|+L_{\phi u}\|W_u\|\|\tilde{m_1}\tilde{m_2}\|+\|\varepsilon_{u1}\|\right]^2-\frac{\alpha_{h} \beta_{2}}{4}  \frac{\left\|\hat{\Psi}_{V1}\right\|^{2}}{1+\left\|\hat{\Psi}_{V1}\right\|^{2}}\left\|\tilde{W}_{V1}\right\|^{2}\nonumber\\

&+2\left[\begin{aligned}
 &3\left[\alpha_{m} \frac{\beta_{3} L_{\Psi_{m1}}\left\|W_{m1}\right\|^{2}}{1+\left\|\hat{\Psi}_{m1}\right\|^{2}}  +\alpha_u \beta_{4} \frac{\left\|R_1^{-1} g_1^{T}\left(x_1\right)\right\|^{2}}{1+\left\|\hat{\phi}_{u1}\right\|^{2}}\right]  L^2_{\phi v1}\|W_{V1}\|^2 \nonumber\\
&+\alpha_{h} \frac{\beta_{2}\left[L_\Phi+L_{\Psi V1}\left\|W_{V1}\right\|^{2}\right]}{1+\left\|\hat{\Psi}_{V1}\right\|^{2}}   \nonumber
\end{aligned}\right] \|\tilde{m}_2\|^2\|\tilde{W}_{m1}(t)\|^2\|\phi_{m1}\|^2  \nonumber\\

&+2\left[\begin{aligned}
 &3\left[\alpha_{m} \frac{\beta_{3} L_{\Psi_{m1}}\left\|W_{m1}\right\|^{2}}{1+\left\|\hat{\Psi}_{m1}\right\|^{2}}  +\alpha_u \beta_{4} \frac{\left\|R_1^{-1} g_1^{T}\left(x_1\right)\right\|^{2}}{1+\left\|\hat{\phi}_{u1}\right\|^{2}}\right]  L^2_{\phi v1}\|W_{V1}\|^2 \nonumber\\
&+\alpha_{h} \frac{\beta_{2}\left[L_\Phi+L_{\Psi V1}\left\|W_{V1}\right\|^{2}\right]}{1+\left\|\hat{\Psi}_{V1}\right\|^{2}}   \nonumber
\end{aligned}\right] \|\tilde{m}_2\|^2\|\varepsilon_{FPK}\|^2\nonumber \\

&-\frac{\alpha_{m} \beta_{3}}{2} \frac{\left\|\hat{\Psi}_{m1}\right\|^{2}}{1+\left\|\hat{\Psi}_{m1}\right\|^{2}}\left\|\tilde{W}_{m1}\right\|^{2}  -\frac{\alpha_{u} \beta_{4}}{4} \frac{\left\|\hat{\phi}_{u1}\right\|^{2}}{1+\left\|\hat{\phi}_{u1}\right\|^{2}}\left\|\tilde{W}_{u1}\right\|^{2}  +\beta_{4}\varepsilon_{Nu1} +\beta_{3} \varepsilon_{N F P K1}+\beta_{2} \varepsilon_{VHJI1}  \nonumber\\

&+3\left[\alpha_{m} \frac{\beta_{3} L_{\Psi_{m1}}\left\|W_{m1}\right\|^{2}}{1+\left\|\hat{\Psi}_{m1}\right\|^{2}}  +\alpha_u \beta_{4} \frac{\left\|R_1^{-1} g_1^{T}\left(x_1\right)\right\|^{2}}{1+\left\|\hat{\phi}_{u1}\right\|^{2}}\right]  \| \tilde{W}_{V1}(t)\|^2\|\hat{\phi}_{V1}\|^2\nonumber\\

&+3\left[\alpha_{m} \frac{\beta_{3} L_{\Psi_{m1}}\left\|W_{m1}\right\|^{2}}{1+\left\|\hat{\Psi}_{m1}\right\|^{2}}  +\alpha_u \beta_{4} \frac{\left\|R_1^{-1} g_1^{T}\left(x_1\right)\right\|^{2}}{1+\left\|\hat{\phi}_{u1}\right\|^{2}}\right]  \|\varepsilon_{HJI1}\|^2  \nonumber\\

&-\frac{\gamma_2}{2}\beta_5\|x_2\|^2+\frac{2g^2_{M2}\beta_5}{\gamma_2} \left[\|\tilde{W}_{u2}(t)\|\|\hat{\phi}_{u2}\|+L_{\phi u2}\|W_{u2}\|\|\tilde{m_1}\tilde{m_2}\|+\|\varepsilon_{u2}\|\right]^2 -\frac{\alpha_{h} \beta_6}{4}  \frac{\left\|\hat{\Psi}_{V2}\right\|^{2}}{1+\left\|\hat{\Psi}_{V2}\right\|^{2}}\left\|\tilde{W}_{V2}\right\|^{2}     \nonumber\\

&+2\left[\begin{aligned}
&3\left[\alpha_{m} \frac{\beta_{7} L_{\Psi_{m2}}\left\|W_{m2}\right\|^{2}}{1+\left\|\hat{\Psi}_{m2}\right\|^{2}}  +\alpha_u \beta_{8} \frac{\left\|R_2^{-1} g_2^{T}\left(x_2\right)\right\|^{2}}{1+\left\|\hat{\phi}_{u2}\right\|^{2}}\right] L^2_{\phi v2}\|W_{V2}\|^2\nonumber\\

&+\alpha_{h} \frac{\beta_6\left[L_\Phi+L_{\Psi V2}\left\|W_{V2}\right\|^{2}\right]}{1+\left\|\hat{\Psi}_{V2}\right\|^{2}}
\end{aligned}\right] \|\tilde{m}_1\|^2\|\tilde{W}_{m2}(t)\|^2 \|\phi_{m2}\|^2  \nonumber\\

&+2\left[\begin{aligned}
&3\left[\alpha_{m} \frac{\beta_{7} L_{\Psi_{m2}}\left\|W_{m2}\right\|^{2}}{1+\left\|\hat{\Psi}_{m2}\right\|^{2}}  +\alpha_u \beta_{8} \frac{\left\|R_2^{-1} g_2^{T}\left(x_2\right)\right\|^{2}}{1+\left\|\hat{\phi}_{u2}\right\|^{2}}\right] L^2_{\phi v2}\|W_{V2}\|^2\nonumber\\   

&+\alpha_{h} \frac{\beta_6\left[L_\Phi+L_{\Psi V2}\left\|W_{V2}\right\|^{2}\right]}{1+\left\|\hat{\Psi}_{V2}\right\|^{2}}
\end{aligned}\right] \|\tilde{m}_1\|^2\|\varepsilon_{FPK}\|^2  \nonumber\\

&-\frac{\alpha_{m} \beta_7}{2} \frac{\left\|\hat{\Psi}_{m2}\right\|^{2}}{1+\left\|\hat{\Psi}_{m2}\right\|^{2}}\left\|\tilde{W}_{m2}\right\|^{2} -\frac{\alpha_{u} \beta_{8}}{4} \frac{\left\|\hat{\phi}_{u2}\right\|^{2}}{1+\left\|\hat{\phi}_{u2}\right\|^{2}}\left\|\tilde{W}_{u2}\right\|^{2} +\beta_{8} \varepsilon_{Nu2} +\beta_7 \varepsilon_{N F P K2}+\beta_6 \varepsilon_{VHJI2}    \nonumber\\

&+3\left[\alpha_{m} \frac{\beta_{7} L_{\Psi_{m2}}\left\|W_{m2}\right\|^{2}}{1+\left\|\hat{\Psi}_{m2}\right\|^{2}}  +\alpha_u \beta_{8} \frac{\left\|R_2^{-1} g_2^{T}\left(x_2\right)\right\|^{2}}{1+\left\|\hat{\phi}_{u2}\right\|^{2}}\right]  \|\tilde{W}_{V2}(t)\|^2\|\hat{\phi}_{V2}\|^2\nonumber\\

&+3\left[\alpha_{m} \frac{\beta_{7} L_{\Psi_{m2}}\left\|W_{m2}\right\|^{2}}{1+\left\|\hat{\Psi}_{m2}\right\|^{2}}  +\alpha_u \beta_{8} \frac{\left\|R_2^{-1} g_2^{T}\left(x_2\right)\right\|^{2}}{1+\left\|\hat{\phi}_{u2}\right\|^{2}}\right]  \|\varepsilon_{HJI2}\|^2 \nonumber\\\nonumber\\

&\leq -\frac{\gamma_1}{2}\beta_1\|x_1\|^2 +\frac{6g^2_{M1}\beta_1}{\gamma_1}\|\hat{\phi}_{u1}\|^2\|\tilde{W}_u(t)\|^2 -\frac{\alpha_{h} \beta_{2}}{4}  \frac{\left\|\hat{\Psi}_{V1}\right\|^{2}}{1+\left\|\hat{\Psi}_{V1}\right\|^{2}}\left\|\tilde{W}_{V1}\right\|^{2}\nonumber\\

&+2\left[\begin{aligned}
 &3\left[\alpha_{m} \frac{\beta_{3} L_{\Psi_{m1}}\left\|W_{m1}\right\|^{2}}{1+\left\|\hat{\Psi}_{m1}\right\|^{2}}  +\alpha_u \beta_{4} \frac{\left\|R_1^{-1} g_1^{T}\left(x_1\right)\right\|^{2}}{1+\left\|\hat{\phi}_{u1}\right\|^{2}}\right]  L^2_{\phi v1}\|W_{V1}\|^2 \nonumber\\
&+\alpha_{h} \frac{\beta_{2}\left[L_\Phi+L_{\Psi V1}\left\|W_{V1}\right\|^{2}\right]}{1+\left\|\hat{\Psi}_{V1}\right\|^{2}} +\frac{6g^2_{M1}\beta_1}{\gamma}L^2_{\phi u}\|W_u\|^2  \nonumber
\end{aligned}\right] \|\tilde{m}_2\|^2\|\tilde{W}_{m1}(t)\|^2\|\phi_{m1}\|^2  \nonumber\\

&+2\left[\begin{aligned}
 &3\left[\alpha_{m} \frac{\beta_{3} L_{\Psi_{m1}}\left\|W_{m1}\right\|^{2}}{1+\left\|\hat{\Psi}_{m1}\right\|^{2}}  +\alpha_u \beta_{4} \frac{\left\|R_1^{-1} g_1^{T}\left(x_1\right)\right\|^{2}}{1+\left\|\hat{\phi}_{u1}\right\|^{2}}\right]  L^2_{\phi v1}\|W_{V1}\|^2 \nonumber\\
&+\alpha_{h} \frac{\beta_{2}\left[L_\Phi+L_{\Psi V1}\left\|W_{V1}\right\|^{2}\right]}{1+\left\|\hat{\Psi}_{V1}\right\|^{2}}  +\frac{6g^2_{M1}\beta_1}{\gamma}L^2_{\phi u}\|W_u\|^2  \nonumber
\end{aligned}\right] \|\tilde{m}_2\|^2\|\varepsilon_{FPK}\|^2   +\frac{6g^2_{M1}\beta_1}{\gamma}\|\varepsilon_{u1}\|^2\nonumber \\

&-\frac{\alpha_{m} \beta_{3}}{2} \frac{\left\|\hat{\Psi}_{m1}\right\|^{2}}{1+\left\|\hat{\Psi}_{m1}\right\|^{2}}\left\|\tilde{W}_{m1}\right\|^{2}  -\frac{\alpha_{u} \beta_{4}}{4} \frac{\left\|\hat{\phi}_{u1}\right\|^{2}}{1+\left\|\hat{\phi}_{u1}\right\|^{2}}\left\|\tilde{W}_{u1}\right\|^{2}  +\beta_{4}\varepsilon_{Nu1} +\beta_{3} \varepsilon_{N F P K1}+\beta_{2} \varepsilon_{VHJI1}  \nonumber\\

&+3\left[\alpha_{m} \frac{\beta_{3} L_{\Psi_{m1}}\left\|W_{m1}\right\|^{2}}{1+\left\|\hat{\Psi}_{m1}\right\|^{2}}  +\alpha_u \beta_{4} \frac{\left\|R_1^{-1} g_1^{T}\left(x_1\right)\right\|^{2}}{1+\left\|\hat{\phi}_{u1}\right\|^{2}}\right]  \| \tilde{W}_{V1}(t)\|^2\|\hat{\phi}_{V1}\|^2\nonumber\\

&+3\left[\alpha_{m} \frac{\beta_{3} L_{\Psi_{m1}}\left\|W_{m1}\right\|^{2}}{1+\left\|\hat{\Psi}_{m1}\right\|^{2}}  +\alpha_u \beta_{4} \frac{\left\|R_1^{-1} g_1^{T}\left(x_1\right)\right\|^{2}}{1+\left\|\hat{\phi}_{u1}\right\|^{2}}\right]  \|\varepsilon_{HJI1}\|^2  \nonumber\\

&-\frac{\gamma_2}{2}\beta_5\|x_2\|^2+\frac{6g^2_{M2}\beta_5}{\gamma_2}\|\hat{\phi}_{u2}\|^2\|\tilde{W}_{u2}(t)\|^2-\frac{\alpha_{h} \beta_6}{4}  \frac{\left\|\hat{\Psi}_{V2}\right\|^{2}}{1+\left\|\hat{\Psi}_{V2}\right\|^{2}}\left\|\tilde{W}_{V2}\right\|^{2}     \nonumber\\

&+2\left[\begin{aligned}
&3\left[\alpha_{m} \frac{\beta_{7} L_{\Psi_{m2}}\left\|W_{m2}\right\|^{2}}{1+\left\|\hat{\Psi}_{m2}\right\|^{2}}  +\alpha_u \beta_{8} \frac{\left\|R_2^{-1} g_2^{T}\left(x_2\right)\right\|^{2}}{1+\left\|\hat{\phi}_{u2}\right\|^{2}}\right] L^2_{\phi v2}\|W_{V2}\|^2\nonumber\\

&+\alpha_{h} \frac{\beta_6\left[L_\Phi+L_{\Psi V2}\left\|W_{V2}\right\|^{2}\right]}{1+\left\|\hat{\Psi}_{V2}\right\|^{2}} +\frac{6g^2_{M2}\beta_5}{\gamma}L^2_{\phi {u2}}\|W_{u2}\|^2 
\end{aligned}\right] \|\tilde{m}_1\|^2\|\tilde{W}_{m2}(t)\|^2 \|\phi_{m2}\|^2  \nonumber\\

&+2\left[\begin{aligned}
&\left[\alpha_{m} \frac{\beta_{7} L_{\Psi_{m2}}\left\|W_{m2}\right\|^{2}}{1+\left\|\hat{\Psi}_{m2}\right\|^{2}}  +\alpha_u \beta_{8} \frac{\left\|R_2^{-1} g_2^{T}\left(x_2\right)\right\|^{2}}{1+\left\|\hat{\phi}_{u2}\right\|^{2}}\right] L^2_{\phi v2}\|W_{V2}\|^2\nonumber\\   

&+\alpha_{h} \frac{\beta_6\left[L_\Phi+L_{\Psi V2}\left\|W_{V2}\right\|^{2}\right]}{1+\left\|\hat{\Psi}_{V2}\right\|^{2}} +\frac{6g^2_{M2}\beta_5}{\gamma}L^2_{\phi u2}\|W_{u2}\|^2
\end{aligned}\right] \|\tilde{m}_1\|^2\|\varepsilon_{FPK}\|^2 +\frac{6g^2_{M2}\beta_5}{\gamma}\|\varepsilon_{u2}\|^2  \nonumber\\

&-\frac{\alpha_{m} \beta_7}{2} \frac{\left\|\hat{\Psi}_{m2}\right\|^{2}}{1+\left\|\hat{\Psi}_{m2}\right\|^{2}}\left\|\tilde{W}_{m2}\right\|^{2}-\frac{\alpha_{u} \beta_{8}}{4} \frac{\left\|\hat{\phi}_{u2}\right\|^{2}}{1+\left\|\hat{\phi}_{u2}\right\|^{2}}\left\|\tilde{W}_{u2}\right\|^{2} +\beta_{8} \varepsilon_{Nu2} +\beta_7 \varepsilon_{N F P K2}+\beta_6 \varepsilon_{VHJI2}    \nonumber\\

&+3\left[\alpha_{m} \frac{\beta_{7} L_{\Psi_{m2}}\left\|W_{m2}\right\|^{2}}{1+\left\|\hat{\Psi}_{m2}\right\|^{2}}  +\alpha_u \beta_{8} \frac{\left\|R_2^{-1} g_2^{T}\left(x_2\right)\right\|^{2}}{1+\left\|\hat{\phi}_{u2}\right\|^{2}}\right]  \|\tilde{W}_{V2}(t)\|^2\|\hat{\phi}_{V2}\|^2\nonumber\\

&+3\left[\alpha_{m} \frac{\beta_{7} L_{\Psi_{m2}}\left\|W_{m2}\right\|^{2}}{1+\left\|\hat{\Psi}_{m2}\right\|^{2}}  +\alpha_u \beta_{8} \frac{\left\|R_2^{-1} g_2^{T}\left(x_2\right)\right\|^{2}}{1+\left\|\hat{\phi}_{u2}\right\|^{2}}\right]  \|\varepsilon_{HJI2}\|^2 
\end{align}

Combine the terms in \eqref{eq54} yields:
\begin{align}\label{eq55}
     &\cdot{L}_{sys}(t)\leq -\frac{\gamma_1}{2}\beta_1\|x_1\|^2 -\left[\frac{\alpha_{u} \beta_{4}}{4} \frac{\left\|\hat{\phi}_{u1}\right\|^{2}}{1+\left\|\hat{\phi}_{u1}\right\|^{2}} -\frac{6g^2_{M1}\beta_1}{\gamma_1} \|\hat{\phi}_{u1}\|^2 \right] \|\tilde{W}_u(t)\|^2 \nonumber\\
     
     &-\left[\frac{\alpha_{h} \beta_{2}}{4}  \frac{\left\|\hat{\Psi}_{V1}\right\|^{2}}{1+\left\|\hat{\Psi}_{V1}\right\|^{2}} -3\left[\alpha_{m} \frac{\beta_{3} L_{\Psi_{m1}}\left\|W_{m1}\right\|^{2}}{1+\left\|\hat{\Psi}_{m1}\right\|^{2}}  +\alpha_u \beta_{4} \frac{\left\|R_1^{-1} g_1^{T}\left(x_1\right)\right\|^{2}}{1+\left\|\hat{\phi}_{u1}\right\|^{2}}\right]\|\hat{\phi}_{V1}\|^2  \right]\| \tilde{W}_{V1}(t)\|^2 \nonumber\\
     
     &-\left[\frac{\alpha_{m} \beta_{3}}{2} \frac{\left\|\hat{\Psi}_{m1}\right\|^{2}}{1+\left\|\hat{\Psi}_{m1}\right\|^{2}}   -2\left[\begin{aligned}
 &3\left[\alpha_{m} \frac{\beta_{3} L_{\Psi_{m1}}\left\|W_{m1}\right\|^{2}}{1+\left\|\hat{\Psi}_{m1}\right\|^{2}}  +\alpha_u \beta_{4} \frac{\left\|R_1^{-1} g_1^{T}\left(x_1\right)\right\|^{2}}{1+\left\|\hat{\phi}_{u1}\right\|^{2}}\right]  L^2_{\phi v1}\|W_{V1}\|^2 \nonumber\\
&+\alpha_{h} \frac{\beta_{2}\left[L_\Phi+L_{\Psi V1}\left\|W_{V1}\right\|^{2}\right]}{1+\left\|\hat{\Psi}_{V1}\right\|^{2}} +\frac{6g^2_{M1}\beta_1}{\gamma_1}L^2_{\phi u}\|W_{u1}\|^2  \nonumber
\end{aligned}\right] \|\tilde{m}_2\|^2\|\phi_{m1}\|^2 \right]\left\|\tilde{W}_{m1}\right\|^{2} \nonumber\\

&+2\left[\begin{aligned}
 &3\left[\alpha_{m} \frac{\beta_{3} L_{\Psi_{m1}}\left\|W_{m1}\right\|^{2}}{1+\left\|\hat{\Psi}_{m1}\right\|^{2}}  +\alpha_u \beta_{4} \frac{\left\|R_1^{-1} g_1^{T}\left(x_1\right)\right\|^{2}}{1+\left\|\hat{\phi}_{u1}\right\|^{2}}\right]  L^2_{\phi v1}\|W_{V1}\|^2 \nonumber\\
&+\alpha_{h} \frac{\beta_{2}\left[L_\Phi+L_{\Psi V1}\left\|W_{V1}\right\|^{2}\right]}{1+\left\|\hat{\Psi}_{V1}\right\|^{2}}  +\frac{6g^2_{M1}\beta_1}{\gamma_1}L^2_{\phi u}\|W_u\|^2  \nonumber
\end{aligned}\right] \|\tilde{m}_2\|^2\|\varepsilon_{FPK}\|^2   +\frac{6g^2_{M1}\beta_1}{\gamma_1}\|\varepsilon_{u1}\|^2\nonumber \\

&+\left[\alpha_{m} \frac{\beta_{3} L_{\Psi_{m1}}\left\|W_{m1}\right\|^{2}}{1+\left\|\hat{\Psi}_{m1}\right\|^{2}}  +\alpha_u \beta_{4} \frac{\left\|R_1^{-1} g_1^{T}\left(x_1\right)\right\|^{2}}{1+\left\|\hat{\phi}_{u1}\right\|^{2}}\right]  \|\varepsilon_{HJI1}\|^2 +\beta_{4}\varepsilon_{Nu1} +\beta_{3} \varepsilon_{N F P K1}+\beta_{2} \varepsilon_{VHJI1}  \nonumber\\

&-\frac{\gamma_2}{2}\beta_5\|x_2\|^2-\left[\frac{\alpha_{u} \beta_{8}}{4} \frac{\left\|\hat{\phi}_{u2}\right\|^{2}}{1+\left\|\hat{\phi}_{u2}\right\|^{2}} -\frac{6g^2_{M2}\beta_5}{\gamma_2} \|\hat{\phi}_{u2}\|^2 \right] \|\tilde{W}_{u2}(t)\|^2 \nonumber\\
     
     &-\left[\frac{\alpha_{h} \beta_{6}}{4}  \frac{\left\|\hat{\Psi}_{V2}\right\|^{2}}{1+\left\|\hat{\Psi}_{V2}\right\|^{2}} -3\left[\alpha_{m} \frac{\beta_{7} L_{\Psi_{m2}}\left\|W_{m2}\right\|^{2}}{1+\left\|\hat{\Psi}_{m2}\right\|^{2}}  +\alpha_u \beta_{8} \frac{\left\|R_2^{-1} g_2^{T}\left(x_2\right)\right\|^{2}}{1+\left\|\hat{\phi}_{u2}\right\|^{2}}\right]\|\hat{\phi}_{V2}\|^2  \right]\| \tilde{W}_{V2}(t)\|^2 \nonumber\\
     
     &-\left[\frac{\alpha_{m} \beta_{7}}{2} \frac{\left\|\hat{\Psi}_{m2}\right\|^{2}}{1+\left\|\hat{\Psi}_{m2}\right\|^{2}}   -2\left[\begin{aligned}
 &3\left[\alpha_{m} \frac{\beta_{7} L_{\Psi_{m2}}\left\|W_{m2}\right\|^{2}}{1+\left\|\hat{\Psi}_{m2}\right\|^{2}}  +\alpha_u \beta_{8} \frac{\left\|R_2^{-1} g_2^{T}\left(x_2\right)\right\|^{2}}{1+\left\|\hat{\phi}_{u2}\right\|^{2}}\right]  L^2_{\phi v2}\|W_{V2}\| \nonumber\\
&+\alpha_{h} \frac{\beta_{6}\left[L_\Phi+L_{\Psi V2}\left\|W_{V2}\right\|^{2}\right]}{1+\left\|\hat{\Psi}_{V2}\right\|^{2}} +\frac{6g^2_{M2}\beta_5}{\gamma_2}L^2_{\phi u2}\|W_{u2}\|^2  \nonumber
\end{aligned}\right]  \|\tilde{m}_1\|^2\|\phi_{m2}\|^2 \right]\left\|\tilde{W}_{m2}\right\|^{2} \nonumber\\

&+2\left[\begin{aligned}
 &3\left[\alpha_{m} \frac{\beta_{7} L_{\Psi_{m2}}\left\|W_{m2}\right\|^{2}}{1+\left\|\hat{\Psi}_{m2}\right\|^{2}}  +\alpha_u \beta_{8} \frac{\left\|R_2^{-1} g_2^{T}\left(x_2\right)\right\|^{2}}{1+\left\|\hat{\phi}_{u2}\right\|^{2}}\right]  L^2_{\phi v2}\|W_{V2}\| \nonumber\\
&+\alpha_{h} \frac{\beta_{6}\left[L_\Phi+L_{\Psi V2}\left\|W_{V2}\right\|^{2}\right]}{1+\left\|\hat{\Psi}_{V2}\right\|^{2}}  +\frac{6g^2_{M2}\beta_5}{\gamma_2}L^2_{\phi u2}\|W_{u2}\|^2  \nonumber
\end{aligned}\right]  \|\tilde{m}_1\|^2\|\varepsilon_{FPK2}\|^2   +\frac{6g^2_{M2}\beta_5}{\gamma_2}\|\varepsilon_{u2}\|^2\nonumber \\

&+\left[\alpha_{m} \frac{\beta_{7} L_{\Psi_{m2}}\left\|W_{m2}\right\|^{2}}{1+\left\|\hat{\Psi}_{m2}\right\|^{2}}  +\alpha_u \beta_{8} \frac{\left\|R_2^{-1} g_2^{T}\left(x_2\right)\right\|^{2}}{1+\left\|\hat{\phi}_{u2}\right\|^{2}}\right]  \|\varepsilon_{HJI2}\|^2 +\beta_{8}\varepsilon_{Nu2} +\beta_{7} \varepsilon_{N F P K2}+\beta_{6} \varepsilon_{VHJI2} \nonumber\\\nonumber\\

&\leq -\frac{\gamma_1\beta_1}{2} -\frac{\gamma_2\beta_5}{2} -\kappa_{u1}\|\tilde{W}_{u1}\|^2 -\kappa_{m1}\|\tilde{W}_{m1}\|^2 -\kappa_{V1}\|\tilde{W}_{V1}\|^2 -\kappa_{u2}\|\tilde{W}_{u2}\|^2 -\kappa_{m2}\|\tilde{W}_{m2}\|^2 -\kappa_{V2}\|\tilde{W}_{V2}\|^2 \nonumber\\

&+\varepsilon_{CLS1} +\varepsilon_{CLS2}
\end{align} 
with $\kappa$ and $\varepsilon$ parameters defined as 
\begin{align*}
    &\kappa_{u1}=\left[\frac{\alpha_{u} \beta_{4}}{4} \frac{\left\|\hat{\phi}_{u1}\right\|^{2}}{1+\left\|\hat{\phi}_{u1}\right\|^{2}} -\frac{6g^2_{M1}\beta_1}{\gamma_1} \|\hat{\phi}_{u1}\|^2 \right] \\
    
    &\kappa_{m1}=\left[\frac{\alpha_{m} \beta_{3}}{2} \frac{\left\|\hat{\Psi}_{m1}\right\|^{2}}{1+\left\|\hat{\Psi}_{m1}\right\|^{2}}   -2\left[\begin{aligned}
 &3\left[\alpha_{m} \frac{\beta_{3} L_{\Psi_{m1}}\left\|W_{m1}\right\|^{2}}{1+\left\|\hat{\Psi}_{m1}\right\|^{2}}  +\alpha_u \beta_{4} \frac{\left\|R_1^{-1} g_1^{T}\left(x_1\right)\right\|^{2}}{1+\left\|\hat{\phi}_{u1}\right\|^{2}}\right]  L^2_{\phi v1}\|W_{V1}\|^2 \nonumber\\
&+\alpha_{h} \frac{\beta_{2}\left[L_\Phi+L_{\Psi V1}\left\|W_{V1}\right\|^{2}\right]}{1+\left\|\hat{\Psi}_{V1}\right\|^{2}} +\frac{6g^2_{M1}\beta_1}{\gamma_1}L^2_{\phi u}\|W_{u1}\|^2  \nonumber
\end{aligned}\right] \|\tilde{m}_2\|^2\|\phi_{m1}\|^2 \right] \\

&\kappa_{V1}=\left[\frac{\alpha_{h} \beta_{2}}{4}  \frac{\left\|\hat{\Psi}_{V1}\right\|^{2}}{1+\left\|\hat{\Psi}_{V1}\right\|^{2}} -3\left[\alpha_{m} \frac{\beta_{3} L_{\Psi_{m1}}\left\|W_{m1}\right\|^{2}}{1+\left\|\hat{\Psi}_{m1}\right\|^{2}}  +\alpha_u \beta_{4} \frac{\left\|R_1^{-1} g_1^{T}\left(x_1\right)\right\|^{2}}{1+\left\|\hat{\phi}_{u1}\right\|^{2}}\right]\|\hat{\phi}_{V1}\|^2  \right] \\

&\kappa_{u2}=\left[\frac{\alpha_{u} \beta_{8}}{4} \frac{\left\|\hat{\phi}_{u2}\right\|^{2}}{1+\left\|\hat{\phi}_{u2}\right\|^{2}} -\frac{6g^2_{M2}\beta_5}{\gamma_2} \|\hat{\phi}_{u2}\|^2 \right] \\

&\kappa_{m2}=\left[\frac{\alpha_{m} \beta_{7}}{2} \frac{\left\|\hat{\Psi}_{m2}\right\|^{2}}{1+\left\|\hat{\Psi}_{m2}\right\|^{2}}   -2\left[\begin{aligned}
 &3\left[\alpha_{m} \frac{\beta_{7} L_{\Psi_{m2}}\left\|W_{m2}\right\|^{2}}{1+\left\|\hat{\Psi}_{m2}\right\|^{2}}  +\alpha_u \beta_{8} \frac{\left\|R_2^{-1} g_2^{T}\left(x_2\right)\right\|^{2}}{1+\left\|\hat{\phi}_{u2}\right\|^{2}}\right]  L^2_{\phi v2}\|W_{V2}\| \nonumber\\
 
&+\alpha_{h} \frac{\beta_{6}\left[L_\Phi+L_{\Psi V2}\left\|W_{V2}\right\|^{2}\right]}{1+\left\|\hat{\Psi}_{V2}\right\|^{2}} +\frac{6g^2_{M2}\beta_5}{\gamma_2}L^2_{\phi u2}\|W_{u2}\|^2  \nonumber
\end{aligned}\right]  \|\tilde{m}_1\|^2\|\phi_{m2}\|^2 \right] \\

&\kappa_{V2}=\left[\frac{\alpha_{h} \beta_{6}}{4}  \frac{\left\|\hat{\Psi}_{V2}\right\|^{2}}{1+\left\|\hat{\Psi}_{V2}\right\|^{2}} -3\left[\alpha_{m} \frac{\beta_{7} L_{\Psi_{m2}}\left\|W_{m2}\right\|^{2}}{1+\left\|\hat{\Psi}_{m2}\right\|^{2}}  +\alpha_u \beta_{8} \frac{\left\|R_2^{-1} g_2^{T}\left(x_2\right)\right\|^{2}}{1+\left\|\hat{\phi}_{u2}\right\|^{2}}\right]\|\hat{\phi}_{V2}\|^2  \right] \\\\

&\varepsilon_{CLS1}=2\left[\begin{aligned}
 &3\left[\alpha_{m} \frac{\beta_{3} L_{\Psi_{m1}}\left\|W_{m1}\right\|^{2}}{1+\left\|\hat{\Psi}_{m1}\right\|^{2}}  +\alpha_u \beta_{4} \frac{\left\|R_1^{-1} g_1^{T}\left(x_1\right)\right\|^{2}}{1+\left\|\hat{\phi}_{u1}\right\|^{2}}\right]  L^2_{\phi v1}\|W_{V1}\| \nonumber\\
&+\alpha_{h} \frac{\beta_{2}\left[L_\Phi+L_{\Psi V1}\left\|W_{V1}\right\|^{2}\right]}{1+\left\|\hat{\Psi}_{V1}\right\|^{2}}  +\frac{6g^2_{M1}\beta_1}{\gamma_1}L^2_{\phi u}\|W_u\|^2  \nonumber
\end{aligned}\right] \|\tilde{m}_2\|^2\|\varepsilon_{FPK}\|^2   +\frac{6g^2_{M1}\beta_1}{\gamma_1}\|\varepsilon_{u1}\|^2\nonumber \\

&+\left[\alpha_{m} \frac{\beta_{3} L_{\Psi_{m1}}\left\|W_{m1}\right\|^{2}}{1+\left\|\hat{\Psi}_{m1}\right\|^{2}}  +\alpha_u \beta_{4} \frac{\left\|R_1^{-1} g_1^{T}\left(x_1\right)\right\|^{2}}{1+\left\|\hat{\phi}_{u1}\right\|^{2}}\right]  \|\varepsilon_{HJI1}\|^2 +\beta_{4}\varepsilon_{Nu1} +\beta_{3} \varepsilon_{N F P K1}+\beta_{2} \varepsilon_{VHJI1}  \nonumber\\\\

&\varepsilon_{CLS2}=2\left[\begin{aligned}
 &3\left[\alpha_{m} \frac{\beta_{7} L_{\Psi_{m2}}\left\|W_{m2}\right\|^{2}}{1+\left\|\hat{\Psi}_{m2}\right\|^{2}}  +\alpha_u \beta_{8} \frac{\left\|R_2^{-1} g_2^{T}\left(x_2\right)\right\|^{2}}{1+\left\|\hat{\phi}_{u2}\right\|^{2}}\right]  L^2_{\phi v2}\|W_{V2}\| \nonumber\\
&+\alpha_{h} \frac{\beta_{6}\left[L_\Phi+L_{\Psi V2}\left\|W_{V2}\right\|^{2}\right]}{1+\left\|\hat{\Psi}_{V2}\right\|^{2}}  +\frac{6g^2_{M2}\beta_5}{\gamma_2}L^2_{\phi u2}\|W_{u2}\|^2  \nonumber
\end{aligned}\right]  \|\tilde{m}_1\|^2\|\varepsilon_{FPK2}\|^2   +\frac{6g^2_{M2}\beta_5}{\gamma_2}\|\varepsilon_{u2}\|^2\nonumber \\

&+\left[\alpha_{m} \frac{\beta_{7} L_{\Psi_{m2}}\left\|W_{m2}\right\|^{2}}{1+\left\|\hat{\Psi}_{m2}\right\|^{2}}  +\alpha_u \beta_{8} \frac{\left\|R_2^{-1} g_2^{T}\left(x_2\right)\right\|^{2}}{1+\left\|\hat{\phi}_{u2}\right\|^{2}}\right]  \|\varepsilon_{HJI2}\|^2 +\beta_{8}\varepsilon_{Nu2} +\beta_{7} \varepsilon_{N F P K2}+\beta_{6} \varepsilon_{VHJI2}
\end{align*}

Note that the coefficient functions $\kappa_{u1}$, $\kappa_{m1}$, $\kappa_{V1}$, $\kappa_{u2}$, $\kappa_{m2}$, and $\kappa_{V2}$ are all positive definite, and the terms $\varepsilon_{CLS1}$ and $\varepsilon_{CLS2}$ go to zero if the reconstruction errors $\varepsilon_{HJI1}$, $\varepsilon_{FPK1}$, $\varepsilon_{u1}$, $\varepsilon_{HJI2}$, $\varepsilon_{FPK2}$, $\varepsilon_{u2}$ go to zero. The meaning of reconstruction error goes to zero means that the neural network structure and activation functions are perfectly selected. In that case, the first derivative of the Lyapunov function is negative definite which means the closed loop system is asymptotically stable. In the case where the reconstruction error is not zero, the closed loop system is Uniformly Ultimately Bounded (UUB).
\end{proof}

\bibliographystyle{IEEEtran}
\bibliography{reference.bib}

\end{document}